\documentclass[11pt]{article}
\usepackage{siunitx}
\usepackage{amsmath,amssymb,amsthm,bm,mathtools}
\usepackage{graphicx}
\usepackage{booktabs}
\usepackage{enumitem}
\usepackage{geometry}
\usepackage{hyperref}
\usepackage{natbib}
\usepackage{authblk}   
\usepackage{rotating}
\usepackage{algorithm} 
\usepackage{algorithmic}
\hypersetup{colorlinks=true,linkcolor=blue,citecolor=blue,urlcolor=blue}

\newtheorem{assumption}{Assumption}
\newtheorem{theorem}{Theorem}
\newtheorem{lemma}{Lemma}
\newtheorem{proposition}{Proposition}
\theoremstyle{definition}

\newcommand{\E}{\mathbb{E}}

\newcommand{\1}{\mathbb{I}}
\newcommand{\cT}{\mathcal{T}}
\newcommand{\cI}{\mathcal{I}}
\newcommand{\Cov}{\text{Cov}}
\newcommand{\cH}{\mathcal{H}}

\usepackage[normalem]{ulem}
\usepackage{xcolor}
\newcommand{\red}[1]{\textcolor{red}{#1}}
\def\red#1{#1}

\title{\textbf{Semiparametric Inference for Causal Effects on Functional Outcomes}}
\author[1]{Junzhu Nie}
\author[1]{Chengxiu Ling}
\author[1]{Mengfei Ran\footnote{Corresponding author. \url{mengfei.ran@xjtlu.edu.cn}}}
\affil[1]{Wisdom Lake Academy of Pharmacy, Xi'an Jiaotong-Liverpool University}
\date{}

\begin{document}
\maketitle

\begin{abstract}
Difference-in-differences (DiD) is a cornerstone of causal inference, yet extending it to functional outcomes is not a routine scalar generalization; rather, it entails three fundamental challenges in identification, inference, and observation. This paper develops a comprehensive semiparametric inference framework for functional DiD with discretely observed data. First, we define the functional average treatment effect under parallel trends and derive its efficient influence function (EIF), thereby establishing the semiparametric efficiency bound. Second, leveraging Neyman orthogonality and cross-fitting, we construct a debiased estimator that effectively mitigates regularization bias arising from nonparametric reconstruction. Third, we establish weak convergence of the estimator and propose an asymptotically valid uniform confidence band, enabling a rigorous transition from pointwise to curve-level inference. Finally, we demonstrate that reconstruction error under discrete sampling is asymptotically negligible for semiparametric inference, ensuring practical feasibility. Simulations and empirical applications confirm that the proposed method achieves superior coverage and testing power in finite samples, providing a theoretically grounded and computationally tractable foundation for causal evaluation with functional data.
\end{abstract}

\vspace{0.1in}
\noindent \textbf{Keywords:} functional data analysis; causal inference; difference-in-differences; semiparametric efficiency.

\section{Introduction}\label{sec:intro}

Causal inference is a foundational tool for quantifying how an intervention changes an outcome. Traditional causal frameworks, including the potential outcomes model \citep{rubin1974} and propensity score theory \citep{rosenbaum1983}, were largely developed for scalar outcomes and are not directly equipped to handle the full complexity of functional data, see also \cite{imbens2015causal,hernan2020causal} for modern approaches. Yet, technological advances in imaging, environmental sensing, and digital health increasingly generate outcomes recorded as functions over a continuum: circadian activity curves from wearables, within-day pollution profiles, longitudinal biomarker trajectories, and tract-indexed neuroimaging measurements \citep{wang2016functional}. These functional outcomes preserve rich temporal or spatial structure and naturally live in infinite-dimensional spaces, making functional data analysis (FDA) a natural statistical framework for their representation, modeling, and inference \citep{ramsay2005fda,hsing2015theoretical}.

When the outcome is functional, causal inference shifts from estimating a scalar treatment effect to characterizing how an intervention reshapes an entire trajectory. Common strategies either reduce each curve to scalar summaries or conduct pointwise analyses over a dense grid, yet both risk obscuring time-varying effects and complicating inference due to multiple comparisons \citep{degras2017simultaneous}. While FDA has developed simultaneous confidence bands and global curve-level procedures \citep{degras2011scb,choi2018geometric,pini2016itp,pini2017iwt,liebl2023ffscb}, these methods remain primarily descriptive or associational and do not address causal identification or confounding in observational studies. A critical gap therefore persists in methods that integrate functional outcomes with rigorous causal inference for observational settings.

Difference-in-differences (DiD) offers a natural causal framework for before--after observational studies with treatment and control groups. For scalar outcomes, existing studies have established identification under parallel trends, covariate adjustment via semiparametric reweighting \citep{abadie2005semiparametric}, doubly robust estimation \citep{santanna2020drdid}, and extensive results for multi-period or staggered-adoption settings \citep{borusyak2024revisiting,fangliebl2025honest}. Extending DiD from scalar outcomes to functional outcomes, however, is not a routine substitution of a curve for a number. First, identification becomes functional, requiring the parallel-trends assumption to hold at the curve level, while extending covariate-adjusted DiD ideas from scalar outcomes to functional outcomes \citep{abadie2005semiparametric}. Second, inference becomes functional, demanding weak convergence in a function space together with simultaneous uncertainty quantification over the entire domain \citep{vandervaartwellner1996weak,chernozhukov2014gaussian}. Third, measurement becomes functional: curves are often observed only on noisy and possibly sparse grids, necessitating controlled reconstruction error \citep{yao2005fpca}. Recent work has addressed isolated components: FPCA-based reconstruction for sparse functional data \citep{yao2005fpca} and functional treatment-effect estimation with simultaneous bands \citep{ecker2024functional,sparkes2024tract}. Nevertheless, no unified framework currently exists for the functional average treatment effect on the treated (fATT) in a DiD design that simultaneously integrates covariate-adjusted identification, doubly robust estimation, semiparametric efficiency, simultaneous confidence bands, and reconstruction from noisy functional measurements.

Semiparametric theory provides a coherent route to address these challenges. Covariate adjustment and flexible nuisance estimation motivate influence-function-based doubly robust estimation, semiparametric efficiency theory, and Neyman-orthogonal scores with cross-fitting \citep{bickel1993efficient,tsiatis2006semiparametric,robins1994estimation,chernozhukov2018dml}. For curve-level inference, empirical process theory and Gaussian approximation theory for suprema provide justification for multiplier/bootstrap-based simultaneous confidence bands \citep{vandervaartwellner1996weak,chernozhukov2014gaussian}. In the functional DiD setting, these tools must be reformulated for curve-valued causal estimands and combined with conditions under which discretization and smoothing errors from noisy grid observations are asymptotically negligible. This reformulation—integrating doubly robust functional DiD identification with simultaneous inference and sparse measurement—constitutes the central methodological contribution of this work.

This paper develops a semiparametric inference framework for causal effects on functional outcomes in a two-period observational DiD design. We consider independent and identically distributed (i.i.d.) units with baseline covariates $X$, group indicator $D\in\{0,1\}$, and functional outcomes observed before and after treatment, $Y_0(\cdot)$ and $Y_1(\cdot)$. Under a covariate-adjusted functional parallel trends condition and overlap, the target estimand is the functional average treatment effect on the treated (fATT) curve $\tau_0(\cdot)$. Our contributions are fourfold. First, we formulate and identify the fATT as a curve-valued causal estimand in a covariate-adjusted two-period DiD design. Second, we derive the efficient influence function for $\tau_0(\cdot)$ and construct an associated Neyman-orthogonal, doubly robust score. Third, we propose a cross-fitted estimator and establish weak convergence in $L^2(\mathcal{T})$, enabling pointwise confidence intervals and multiplier-bootstrap simultaneous confidence bands. Fourth, we incorporate reconstruction from noisy grid observations and provide conditions under which discretization and smoothing errors are asymptotically negligible for both estimation and uniform inference.

The remainder of the paper is organized as follows. Section~\ref{sec:methodology} introduces the functional potential-outcomes framework and the fATT estimand in a two-period DiD design. It then presents identification under covariate-adjusted functional parallel trends, derives the efficient influence function, and constructs the cross-fitted doubly robust estimator. Section~\ref{sec:asymptotic} establishes asymptotic linearity and weak convergence in a Hilbert space, with additional conditions for process-level inference. Section~\ref{sec:inference} develops pointwise confidence intervals and multiplier-bootstrap simultaneous confidence bands. Section~\ref{sec:simulation} reports simulation studies under dense and sparse functional observation designs. Section~\ref{sec:application} applies the method to London ULEZ and hourly NO$_2$ profiles. Section~\ref{sec:discussion} concludes with a discussion of limitations and possible extensions.

\section{Methodology}\label{sec:methodology}
This section develops a semiparametric framework for causal inference with functional outcomes in a two-period difference-in-differences (DiD) design. We formalize the functional potential-outcomes setup and target estimand, establish identification under covariate-adjusted functional parallel trends, derive the efficient influence function (EIF) and an orthogonal doubly robust score in a Hilbert-space formulation, and propose a cross-fitted estimator compatible with flexible nuisance learning. We also discuss reconstruction when curves are discretely observed with noise.

\subsection{Setup and Notation}\label{sec:notation}
Let $\mathcal{T}\subset\mathbb{R}$ be a compact interval (e.g., time-of-day) and let $\mathcal{H}=L^2(\mathcal{T})$ denote the separable Hilbert space of square-integrable real-valued functions on $\mathcal{T}$ with inner product $\langle f,g\rangle=\int_{\mathcal{T}} f(t)g(t)\,dt$ and norm $\|f\|_{\mathcal{H}}=\langle f,f\rangle^{1/2}$.

We observe $n$ i.i.d.\ units, indexed by $i=1,\ldots,n$. Each unit is observed in two calendar periods $T\in\{0,1\}$ (pre and post). Let $D_i\in\{0,1\}$ indicate membership in the treated group; treatment is implemented only for units with $D_i=1$ in the post period $T=1$. For each unit, we observe baseline covariates $X_i\in\mathbb{R}^p$ and a functional outcome curve $Y_{iT}(\cdot)\in\mathcal{H}$ in each period $T\in\{0,1\}$. The pre--post functional change is
\[
\Delta Y_i(\cdot)=Y_{i1}(\cdot)-Y_{i0}(\cdot)\in\mathcal{H}.
\]
For compactness, we collect the observed variables as $W_i=(\Delta Y_i,D_i,X_i)$ and denote by $\mathbb{P}$ the law of $W$.

Following the potential outcomes framework \citep{rubin1974,imbens2015causal}, let $Y^d_{iT}(\cdot)\in\mathcal{H}$ denote the potential functional outcome at period $T$ under treatment status $d\in\{0,1\}$. We impose standard consistency and no-interference conditions, the stable unit treatment value assumption (SUTVA). Because treatment is not implemented before the post period, the two potential pre-treatment outcomes coincide almost surely for each unit, namely
\[
Y^1_{i0}(\cdot)=Y^0_{i0}(\cdot)\quad \text{a.s.}
\]
Thus, observed outcomes satisfy
\begin{align*}
Y_{i0}(\cdot)=&Y^0_{i0}(\cdot),\\
Y_{i1}(\cdot)=&D_iY^1_{i1}(\cdot)+(1-D_i)Y^0_{i1}(\cdot).    
\end{align*}
Define the potential changes $\Delta Y_i^d(\cdot)=Y^d_{i1}(\cdot)-Y^d_{i0}(\cdot)$ for $d\in\{0,1\}$.

\subsection{Target Estimand}\label{sec:estimand}
Our parameter of interest is the functional average treatment effect on the treated (fATT) curve,
\begin{equation}\label{eq:fatt_def}
\tau_0(t)=\mathbb{E}\!\left\{Y^1_{i1}(t)-Y^0_{i1}(t)\mid D_i=1\right\},\qquad t\in\mathcal{T}.
\end{equation}
Equivalently, by no anticipation,
$Y^1_{i1}(t)-Y^0_{i1}(t)=\Delta Y_i^1(t)-\Delta Y_i^0(t)$ and hence
\[
\tau_0(t)=\mathbb{E}\!\left\{\Delta Y_i^1(t)-\Delta Y_i^0(t)\mid D_i=1\right\}.
\]
The object $\tau_0(\cdot)\in\mathcal{H}$ summarizes the entire effect profile across $\mathcal{T}$ and naturally motivates simultaneous (curve-level) inference in later sections.

\subsection{Identification}\label{sec:identification}
Identification of \eqref{eq:fatt_def} reduces to identifying the counterfactual mean post-period outcome for treated units under no treatment, or equivalently the untreated mean change for the treated group $\mathbb{E}\{\Delta Y_i^0(t)\mid D_i=1\}$. To do so, for $a\in\{0,1\}$, we introduce nuisance functions
\begin{align*}
\pi_0(x)=&\mathbb{P}(D=1\mid X=x),\\
\mu_a(x)(t)=&\mathbb{E}\!\left\{\Delta Y(t)\mid X=x,D=a\right\}.
\end{align*}

We impose a functional parallel trends condition conditional on $X$: for every $t\in\mathcal{T}$,
\begin{equation}\label{eq:fcpt}
\mathbb{E}\!\left\{\Delta Y_i^0(t)\mid X_i,D_i=1\right\}
=
\mathbb{E}\!\left\{\Delta Y_i^0(t)\mid X_i,D_i=0\right\}.
\end{equation}
This assumption asserts that, after conditioning on baseline covariates, the untreated evolution of the outcome curve has the same conditional mean for treated and control units.
We also assume overlap (positivity): there exists $\underline{c}\in(0,1/2)$ such that
$\underline{c}\le \pi_0(X)\le 1-\underline{c}$ a.s. This ensures that treated units have comparable controls at each covariate value.

Under \eqref{eq:fcpt} and no anticipation, for any $t\in\mathcal{T}$,
\begin{align*}
\mathbb{E}\{\Delta Y_i^0(t)\mid D_i=1\}
&=\mathbb{E}\!\left[\mathbb{E}\{\Delta Y_i^0(t)\mid X_i,D_i=1\}\mid D_i=1\right] \\
&=\mathbb{E}\!\left[\mathbb{E}\{\Delta Y_i^0(t)\mid X_i,D_i=0\}\mid D_i=1\right]
=\mathbb{E}\{\mu_0(X_i)(t)\mid D_i=1\}.
\end{align*}
Moreover, $\mathbb{E}\{\Delta Y_i(t)\mid X_i,D_i=1\}=\mu_1(X_i)(t)=\mathbb{E}\{\Delta Y_i^1(t)\mid X_i,D_i=1\}$ by consistency. Combining these yields
\begin{equation}\label{eq:ident_reg}
\tau_0(t)=
\mathbb{E}\!\left\{\mu_1(X_i)(t)-\mu_0(X_i)(t)\mid D_i=1\right\},\qquad t\in\mathcal{T}.
\end{equation}

Let $p_0=\mathbb{P}(D=1)$. Using Bayes' rule, the treated covariate distribution can be represented by a reweighted control distribution: for any integrable $g(X)$,
\[
\mathbb{E}\{g(X)\mid D=1\}
=\mathbb{E}\!\left[\frac{(1-D)\pi_0(X)}{p_0\{1-\pi_0(X)\}}\,g(X)\right]
=\mathbb{E}\!\left[\frac{D}{p_0}\,g(X)\right].
\]
Applying this identity to $g(X)=\mathbb{E}\{\Delta Y(t)\mid X,D=a\}$ yields the functional analogue of Abadie's semiparametric DiD representation \citep{abadie2005semiparametric}:
\begin{equation}\label{eq:ident_ipw}
\tau_0(t)
=
\mathbb{E}\!\left\{\frac{D}{p_0}\Delta Y(t)\right\}
-
\mathbb{E}\!\left\{\frac{(1-D)\pi_0(X)}{p_0\{1-\pi_0(X)\}}\Delta Y(t)\right\},\qquad t\in\mathcal{T}.
\end{equation}
Representations \eqref{eq:ident_reg}--\eqref{eq:ident_ipw} motivate outcome-regression and inverse-weighting estimators; we next combine them through an orthogonal doubly robust construction.

\subsection{Efficient Influence Function}\label{sec:eif}
We view $\tau_0(\cdot)$ as a parameter in the Hilbert space $\mathcal{H}$ and derive an EIF under the nonparametric model for $W=(\Delta Y,D,X)$. Let $\eta_0=(\pi_0,\mu_0)$ collect the nuisance functions required by the proposed ATT score. The treated-regression function $\mu_1(x)(t)=\E\{\Delta Y(t)\mid X=x,D=1\}$ will appear only in an equivalent AIPW display. Define the centering constant $p_0=\mathbb{P}(D=1)$.
One algebraically equivalent display of the EIF for the fATT curve is
\begin{equation}\label{eq:eif_point}
\begin{aligned}
\phi(W;\eta_0)(t)
&=\frac{D}{p_0}\{\Delta Y(t)-\mu_1(X)(t)\} -\frac{(1-D)\pi_0(X)}{p_0\{1-\pi_0(X)\}}\{\Delta Y(t)-\mu_0(X)(t)\}\\
&+\frac{D}{p_0}\{\mu_1(X)(t)-\mu_0(X)(t)\}
 -\frac{D}{p_0}\tau_0(t).
\end{aligned}
\end{equation}
Equivalently, after cancellation of $\mu_1$, 
\[
\phi(W;\eta_0)(t)
=
\frac{D}{p_0}\{\Delta Y(t)-\mu_0(X)(t)-\tau_0(t)\}
-
\frac{(1-D)\pi_0(X)}{p_0\{1-\pi_0(X)\}}\{\Delta Y(t)-\mu_0(X)(t)\}.
\]
The factor $D/p_0$ multiplying $\tau_0$ is the ratio-parameter correction for the unknown treated-group probability $p_0$.
This expression extends the well-known scalar ATT EIF \citep{hahn1998role,tsiatis2006semiparametric} to an $\mathcal{H}$-valued parameter.
Two properties of \eqref{eq:eif_point} are central for estimation and inference.

\begin{proposition}[Pathwise derivative and efficiency]\label{prop:eif}
Let $\tau(P)$ denote the identified fATT functional defined by \eqref{eq:ident_ipw}. Under overlap and the stated moment conditions, for every regular parametric submodel $\{P_\epsilon:\epsilon\in(-\delta,\delta)\}$ through $P_0=P$ with score $S(W)$,
\[
\left.\frac{d}{d\epsilon}\tau(P_\epsilon)\right|_{\epsilon=0}
=
\E\{\phi(W;\eta_0)S(W)\}
\quad\text{in }\cH .
\]
Hence $\phi(W;\eta_0)$ is the canonical gradient, or efficient influence function, for $\tau_0(\cdot)$ in the nonparametric observed-data model.
\end{proposition}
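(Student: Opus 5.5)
We follow the standard pathwise-differentiation route for ratio functionals and compute the derivative directly in $\cH$. Rewrite \eqref{eq:ident_reg} as $\tau(P)=\Psi(P)/p(P)$ with
\[
\Psi(P)(t)=\E_P\!\left[D\{\mu_{1,P}(X)(t)-\mu_{0,P}(X)(t)\}\right]=\E_P\!\left[D\,\Delta Y(t)\right]-\E_P\!\left[\pi_P(X)\mu_{0,P}(X)(t)\right],\qquad p(P)=\E_P(D).
\]
The second expression uses $\E[D\mu_1(X)]=\E[D\Delta Y]$ and $\E[D\mu_0(X)]=\E[\pi_0(X)\mu_0(X)]$. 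Under a regular submodel $P_\epsilon$ with score $S(W)$, factor the density as $f(x)\,f(d\mid x)\,f(\Delta y\mid d,x)$. Then $S=S_X(X)+S_{D}(D\mid X)+S_{Y}(\Delta Y\mid D,X)$, where each component has conditional mean zero given the preceding variables. The proof consists of differentiating each piece.

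\textbf{Step 1 (denominator and first term).} $\frac{d}{d\epsilon}p(P_\epsilon)=\E\{(D-p_0)S\}$, and $\frac{d}{d\epsilon}\E_\epsilon[D\Delta Y(t)]=\E[\{D\Delta Y(t)-\E(D\Delta Y(t))\}S]$. Both follow from differentiating under the integral, which is justified by dominated convergence via the assumed moment bound $\E\|\Delta Y\|_\cH^2<\infty$ and the usual $L^2$-differentiability of $\sqrt{dP_\epsilon}$.

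\textbf{Step 2 (the term $\E[\pi(X)\mu_0(X)]$).} Write it as $\int \pi_\epsilon(x)\mu_{0,\epsilon}(x)(t)\,f_\epsilon(x)\,dx$ and apply the product rule, using three standard derivatives:
\begin{itemize}[label={}]
\item $\dot f/f=S_X$;
\item $\dot\pi(x)=\E[(D-\pi_0(X))S\mid X=x]$;
\item $\dot\mu_0(x)(t)=\E[(1-D)\{\Delta Y(t)-\mu_0(X)(t)\}S\mid X=x]/\{1-\pi_0(x)\}$.
\end{itemize}
Collecting terms by iterated expectations gives
\[
\E\!\Big[\Big\{\pi_0\mu_0(t)-\E[\pi_0\mu_0(t)]+(D-\pi_0)\mu_0(t)+\tfrac{(1-D)\pi_0}{1-\pi_0}(\Delta Y(t)-\mu_0(t))\Big\}S\Big].
\]
The first and third summands combine to $D\mu_0(X)(t)-\E[D\mu_0(X)(t)]$.

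\textbf{Step 3 (quotient rule).} Apply $\dot\tau=(\dot\Psi-\tau_0\dot p)/p_0$ and substitute. The constants drop out because $\E S=0$, which leaves exactly
\[
\E\!\left[\Big\{\tfrac{D}{p_0}\{\Delta Y(t)-\mu_0(X)(t)-\tau_0(t)\}-\tfrac{(1-D)\pi_0(X)}{p_0\{1-\pi_0(X)\}}\{\Delta Y(t)-\mu_0(X)(t)\}\Big\}S\right],
\]
which is the cancelled form of \eqref{eq:eif_point}. Since the model is nonparametric, the tangent space is all mean-zero $L^2$ scores. Moreover, $\E\phi=0$ and $\E\|\phi\|_\cH^2<\infty$ by overlap ($\pi_0\le1-\underline c$) and the second moments, so $\phi$ lies in the (closed) tangent space componentwise. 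It is therefore the unique, canonical gradient.

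\textbf{Main obstacle: the $\cH$-valued interpretation.} The pointwise-in-$t$ calculation must be upgraded to a statement in $\cH$, namely that $\epsilon^{-1}\{\tau(P_\epsilon)-\tau(P)\}-\E[\phi S]\to0$ in $\|\cdot\|_\cH$ and not merely for each $t$. The approach is to carry out Steps 1--3 with $\cH$-valued Bochner expectations throughout. The remainder is then bounded through Cauchy--Schwarz:
\[
\Big\|\int \Delta y\,\{dP_\epsilon-dP-\epsilon S\,dP\}\Big\|_\cH\le \Big(\int\|\Delta y\|_\cH^2(\sqrt{dP_\epsilon}+\sqrt{dP})^2\Big)^{1/2}\,\Big\|\sqrt{dP_\epsilon}-\sqrt{dP}-\tfrac{\epsilon}{2}S\sqrt{dP}\Big\|_{L^2}=o(\epsilon).
\]
This requires a local uniform bound on $\E_\epsilon\|\Delta Y\|_\cH^2$ along the submodel, which we take as part of the stated moment conditions (e.g.\ bounded scores). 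The conditional regressions $\mu_{0,\epsilon}$ are handled the same way, conditionally, with overlap keeping the denominators $1-\pi_\epsilon$ bounded away from zero.
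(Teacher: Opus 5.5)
Your proposal is correct, but it follows a different route from the paper. The paper scalarizes. For each $f\in\cH$ it sets $Z_f=\langle f,\Delta Y\rangle$ and writes $\langle f,\tau_0\rangle=\nu_{1,f}-\nu_{0,f}$, the treated mean minus the counterfactual treated mean. It then records the standard scalar influence functions of these two pieces, essentially quoted rather than derived, and subtracts them. It recognizes the result as $\langle f,\phi(W;\eta_0)\rangle$ and lifts back to $\cH$ because the identity holds for every $f$.

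You instead work directly with the $\cH$-valued ratio $\Psi/p$. You split the score along $X$, $D\mid X$, $\Delta Y\mid D,X$, derive $\dot f_X$, $\dot\pi$ and $\dot\mu_0$ explicitly, and combine them with the product and quotient rules. This is the computation the paper compresses into the single sentence about ``differentiating the treated-group ratio and the control conditional mean components''. Your Steps 1--3 check out term by term, including $\pi_0\mu_0+(D-\pi_0)\mu_0=D\mu_0$ and the removal of constants via $\E S=0$.

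The two routes buy different things. The paper's is shorter and reduces everything to the known scalar ATT result. Yours is self-contained and gives a stronger conclusion. The duality step only shows that each $\epsilon\mapsto\langle f,\tau(P_\epsilon)\rangle$ has derivative $\langle f,\E\{\phi S\}\rangle$, which is a weak derivative. That identifies the $\cH$-derivative only if the derivative is already known to exist in norm. Your Hellinger argument supplies norm differentiability, under the local bound on $\E_{P_\epsilon}\|\Delta Y\|_{\cH}^2$ that you state; the paper's scalar differentiation needs the analogous condition implicitly.

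One repair is needed there. The factorization behind your Cauchy--Schwarz display omits a cross term:
\[
dP_\epsilon-dP-\epsilon S\,dP=\Big(\sqrt{dP_\epsilon}-\sqrt{dP}-\tfrac{\epsilon}{2}S\sqrt{dP}\Big)\Big(\sqrt{dP_\epsilon}+\sqrt{dP}\Big)+\tfrac{\epsilon}{2}S\sqrt{dP}\Big(\sqrt{dP_\epsilon}-\sqrt{dP}\Big).
\]
The extra term is $O(\epsilon^2)$ in $\cH$-norm under bounded scores. In general it is $o(\epsilon)$: truncate $S$ and use your local second-moment bound. So nothing breaks.

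You leave the $\mu_{0,\epsilon}$ piece at ``handled the same way, conditionally''. It is simpler to differentiate the weighting form \eqref{eq:ident_ipw} instead. There the only nuisance moving with $\epsilon$ is the bounded scalar weight $\pi_\epsilon/(1-\pi_\epsilon)$, which is easier to control than an $\cH$-valued conditional mean.
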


Let $\Psi(\eta)(t)=\mathbb{E}\{\psi_{p_0}(W;\eta)(t)\}$ for the uncentered score defined below with the treated probability fixed at $p_0$. The map $\eta\mapsto \Psi(\eta)$ is Neyman-orthogonal at $\eta_0$, meaning that first-order perturbations of nuisance functions do not affect the target at the truth.
This orthogonality underpins robustness to nuisance estimation error and allows $\sqrt{n}$ inference under flexible nuisance learning when combined with cross-fitting \citep{chernozhukov2018dml}.

\subsection{Cross-fitted Doubly Robust Estimator}\label{sec:estimation}
For estimation it is convenient to work with an uncentered orthogonal score $\psi(W;\eta)(t)$, where $\eta=(\pi,\mu_0)$ collects the propensity score $\pi(x)$ and the control-regression function $\mu_0(x)(t)=\E\{\Delta Y(t)\mid X=x,D=0\}$, and $p=\mathbb{P}(D=1)$.
To connect this score to the familiar AIPW representation, introduce the augmented tuple $\eta^+=(\pi,\mu_0,\mu_1)$ only for the following algebraic display:

\begin{equation}\label{eq:drscore_did}
\begin{aligned}
\psi^{+}(W;\eta^+)(t)
&=
\frac{D}{p}\Big\{\Delta Y(t)-\mu_1(X)(t)\Big\}
-
\frac{(1-D)\pi(X)}{p\{1-\pi(X)\}}\Big\{\Delta Y(t)-\mu_0(X)(t)\Big\}\\
&\quad+
\frac{D}{p}\Big\{\mu_1(X)(t)-\mu_0(X)(t)\Big\},
\qquad t\in\cT.
\end{aligned}
\end{equation}
The treated-regression function $\mu_1$ cancels algebraically from this display. Therefore the proposed ATT score is implemented as
\begin{equation}\label{eq:drscore_simplified}
\psi(W;\eta)(t)
=
\frac{D}{p}\Big\{\Delta Y(t)-\mu_0(X)(t)\Big\}
-
\frac{(1-D)\pi(X)}{p\{1-\pi(X)\}}\Big\{\Delta Y(t)-\mu_0(X)(t)\Big\}.
\end{equation}
Thus $\mu_1$ is useful for displaying the familiar AIPW form and for the OR comparator, but it is not required for implementing the proposed CF--DR estimator. When the dependence on the treated probability needs to be emphasized, we write $\psi_p(W;\eta)$ for the score in \eqref{eq:drscore_simplified} with denominator $p$.

For each $t\in\cT$, the score \eqref{eq:drscore_simplified} satisfies a doubly robust moment condition: $\E\{\psi(W;\eta)(t)\}=\tau_0(t)$ holds if either ($i$) $\pi=\pi_0$ or ($ii$) $\mu_0=\mu_{0,\mathrm{true}}$, where $\mu_{0,\mathrm{true}}(x)(t)=\E\{\Delta Y(t)\mid X=x,D=0\}$. In particular, the score does not require correct modeling of $\mu_1(x)(t)=\E\{\Delta Y(t)\mid X=x,D=1\}$, since the treated-group mean enters through the observed $\Delta Y$ among $D=1$ units. This is a population moment identity; the root-$n$ inference theory below additionally imposes nuisance consistency and product-rate conditions.

To mitigate overfitting bias and enable high-dimensional or machine-learning nuisance estimation, we adopt $K$-fold cross-fitting \citep{chernozhukov2018dml}. Partition $\{1,\dots,n\}$ into disjoint folds $\{\mathcal{I}_k\}_{k=1}^K$. For each fold $k$, estimate nuisance functions on the auxiliary sample $\{i\notin\mathcal{I}_k\}$, producing
\[
\hat{\eta}^{(-k)}=\big(\hat{\pi}^{(-k)},\hat{\mu}_0^{(-k)}\big).
\]
Let $\widehat{p}=n^{-1}\sum_{i=1}^n D_i$. The cross-fitted doubly robust estimator of $\tau_0(\cdot)$ is
\begin{equation}\label{eq:tauhat}
\hat{\tau}(t)
=
\frac{1}{n}\sum_{k=1}^K\sum_{i\in\mathcal{I}_k}
\psi\!\left(W_i;\hat{\eta}^{(-k)}\right)(t),
\qquad t\in\mathcal{T}.
\end{equation}
If an additional treated-regression learner $\hat\mu_1$ is fitted, expanding \eqref{eq:tauhat} yields the equivalent familiar AIPW form:
\begin{equation}\label{eq:aipw_explicit}
\begin{aligned}
\hat{\tau}(t)
&=
\frac{1}{n}\sum_{k=1}^K\sum_{i\in\mathcal{I}_k}
\Bigg[
\frac{D_i}{\hat p}\Big\{\Delta Y_i(t)-\hat{\mu}_1^{(-k)}(X_i)(t)\Big\}
-
\frac{(1-D_i)\hat{\pi}^{(-k)}(X_i)}{\hat p\{1-\hat{\pi}^{(-k)}(X_i)\}}
\Big\{\Delta Y_i(t)-\hat{\mu}_0^{(-k)}(X_i)(t)\Big\}\\
&\hspace{2.7cm}+
\frac{D_i}{\hat p}\Big\{\hat{\mu}_1^{(-k)}(X_i)(t)-\hat{\mu}_0^{(-k)}(X_i)(t)\Big\}
\Bigg].
\end{aligned}
\end{equation}

The nuisance functions $\mu_a(x)(\cdot)$ are $\mathcal{H}$-valued regressions. In practice we evaluate curves on a grid $\{t_m\}_{m=1}^M$ and treat $\mu_a(x)$ as an $M$-vector, or we use a low-dimensional basis representation. A convenient approach is coefficient learning: choose basis functions $\{b_\ell\}_{\ell=1}^L$ on $\mathcal{T}$ and approximate
$\Delta Y_i(t)\approx \sum_{\ell=1}^L \theta_{i\ell} b_\ell(t)$.

Let $\bm{\theta}_i=(\theta_{i1},\ldots,\theta_{iL})^\top$ denote coefficients obtained from reconstructed curves (Section~\ref{sec:recon}). Then estimating $\mu_a(x)(\cdot)$ reduces to predicting $\bm{\theta}$ from $X$ within group $D=a$ using any multivariate regression learner (e.g., penalized regression, forests, boosting, neural networks), and mapping predicted coefficients back to the function space:
$\hat{\mu}_a(x)(t)=\sum_{\ell=1}^L \hat{r}_{a\ell}(x)\,b_\ell(t)$.
The propensity score $\pi_0(x)$ can be estimated by standard classification methods.
Cross-fitting ensures that evaluation-fold residuals are approximately orthogonal to the nuisance-fitting noise, which is essential for the asymptotic linearity established in Section~\ref{sec:asymptotic}.

\subsection{Curve Reconstruction}\label{sec:recon}
Functional outcomes are frequently observed on finite grids and contaminated by measurement error. For each unit $i$ and period $T\in\{0,1\}$, we observe
\begin{equation*}\label{eq:obs_model}
Z_{iTj}=Y_{iT}(t_{iTj})+\varepsilon_{iTj},\qquad j=1,\dots,m_{iT},
\end{equation*}
where $t_{iTj}\in\mathcal{T}$ may be irregular and unit-specific, and $\varepsilon_{iTj}$ are mean-zero noises with finite variance. We reconstruct $\widehat{Y}_{iT}(\cdot)$ from $\{(t_{iTj},Z_{iTj})\}_{j=1}^{m_{iT}}$ using standard FDA tools (e.g., penalized splines for dense designs or FPCA/PACE-type methods for sparse designs; see \citealp{ramsay2005fda,yao2005fpca,hsing2015theoretical}).
We then form the reconstructed change
\[
\widehat{\Delta Y}_i(\cdot)=\widehat{Y}_{i1}(\cdot)-\widehat{Y}_{i0}(\cdot),
\]
and apply \eqref{eq:tauhat} with $\Delta Y_i$ replaced by $\widehat{\Delta Y}_i$.

For valid $\sqrt{n}$ inference in $\mathcal{H}$, reconstruction error must be asymptotically negligible relative to the sampling fluctuation of the causal estimator. We impose the $L^2$ control
\begin{equation*}\label{eq:recon_rate_L2}
\max_{1\le i\le n}\|\widehat{\Delta Y}_i-\Delta Y_i\|_{\mathcal{H}} = o_p(n^{-1/2}).
\end{equation*}
For uniform bands based on $\sup_{t\in\mathcal{T}}|\hat\tau(t)-\tau_0(t)|$, we additionally require a stronger sup-norm control,
\begin{equation*}\label{eq:recon_rate_sup}
\max_{1\le i\le n}\|\widehat{\Delta Y}_i-\Delta Y_i\|_{\infty} = o_p(n^{-1/2}),
\end{equation*}
or alternatively we construct bands on a sufficiently fine evaluation grid and interpret them as bands for the discretized effect curve. Conditions of this type are standard in FDA; see, for example, \citet{yao2005fpca,horvath2012inference}.

\section{Asymptotic Theory in $\cH$}\label{sec:asymptotic}

This section establishes first-order asymptotic theory for the cross-fitted doubly robust estimator $\widehat{\tau}(\cdot)$ in \eqref{eq:tauhat}. The main ingredients are a Hilbert-space asymptotic linear expansion driven by the efficient influence function and weak convergence to a Gaussian element in $\cH=L^2(\cT)$. We also state a convenient sufficient condition for process-level (sup-norm) limits that underlie uniform confidence bands.

\subsection{Preliminaries}\label{ssec:asym_prelim}
Recall the orthogonal score $\psi_p(W;\eta)(t)$ in \eqref{eq:drscore_simplified}, and write $\eta_0=(\pi_0,\mu_0)$ for the true nuisance functions required by the proposed ATT score. Because the treated probability $p_0=P(D=1)$ is estimated, the first-order influence contribution is the ratio-corrected version
\[
\phi(W;\eta_0)(t)=\psi_{p_0}(W;\eta_0)(t)-\frac{D}{p_0}\tau_0(t),\qquad t\in\cT.
\]
By construction, $\E\{\phi(W;\eta_0)(t)\}=0$ for all $t$ and $\phi(W;\eta_0)\in \cH$ almost surely under mild moment conditions. Let $p_0=\mathbb{P}(D=1)$ and denote by $\Sigma$ the covariance operator of $\phi(W;\eta_0)$ on $\cH$, i.e.,
$\langle f,\Sigma g\rangle=\Cov(\langle f,\phi\rangle,\langle g,\phi\rangle)$ for $f,g\in\cH$.
When convenient, we also write the covariance kernel
\[
C(s,t)=\Cov\{\phi(W;\eta_0)(s),\phi(W;\eta_0)(t)\},\qquad (s,t)\in\cT^2,
\]
which characterizes the Gaussian limit in both $\cH$ and (under additional regularity) in $\ell^\infty(\cT)$.

\subsection{Assumptions}\label{ssec:asym_assumptions}
We collect sufficient conditions for asymptotic linearity and weak convergence. They are standard in orthogonal-score analysis with cross-fitting and functional outcomes.

\begin{assumption}[Sampling, overlap, and moments]\label{ass:asym_basic}
The observations $\{W_i\}_{i=1}^n$ are i.i.d. with $W=(\Delta Y,D,X)$. Overlap holds: there exists $\underline c\in(0,1/2)$ such that $\underline c\le \pi_0(X)\le 1-\underline c$ almost surely, and $p_0\in(0,1)$. Moreover, $\E\|\Delta Y\|_{\cH}^2<\infty$ and $\E\|\phi(W;\eta_0)\|_{\cH}^2<\infty$. The untreated-change residual has uniformly bounded second moment:
\[
\sup_x \E\{\|\Delta Y-\mu_0(X)\|_{\cH}^2\mid X=x,D=0\}<\infty .
\]
\end{assumption}

\begin{assumption}[Cross-fitting and stability of nuisance estimates]\label{ass:asym_cf}
The sample is split into $K$ folds and nuisance estimators $\widehat{\eta}^{(-k)}$ are trained on the complement of each fold as in Section~\ref{sec:estimation}. The fitted propensity scores satisfy a uniform positivity bound with probability tending to one, i.e., for some $\underline c\in(0,1/2)$,
\[
\underline c \le \widehat{\pi}^{(-k)}(X_i)\le 1-\underline c
\quad \text{for all } i\in\cI_k,\ k=1,\ldots,K
\]
(possibly after deterministic clipping).
The regression learners are stable in the sense that
\[
\max_{1\le k\le K}\|\widehat\mu_0^{(-k)}\|_2=O_p(1),
\]
where $\|\cdot\|_2$ denotes the relevant $L^2(P;\cH)$ norm.
\end{assumption}

\begin{assumption}[Product-rate condition for orthogonalization]\label{ass:asym_rates}
Let $\|\cdot\|_{2}$ denote the $L^2(\mathbb{P})$ norm for scalar functions and the $L^2(\mathbb{P};\cH)$ norm for $\cH$-valued regressions.
The nuisance estimators entering the ATT score satisfy
\[
\|\widehat{\mu}_0-\mu_0\|_{2}=o_p(1),\qquad
\|\widehat{\pi}-\pi_0\|_{2}=o_p(1),\qquad
\|\widehat{\mu}_0-\mu_0\|_{2}\,\|\widehat{\pi}-\pi_0\|_{2}=o_p(n^{-1/2}).
\]
The treated-regression learner $\widehat\mu_1$ is not part of the proposed ATT score; it may still be estimated for the plug-in OR comparator and for displaying \eqref{eq:aipw_explicit}.
\end{assumption}

\begin{assumption}[Reconstruction error]\label{ass:asym_recon}
When $\Delta Y_i$ is constructed from discretely observed noisy curves, the reconstruction step satisfies
\[
\max_{1\le i\le n}\|\widehat{\Delta Y}_i-\Delta Y_i\|_{\cH}=o_p(n^{-1/2}).
\]
When nuisance functions are trained using reconstructed curves, the rate conditions in Assumption~\ref{ass:asym_rates} are understood to hold for the reconstructed-outcome learners, so any reconstruction-induced error in the nuisance fits is absorbed into those nuisance-rate bounds.
\end{assumption}

Assumptions~\ref{ass:asym_basic} and \ref{ass:asym_cf} control the stochastic environment in which the orthogonal score is evaluated: they impose i.i.d.\ sampling, overlap/positivity and practical stability of the estimated propensity scores, e.g., via clipping, and moment/separability conditions that make $\phi(W;\eta_0)$ a well-defined $\cH$-valued random element. Assumptions~\ref{ass:asym_rates}and \ref{ass:asym_recon} then quantify the impact of data-adaptive nuisance learning and discretization: the product-rate condition ensures that plug-in errors in $(\pi,\mu_0)$ are second order under Neyman orthogonality and cross-fitting, while the reconstruction bound guarantees that smoothing/discretization error is asymptotically negligible relative to the $n^{-1/2}$ sampling fluctuation \citep{chernozhukov2018dml}.

\subsection{Asymptotic Linearity in $\cH$}\label{ssec:asym_linear}
We first establish a Hilbert-space asymptotic linear expansion. This result implies root-$n$ consistency and provides the influence function that drives the Gaussian limit.

\begin{theorem}[Asymptotic linearity]\label{thm:asym_linear}
Suppose Assumptions~\ref{ass:asym_basic}--\ref{ass:asym_recon} hold. Then
\begin{equation*}\label{eq:asym_linear}
\sqrt{n}\big(\widehat{\tau}-\tau_0\big)
=
\frac{1}{\sqrt{n}}\sum_{i=1}^n \phi(W_i;\eta_0) + r_n
\quad \text{in } \cH,
\end{equation*}
where $\|r_n\|_{\cH}=o_p(1)$.
\end{theorem}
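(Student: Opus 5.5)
Our plan is the standard double/debiased machine learning decomposition, carried out in $\cH$ and organized around three perturbations: reconstruction, the estimated treated probability $\widehat p$, and cross-fitted nuisances. Write $\widehat\tau$ as computed from reconstructed curves $\widehat{\Delta Y}_i$ with denominator $\widehat p$. Also let $\tilde\tau$ be the oracle-data version built from the true $\Delta Y_i$, and $\bar\tau_k$ the fold average with $p_0$ in place of $\widehat p$.

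\textbf{Step 1 (reconstruction).} The score \eqref{eq:drscore_simplified} is linear in $\Delta Y$. The difference $\widehat\tau-\tilde\tau$ is therefore a weighted average of $\widehat{\Delta Y}_i-\Delta Y_i$ with weights $D_i/\widehat p$ and $(1-D_i)\widehat\pi/\{\widehat p(1-\widehat\pi)\}$. These weights are $O_p(1)$ uniformly by Assumption~\ref{ass:asym_cf} and $\widehat p\to p_0>0$. The triangle inequality then gives $\|\widehat\tau-\tilde\tau\|_{\cH}\le O_p(1)\max_i\|\widehat{\Delta Y}_i-\Delta Y_i\|_{\cH}=o_p(n^{-1/2})$ by Assumption~\ref{ass:asym_recon}. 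Nuisance errors from reconstructed training data are already absorbed into Assumption~\ref{ass:asym_rates}.

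\textbf{Step 2 (ratio correction for $\widehat p$).} Since $\psi_p=(p_0/p)\psi_{p_0}$, we have
$\tilde\tau=(p_0/\widehat p)\,n^{-1}\sum_k\sum_{i\in\cI_k}\psi_{p_0}(W_i;\widehat\eta^{(-k)})$.
Write $A_n$ for the double sum. Once Step 3 shows $A_n=\tau_0+O_p(n^{-1/2})$, a first-order expansion of $p_0/\widehat p$ gives
$\sqrt n(\tilde\tau-\tau_0)=\sqrt n(A_n-\tau_0)-\tau_0\,n^{-1/2}\sum_i(D_i-p_0)/p_0+o_p(1)$.
This is exactly what produces the $-D\tau_0/p_0$ term in $\phi$; the $+\tau_0$ centering cancels because $\E D=p_0$.

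\textbf{Step 3 (cross-fitted nuisances).} For each fold, decompose
\[
\sum_{i\in\cI_k}\{\psi_{p_0}(W_i;\widehat\eta^{(-k)})-\psi_{p_0}(W_i;\eta_0)\}
\]
into an empirical-process term and a bias term.

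The empirical-process term is $(\mathbb P_{n,k}-\mathbb P)$ applied to the difference of scores. Conditional on the auxiliary sample it has mean zero, and its conditional second moment in $\cH$ is bounded by $E\|\psi_{p_0}(W;\widehat\eta)-\psi_{p_0}(W;\eta_0)\|_{\cH}^2$. Using overlap, the clipping in Assumption~\ref{ass:asym_cf}, and the bounded conditional residual moment in Assumption~\ref{ass:asym_basic}, this is $O(\|\widehat\pi-\pi_0\|_2^2+\|\widehat\mu_0-\mu_0\|_2^2)=o_p(1)$. A conditional Chebyshev bound in the Hilbert space (variance of an average of i.i.d.\ mean-zero $\cH$-elements) then makes the term $o_p(n^{-1/2})$.

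For the bias term, compute $\Psi(\widehat\eta)-\Psi(\eta_0)$ by iterated expectations given $X$. The $D$ part gives $-E\{\pi_0(\widehat\mu_0-\mu_0)\}/p_0$. The control part gives
\[
-E\Big[\tfrac{(1-\pi_0)\widehat\pi}{1-\widehat\pi}\{\mu_0-\widehat\mu_0\}\Big]/p_0 .
\]
Summing,
\[
\Psi(\widehat\eta)-\Psi(\eta_0)=p_0^{-1}E\Big[\Big\{\tfrac{(1-\pi_0)\widehat\pi}{1-\widehat\pi}-\pi_0\Big\}(\widehat\mu_0-\mu_0)(X)\Big]
=p_0^{-1}E\Big[\tfrac{\widehat\pi-\pi_0}{1-\widehat\pi}(\widehat\mu_0-\mu_0)(X)\Big].
\]
By Cauchy–Schwarz (Bochner integral norm inside) and $1-\widehat\pi\ge\underline c$, this is at most $C\|\widehat\pi-\pi_0\|_2\|\widehat\mu_0-\mu_0\|_2=o_p(n^{-1/2})$ by Assumption~\ref{ass:asym_rates}. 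This is the Neyman-orthogonality step. Summing over the finitely many folds gives $A_n=n^{-1}\sum_i\psi_{p_0}(W_i;\eta_0)+o_p(n^{-1/2})$, and the Hilbert-space CLT/Chebyshev bound under $E\|\phi\|^2<\infty$ yields the needed $O_p(n^{-1/2})$ rate.

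Combining Steps 1–3 gives the claimed expansion with $r_n$ the sum of the $o_p(1)$ remainders. The main obstacle is Step 3's empirical-process term. It requires making the conditional-on-auxiliary-sample argument rigorous in $\cH$ rather than pointwise in $t$, and controlling the random weights $\widehat\pi/(1-\widehat\pi)$ uniformly. We would handle it via Bochner second moments and Markov's inequality conditionally, then unconditional convergence by dominated convergence on probabilities.
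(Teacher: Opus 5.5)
Your proposal is correct and follows essentially the same route as the paper's proof. Both arguments rest on the same cross-fitted decomposition: a conditionally mean-zero empirical-process term controlled by Hilbert-space second moments and a Lipschitz bound on the score difference; a bias term that is an exact product of propensity and control-regression errors (your $p_0^{-1}\E[\{(\widehat\pi-\pi_0)/(1-\widehat\pi)\}(\widehat\mu_0-\mu_0)(X)]$ is the paper's odds-difference identity after integrating out $D$ given $X$); the $p_0/\widehat p$ ratio expansion that produces the $-D\tau_0/p_0$ correction; and linearity of the score in $\Delta Y$ to handle reconstruction. The only difference is ordering (you remove the reconstruction error first, the paper last), and this does not change the argument.
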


The remainder term $r_n$ collects (i) the higher-order orthogonalization error from estimating $(\pi_0,\mu_0)$, controlled by the product-rate condition in Assumption~\ref{ass:asym_rates}, (ii) sample-splitting error, and (iii) reconstruction error, controlled by Assumption~\ref{ass:asym_recon}. The leading term is the empirical average of the EIF, which is the same object that appears in classical semiparametric efficiency theory for scalar parameters \citep{bickel1993efficient,tsiatis2006semiparametric} but now takes values in $\cH$.

\subsection{Weak Convergence in $\cH$}\label{ssec:asym_clt}
We next obtain weak convergence to a Gaussian element in $\cH$. Since $\cH$ is a separable Hilbert space, a central limit theorem applies under finite second moments.

\begin{theorem}[Hilbert-space CLT]\label{thm:asym_clt}
Under the conditions of Theorem~\ref{thm:asym_linear},
\[
\sqrt{n}\big(\widehat{\tau}-\tau_0\big)\Rightarrow \mathbb{G}
\quad \text{in } \cH,
\]
where $\mathbb{G}$ is a mean-zero Gaussian element in $\cH$ with covariance operator $\Sigma$ induced by $\phi(W;\eta_0)$.
Equivalently, for any fixed $f\in\cH$,
\[
\sqrt{n}\,\langle f,\widehat{\tau}-\tau_0\rangle
\ \Rightarrow\ N\big(0,\langle f,\Sigma f\rangle\big).
\]
\end{theorem}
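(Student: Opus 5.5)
The plan is to combine the asymptotic linear expansion of Theorem~\ref{thm:asym_linear} with a central limit theorem for i.i.d.\ random elements of the separable Hilbert space $\cH$, and then finish with Slutsky's lemma. By Theorem~\ref{thm:asym_linear},
\[
\sqrt{n}(\widehat\tau-\tau_0)=\mathbb{G}_n+r_n,\qquad \mathbb{G}_n=\frac{1}{\sqrt n}\sum_{i=1}^n\phi(W_i;\eta_0),\qquad \|r_n\|_{\cH}=o_p(1).
\]
So it suffices to show that $\mathbb{G}_n\Rightarrow\mathbb{G}$ in $\cH$. Slutsky's lemma in the metric space $\cH$ then transfers the limit, because $r_n\to0$ in probability in norm.

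For $\mathbb{G}_n$, the summands $\phi(W_i;\eta_0)$ are i.i.d.\ $\cH$-valued random elements. They have mean zero, since $\E\{\phi(W;\eta_0)(t)\}=0$ for every $t$; as a Bochner integral this means $\E\langle f,\phi\rangle=0$ for all $f\in\cH$, which follows by Fubini using $\E\|\phi\|_{\cH}<\infty$. They also satisfy $\E\|\phi(W;\eta_0)\|_{\cH}^2<\infty$ by Assumption~\ref{ass:asym_basic}. Measurability as an $\cH$-valued random element follows from separability of $\cH$ together with the measurability of $\Delta Y$, $\pi_0$ and $\mu_0$.

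The key step is the Hilbert-space CLT: an i.i.d.\ mean-zero sequence with finite second moment in a separable Hilbert space satisfies the CLT, with a Gaussian limit whose covariance operator is $\Sigma$. I would either cite this classical result (e.g.\ Ledoux--Talagrand, or van der Vaart--Wellner Ch.~1.8) or sketch its proof in two parts.

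\textbf{Finite-dimensional convergence.} For any $f_1,\dots,f_k\in\cH$, the scalar multivariate CLT applied to the vectors $(\langle f_j,\phi(W_i)\rangle)_{j\le k}$ gives joint normal limits with covariances $\langle f_j,\Sigma f_l\rangle$. These exist because $|\langle f,\phi\rangle|\le\|f\|\|\phi\|$ has a finite second moment.

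\textbf{Asymptotic tightness.} Fix an orthonormal basis $\{e_j\}$ of $\cH$ and let $P_J$ denote the projection onto the span of $e_1,\dots,e_J$. By independence and mean zero,
\[
\E\|(I-P_J)\mathbb{G}_n\|_{\cH}^2=\E\|(I-P_J)\phi\|_{\cH}^2=\sum_{j>J}\E\langle e_j,\phi\rangle^2.
\]
This bound is uniform in $n$ and tends to $0$ as $J\to\infty$ by dominated convergence, since $\sum_j\E\langle e_j,\phi\rangle^2=\E\|\phi\|^2<\infty$. This flat-concentration condition, combined with tightness of the finite-dimensional projections, yields tightness of $\{\mathbb{G}_n\}$ in $\cH$.

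With finite-dimensional convergence and tightness in hand, the limit $\mathbb{G}$ is uniquely identified. It is a mean-zero Gaussian element whose covariance operator is $\Sigma$, and $\Sigma$ is trace class because $\operatorname{tr}\Sigma=\E\|\phi\|^2<\infty$.

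The equivalent scalar statement then follows from the continuous mapping theorem applied to the bounded linear functional $g\mapsto\langle f,g\rangle$. Since $\langle f,\mathbb{G}\rangle\sim N(0,\langle f,\Sigma f\rangle)$ by the definition of $\Sigma$, we obtain $\sqrt n\langle f,\widehat\tau-\tau_0\rangle\Rightarrow N(0,\langle f,\Sigma f\rangle)$. The identity $|\langle f,r_n\rangle|\le\|f\|\|r_n\|=o_p(1)$ confirms directly that the remainder does not affect this scalar limit.

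The main obstacle I anticipate is the tightness step, that is, the infinite-dimensional part of the CLT. Citing the standard Hilbert-space CLT is cleanest, and the tail-sum bound above is exactly what that result requires. A secondary check is that $\Sigma$ as defined through $\langle f,\Sigma g\rangle=\Cov(\langle f,\phi\rangle,\langle g,\phi\rangle)$ coincides with the integral operator with kernel $C(s,t)$. This follows by Fubini under $\E\|\phi\|^2<\infty$.
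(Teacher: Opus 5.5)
Your proposal is correct and follows the paper's proof: you combine the expansion from Theorem~\ref{thm:asym_linear} with the i.i.d.\ central limit theorem in the separable Hilbert space $\cH$, using $\E\,\phi(W;\eta_0)=0$ and $\E\|\phi(W;\eta_0)\|_{\cH}^2<\infty$ from Assumption~\ref{ass:asym_basic}, and then conclude with Slutsky's lemma. Your sketch of the Hilbert-space CLT (finite-dimensional convergence plus the flat-concentration tail bound) fills in the classical result that the paper only cites, and your continuous-mapping derivation of the scalar statement is also sound.
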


Theorem~\ref{thm:asym_clt} provides a principled basis for inference on continuous linear summaries of the curve, such as projections $\langle f,\tau_0\rangle$ with $f\in\cH$. Pointwise inference at fixed locations and global sup-norm inference require the additional sample-path regularity stated below, or alternatively may be interpreted on a finite evaluation grid. The covariance structure of the limiting Gaussian element is determined by the covariance kernel $C(s,t)$.

\subsection{From $\cH$ Limits to Uniform Inference}\label{ssec:asym_uniform}
Uniform confidence bands concern the supremum norm $\sup_{t\in\cT}|\widehat{\tau}(t)-\tau_0(t)|$, which is not directly controlled by an $L^2$ limit unless additional regularity is imposed. A convenient sufficient route is to strengthen the functional CLT to weak convergence in $\ell^\infty(\cT)$ by assuming the influence function admits a version with sufficiently regular sample paths.

\begin{assumption}[Uniform empirical-process regularity]\label{ass:asym_paths}
The class of functions
\[
\mathcal F_\phi=\{\phi_t(W)=\phi(W;\eta_0)(t):t\in\cT\}
\]
is $P$-Donsker with a square-integrable envelope, and its Gaussian limit admits a version with almost surely uniformly continuous sample paths on $\cT$.
Let $\sigma^2(t)=\operatorname{Var}\{\phi(W;\eta_0)(t)\}=C(t,t)$.
Assume
\[
\inf_{t\in\cT}\sigma(t)>0.
\]
For the nuisance estimators, the $L^2(P;\ell^\infty)$ analogues of Assumptions~\ref{ass:asym_cf}--\ref{ass:asym_rates} hold, in particular
\[
\|\widehat{\mu}_0-\mu_0\|_{2,\infty}=o_p(1),
\qquad
\|\widehat{\pi}-\pi_0\|_{2}\|\widehat{\mu}_0-\mu_0\|_{2,\infty}=o_p(n^{-1/2}).
\]
Here $\|h\|_{2,\infty}=\{\E\|h(X)\|_\infty^2\}^{1/2}$ for function-valued regressions.
Moreover, the reconstruction error satisfies the stronger bound
\[
\max_{1\le i\le n}\|\widehat{\Delta Y}_i-\Delta Y_i\|_{\infty}=o_p(n^{-1/2}).
\]
\end{assumption}

\begin{theorem}[Process-level weak convergence]\label{thm:asym_sup}
Assume Theorem~\ref{thm:asym_linear} holds and Assumption~\ref{ass:asym_paths} is satisfied. Then
\[
\sqrt{n}\big(\widehat{\tau}-\tau_0\big)\Rightarrow \mathbb{Z}
\quad \text{in } \ell^\infty(\cT),
\]
where $\mathbb{Z}$ is a mean-zero tight Gaussian process on $\cT$ with covariance kernel $C(s,t)$.
\end{theorem}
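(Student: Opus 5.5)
We prove Theorem~\ref{thm:asym_sup} by redoing the linearization of Theorem~\ref{thm:asym_linear} in the sup-norm and then invoking the Donsker property of $\mathcal F_\phi$. We decompose
\[
\sqrt n(\widehat\tau-\tau_0)(t)=\mathbb G_n\phi_t+R_{1n}(t)+R_{2n}(t)+R_{3n}(t),\qquad \mathbb G_n\phi_t=n^{-1/2}\sum_{i=1}^n\phi(W_i;\eta_0)(t).
\]
The three remainders are: $R_{1n}$, the orthogonalization remainder from replacing $\eta_0$ by $\widehat\eta^{(-k)}$; $R_{2n}$, the error from estimating $p_0$ by $\widehat p$; and $R_{3n}$, the reconstruction error. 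The argument then has two steps. First, Assumption~\ref{ass:asym_paths} gives $\mathbb G_n\phi_\cdot\Rightarrow\mathbb Z$ in $\ell^\infty(\cT)$ directly. Here $\mathbb Z$ is tight with covariance kernel $C$ and uniformly continuous paths, and $C$ is the covariance of $\phi$. Second, we show $\sup_t|R_{jn}(t)|=o_p(1)$ for $j=1,2,3$, and conclude by Slutsky's lemma in $\ell^\infty(\cT)$.

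\textbf{Reconstruction and $\widehat p$ terms.} For $R_{3n}$, the score is linear in $\Delta Y$ with weights bounded by $1/\underline c^{2}$. This uses overlap and the clipping in Assumption~\ref{ass:asym_cf}, together with $\widehat p\to p_0>0$. The reconstruction contribution is therefore at most
\[
C\sqrt n\max_i\|\widehat{\Delta Y}_i-\Delta Y_i\|_\infty=o_p(1).
\]
For $R_{2n}$, write $1/\widehat p-1/p_0=-(\widehat p-p_0)/p_0^2+O_p(n^{-1})$. The linear part produces the $-(D/p_0)\tau_0(t)$ correction already inside $\phi$. The leftover terms have the form $(\widehat p-p_0)\sqrt n\,\{\mathbb P_n-\E\}[\psi_{p_0}(\cdot)(t)\,\cdot]$. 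Their sup over $t$ is $O_p(n^{-1/2})\cdot O_p(1)$, because the uncentered score class is Donsker, being a fixed linear combination of $\phi_t$ and $D\tau_0(t)$ with $\tau_0$ bounded. Hence these terms vanish uniformly.

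\textbf{Orthogonalization remainder (main obstacle).} On fold $k$ we split
\[
R_{1n}(t)=\sqrt{n_k/n}\,\mathbb G_{n,k}\{\psi(\cdot;\widehat\eta^{(-k)})(t)-\psi(\cdot;\eta_0)(t)\}+\sqrt n\,\{\Psi(\widehat\eta^{(-k)})-\Psi(\eta_0)\}(t).
\]
The bias term is handled by direct computation. Set $w(\pi)=\pi/(1-\pi)$. Conditioning on $X$ and on the auxiliary sample gives
\[
\Psi(\widehat\eta)(t)-\Psi(\eta_0)(t)=p_0^{-1}\E\Big[\big\{\pi_0-(1-\pi_0)w(\widehat\pi)\big\}(X)\,\big\{\mu_0-\widehat\mu_0\big\}(X)(t)\Big].
\]
The first factor equals $(1-\pi_0)\{w(\pi_0)-w(\widehat\pi)\}$, so by Cauchy--Schwarz and clipping,
\[
\sup_t|\cdot|\le C\|\widehat\pi-\pi_0\|_2\,\|\widehat\mu_0-\mu_0\|_{2,\infty}=o_p(n^{-1/2}).
\]
The empirical-process term is the delicate one. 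Conditionally on the auxiliary sample, the difference is an i.i.d.\ average of functions indexed by $t$, and its sup over $t$ is not controlled by a pointwise variance bound. I would try two arguments. First, the residual part $\Delta Y(t)$ enters only through the fixed weight difference $(D-(1-D)w(\widehat\pi))/p_0-(D-(1-D)w(\pi_0))/p_0$. This weight does not depend on $t$, so its process is a random bounded multiplier times a Donsker-type class built from $\Delta Y(t)$ and $\mu_0(X)(t)$. A conditional maximal inequality then gives sup-norm $O_p(\|\widehat\pi-\pi_0\|_2)=o_p(1)$. Second, the part linear in $\widehat\mu_0-\mu_0$ is bounded pointwise-in-$\omega$ by $C\|\widehat\mu_0(X)-\mu_0(X)\|_\infty$. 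Its centered average therefore has sup at most $C\sqrt n(\mathbb P_{n,k}+\E)\|\widehat\mu_0-\mu_0\|_\infty$ after centering. The cleaner route is Chebyshev applied to the sup-envelope, which yields an $O_p(\|\widehat\mu_0-\mu_0\|_{2,\infty})$ bound only if a symmetrization and maximal inequality for the class $\{t\mapsto\ldots\}$ is available. I would invoke the $\ell^\infty$ stability and uniform-entropy content implicit in Assumption~\ref{ass:asym_paths} for this. If that fails, the fallback is to work on a fine grid, as the paper permits. Combining the three remainders with the Donsker limit completes the proof.
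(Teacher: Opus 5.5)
Your architecture matches the paper's proof. Both use Donsker convergence of $n^{-1/2}\sum_i\phi(W_i;\eta_0)$ and a fold-wise split of the nuisance remainder into a centered empirical term and a bias term. Your bias identity is the paper's: your factor $(1-\pi_0)\{w(\pi_0)-w(\widehat\pi)\}$ appears there too. The ratio expansion for $\widehat p$, the sup-norm reconstruction bound, and Slutsky are also the same.

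\textbf{The gap.} It is the step you flag and leave open: a sup-norm bound on the centered fold term, specifically its piece $\mathbb{G}_{n,k}\{a_k h_k(\cdot)(t)\}$, where $a_k=\{-D+(1-D)w(\widehat\pi^{(-k)}(X))\}/p_0$ and $h_k=\widehat\mu_0^{(-k)}-\mu_0$.
\begin{itemize}
\item Your envelope bound gives only $O_p(\sqrt n\,\|h_k\|_{2,\infty})$, which is not $o_p(1)$.
\item Assumption~\ref{ass:asym_paths} has no entropy content to invoke. Its Donsker clause concerns the fixed class $\mathcal F_\phi$ at $\eta_0$. Its nuisance clauses control only $\|h_k\|_{2,\infty}$, not how rough $t\mapsto\widehat\mu_0^{(-k)}(x)(t)$ may be.
\item The step can in fact fail under the stated assumptions. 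Let $X$ contain an irrelevant uniform coordinate $U$, take $\widehat\pi=\pi_0$, and set $h_k(x)(t)=\delta_n r_{j(t)}(u)$. Here $r_1,\ldots,r_{2^n}$ are the binary-digit sign functions of $u$, and $j(t)$ indexes $2^n$ subintervals of $\cT$. Then $\|h_k\|_{2,\infty}=\delta_n$ and the product-rate condition holds trivially. Yet with probability at least $1-e^{-1}$ some $r_j$ matches the signs of all $a_k(W_i)$, $i\in\cI_k$. The supremum is then of order $\delta_n\sqrt n$, which diverges for $\delta_n=n^{-1/4}$.
\item The grid fallback does not rescue the theorem. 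On a fixed grid you get only finite-dimensional convergence, not weak convergence in $\ell^\infty(\cT)$.
\end{itemize}

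\textbf{How the paper handles it.} The paper uses the conditional bound $\E[\|(P_{n,k}-P)\Delta_k\|_\infty^2\mid\widehat\eta^{(-k)}]\lesssim n_k^{-1}\|\Delta_k\|_{2,\infty}^2$ plus the Lipschitz estimate. That is the Hilbert-space variance identity from the proof of Theorem~\ref{thm:asym_linear} carried over to $\ell^\infty(\cT)$, which is not of type 2. The example above violates it by a factor of order $n$. So your suspicion is correct, and the paper does not actually supply the missing step.

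\textbf{What you need to finish.} Add an explicit hypothesis. For instance, require that, conditionally on the training folds, the classes $\{x\mapsto h_k(x)(t):t\in\cT\}$ have uniform-entropy or bracketing integrals bounded uniformly in $n$, relative to the envelope $\|h_k(x)\|_\infty$. Lipschitz continuity of $t\mapsto h_k(x)(t)$ with an $O_p(1)$ square-integrable Lipschitz constant is enough, via bracketing. A conditional maximal inequality then yields $o_p(1)$, which is what the paper's displayed bound is meant to deliver.

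\textbf{Your other piece.} The term $(1-D)\{w(\widehat\pi^{(-k)})-w(\pi_0)\}\{\Delta Y(t)-\mu_0(X)(t)\}/p_0$ closes under Assumption~\ref{ass:asym_paths} as stated. The residual class equals $\{\phi_t+D\tau_0(t)/p_0\}$ divided by the weight $\{D-(1-D)w(\pi_0(X))\}/p_0$, which is bounded away from zero. So it is Donsker when $\tau_0$ is bounded. Reduce to a finite $\delta$-net, then apply symmetrization, the contraction principle for the bounded training-measurable multiplier, and asymptotic equicontinuity. This gives $o_p(1)$. It does not give the explicit rate $O_p(\|\widehat\pi-\pi_0\|_2)$ you state; that rate would again need entropy.
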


Theorem~\ref{thm:asym_sup} is the probabilistic input required for simultaneous confidence bands. In Section~\ref{sec:inference} we use a multiplier bootstrap to approximate the distribution of the studentized supremum
\[
\sup_{t\in\cT}\left|\mathbb{Z}(t)/\sigma(t)\right|,
\qquad \sigma^2(t)=C(t,t),
\]
and obtain asymptotically valid equal-precision bands; the validity follows from modern Gaussian approximation tools for suprema \citep{chernozhukov2014gaussian,vandervaartwellner1996weak}.

\section{Inference and Confidence Bands}\label{sec:inference}

This section presents inference procedures for the fATT curve $\tau_0(\cdot)$. Pointwise confidence intervals are useful for inference at pre-specified locations, while simultaneous confidence bands (SCBs) provide curve-level uncertainty quantification over $t\in\cT$. Our approach uses a multiplier bootstrap to approximate the distribution of the supremum of the centered empirical process; see \citet{chernozhukov2014gaussian,vandervaartwellner1996weak} for general approximation results and \citet{degras2011scb,liebl2023ffscb} for related FDA perspectives.

\subsection{Estimated Influence Contributions}\label{ssec:influence_contrib}
Recall the cross-fitted score contributions induced by \eqref{eq:drscore_simplified} and \eqref{eq:tauhat}. For $i\in\cI_k$, define
\[
\widehat{\psi}_i(t)=\psi\!\left(W_i;\widehat{\eta}^{(-k)}\right)(t),\qquad
\widehat{\phi}_i(t)=\widehat{\psi}_i(t)-\frac{D_i}{\widehat p}\widehat{\tau}(t),\qquad t\in\cT.
\]
These $\widehat{\phi}_i(\cdot)$ act as empirical influence contributions and yield plug-in estimates of the covariance structure of the limiting Gaussian process. The ratio correction $D_i\widehat{\tau}/\widehat p$ is essential for ATT inference because $\widehat p$ is estimated. In computation, we evaluate all function-valued quantities on a dense grid $\{t_m\}_{m=1}^M\subset\cT$ and approximate suprema by maxima over this grid.

\subsection{Pointwise Confidence Intervals}\label{ssec:pointwise_ci}
Fix a pre-specified evaluation point $t\in\cT$ at which the process-level regularity conditions make point evaluation well-defined, or take $t$ to be a point on the finite analysis grid. A natural variance estimator is the empirical second moment of the centered influence contributions:
\[
\bar{\widehat\phi}(t)=\frac{1}{n}\sum_{i=1}^n\widehat\phi_i(t),\qquad
\widehat{\sigma}^2(t)=\frac{1}{n}\sum_{i=1}^n \{\widehat{\phi}_i(t)-\bar{\widehat\phi}(t)\}^2,
\qquad
\widehat{\sigma}(t)=\{\widehat{\sigma}^2(t)\}^{1/2}.
\]
Let $z_{1-\alpha/2}$ be the $(1-\alpha/2)$ quantile of the standard normal distribution. The pointwise \red{$100(1-\alpha)\%$} confidence interval is
\begin{equation*}\label{eq:pointwise_ci}
\widehat{\tau}(t)\ \pm\ z_{1-\alpha/2}\,\frac{\widehat{\sigma}(t)}{\sqrt{n}},
\qquad t\in\cT.
\end{equation*}
These intervals are not adjusted for multiplicity across $t$ and are best interpreted for a small number of pre-specified locations.

\subsection{Confidence Bands}\label{ssec:scb}
To obtain SCBs, we approximate the distribution of the studentized supremum
$\sup_{t\in\cT}\sqrt{n}\,|\widehat{\tau}(t)-\tau_0(t)|/\sigma(t)$ using a multiplier bootstrap built from $\widehat{\phi}_i(\cdot)$.
Let $\{\xi_i\}_{i=1}^n$ be i.i.d.\ multipliers independent of the data with $\E(\xi_i)=0$ and $\E(\xi_i^2)=1$ (e.g., Rademacher or standard normal). Define
\begin{equation}\label{eq:multiplier_process}
\mathbb{G}^*(t)=\frac{1}{\sqrt{n}}\sum_{i=1}^n \xi_i\,\{\widehat{\phi}_i(t)-\bar{\widehat\phi}(t)\},\qquad t\in\cT.
\end{equation}
On the evaluation grid $\{t_m\}_{m=1}^M$, set the studentized maximum
\[
T^*=\max_{1\le m\le M}\left|\frac{\mathbb{G}^*(t_m)}{\widehat\sigma(t_m)}\right|,
\]
with the convention that very small $\widehat\sigma(t_m)$ values are truncated away from zero in computation. Repeating this procedure for $b=1,\ldots,B$ draws yields $\{T^{*(b)}\}_{b=1}^B$, and we let $\widehat{c}_{1-\alpha}$ be the empirical $(1-\alpha)$ quantile of these maxima. The resulting $(1-\alpha)$ equal-precision SCB is
\begin{equation}\label{eq:scb}
\mathcal{C}_{1-\alpha}(t)=\Big[\widehat{\tau}(t)\ \pm\ \widehat{c}_{1-\alpha}\widehat\sigma(t)/\sqrt{n}\Big],
\qquad t\in\cT.
\end{equation}

\begin{assumption}[Multiplier approximation]\label{ass:bootstrap}
For the studentized multiplier process in \eqref{eq:multiplier_process}, the conditional distribution of
\[
T_\infty^*=\sup_{t\in\cT}\left|\mathbb{G}^*(t)/\widehat\sigma(t)\right|
\]
consistently approximates the distribution of $\sup_{t\in\cT}|\mathbb{Z}(t)/\sigma(t)|$, in the sense that the corresponding conditional Kolmogorov distance converges to zero in probability. This high-level condition is implied by standard multiplier central limit, Gaussian approximation, and anti-concentration conditions for suprema under the regularity assumptions above.
\end{assumption}

\begin{theorem}[Validity of uniform confidence bands]\label{thm:band}
Assume the conditions of Theorem~\ref{thm:asym_sup} so that
$\sqrt{n}\,(\widehat{\tau}-\tau_0)\Rightarrow \mathbb{Z}$ in $\ell^\infty(\cT)$
for a tight mean-zero Gaussian process $\mathbb{Z}$.
Let $\sigma^2(t)=C(t,t)$ and assume $\inf_{t\in\cT}\sigma(t)>0$ and $\sup_{t\in\cT}|\widehat\sigma(t)-\sigma(t)|=o_p(1)$.
Assume that the distribution function of $\sup_{t\in\cT}|\mathbb{Z}(t)/\sigma(t)|$ is continuous at its \red{$100(1-\alpha)\%$} quantile.
Assume further that Assumption~\ref{ass:bootstrap} holds.
Finally, assume that the evaluation grid has mesh size tending to zero and approximates the supremum of the studentized process, i.e.,
\[
\sup_{t\in\cT}\left|\frac{\sqrt n\{\widehat\tau(t)-\tau_0(t)\}}{\widehat\sigma(t)}\right|
-
\max_{1\le m\le M}\left|\frac{\sqrt n\{\widehat\tau(t_m)-\tau_0(t_m)\}}{\widehat\sigma(t_m)}\right|
=o_p(1).
\]
Then the band \eqref{eq:scb} satisfies
\[
\mathbb{P}\Big(\tau_0(t)\in \mathcal{C}_{1-\alpha}(t)\ \ \forall t\in\cT\Big)\ \to\ 1-\alpha .
\]
\end{theorem}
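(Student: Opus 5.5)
The plan is to reduce coverage of the band \eqref{eq:scb} to a statement about the studentized supremum. Then we use continuous mapping, Slutsky, and Assumption~\ref{ass:bootstrap} to show that the bootstrap critical value converges to the right quantile.

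First, the event $\{\tau_0(t)\in\mathcal{C}_{1-\alpha}(t)\ \forall t\}$ is, at least on the grid, the event $\{\widehat T_M\le \widehat c_{1-\alpha}\}$, where
\[
\widehat T_M=\max_{1\le m\le M}\left|\frac{\sqrt n\{\widehat\tau(t_m)-\tau_0(t_m)\}}{\widehat\sigma(t_m)}\right|.
\]
Define $\widehat T_\infty$ analogously with the supremum over $\cT$. The grid-approximation hypothesis gives $\widehat T_\infty-\widehat T_M=o_p(1)$, so it suffices to study $\widehat T_\infty$, or equivalently $\widehat T_M$, against $\widehat c_{1-\alpha}$. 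The band is stated for all $t$ but computed on the grid; I will interpret coverage through $\widehat T_\infty$ and use the grid condition to pass between the two.

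Second, I show $\widehat T_\infty\Rightarrow S:=\sup_t|\mathbb{Z}(t)/\sigma(t)|$. Write
\[
\widehat T_\infty=\sup_t\left|\frac{\sqrt n(\widehat\tau-\tau_0)(t)}{\sigma(t)}\cdot\frac{\sigma(t)}{\widehat\sigma(t)}\right|.
\]
By Theorem~\ref{thm:asym_sup}, $\sqrt n(\widehat\tau-\tau_0)\Rightarrow\mathbb{Z}$ in $\ell^\infty(\cT)$. Since $\inf_t\sigma(t)>0$, the map $g\mapsto\sup_t|g(t)/\sigma(t)|$ is Lipschitz on $\ell^\infty(\cT)$, so the continuous mapping theorem yields convergence of the first factor's supremum. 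Uniform consistency of $\widehat\sigma$ together with $\inf\sigma>0$ gives $\sup_t|\sigma/\widehat\sigma-1|=o_p(1)$. Since $\sup_t|\sqrt n(\widehat\tau-\tau_0)(t)|=O_p(1)$ by tightness, the difference between $\widehat T_\infty$ and its $\sigma$-studentized version is $o_p(1)$, and Slutsky applies.

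Third, I handle the critical value. Let $c_{1-\alpha}$ be the $(1-\alpha)$ quantile of $S$. Assumption~\ref{ass:bootstrap} says the conditional law of $T^*_\infty$ is Kolmogorov-close to the law of $S$ in probability. The bootstrap maximum $T^*$ over the grid differs from $T^*_\infty$ by a conditionally negligible amount; I would argue this via the same grid-mesh condition applied to the multiplier process, or otherwise absorb it into Assumption~\ref{ass:bootstrap}, which I expect to be the delicate point. Continuity of the law of $S$ at $c_{1-\alpha}$ then gives $\widehat c_{1-\alpha}\to c_{1-\alpha}$ in probability, taking $B\to\infty$ so that the Monte Carlo quantile error vanishes. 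More precisely, for any $\epsilon>0$, uniform closeness of the CDFs forces the empirical quantile into $[c_{1-\alpha}-\epsilon,c_{1-\alpha}+\epsilon]$ with probability tending to one. This uses that $S$, being the supremum of a nondegenerate Gaussian process, has a CDF that is strictly increasing near $c_{1-\alpha}$; if needed, I would invoke anti-concentration here, which is the main obstacle.

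Finally, Slutsky gives $\widehat T_\infty-\widehat c_{1-\alpha}\Rightarrow S-c_{1-\alpha}$. Hence $\mathbb P(\widehat T_\infty\le\widehat c_{1-\alpha})\to\mathbb P(S\le c_{1-\alpha})=1-\alpha$, using continuity of the CDF of $S$ at $c_{1-\alpha}$ so that $0$ is a continuity point. The $o_p(1)$ grid discrepancy from the first step does not change this limit, for the same continuity reason.
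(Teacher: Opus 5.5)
Your proposal is correct and follows essentially the same route as the paper's proof: continuous mapping plus Slutsky give $\sup_t|\sqrt n\{\widehat\tau(t)-\tau_0(t)\}/\widehat\sigma(t)|\Rightarrow\sup_t|\mathbb{Z}(t)/\sigma(t)|$, bootstrap consistency plus continuity of the limiting distribution function give $\widehat c_{1-\alpha}\to_p c_{1-\alpha}$, and a final Slutsky step yields coverage $1-\alpha$. The paper handles the two delicate points you flag no more carefully than you do: it simply asserts that the assumed bootstrap consistency applies to the grid maximum of the multiplier process, and your quantile-uniqueness worry is harmless because log-concavity of Gaussian measures makes the distribution function of $\sup_t|\mathbb{Z}(t)/\sigma(t)|$ strictly increasing wherever it is below one.
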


\subsection{Algorithm}\label{ssec:algorithm}

We summarize the implementation as an explicit procedure. All function-valued objects are evaluated on a dense grid $\{t_m\}_{m=1}^M\subset\cT$, and the supremum in the bootstrap critical value is approximated by a maximum over grid points. In practice, to avoid unstable weights in \eqref{eq:drscore_simplified}, we recommend clipping the estimated propensity scores to $[\underline c,1-\underline c]$ for a small $\underline c>0$, consistent with Assumption~\ref{ass:asym_cf}; the bootstrap repetition number $B$ should be chosen large enough to stabilize the empirical quantile.

\begin{algorithm}[h]
\caption{Cross-fitted DR estimation and uniform confidence bands}\label{alg:dr_scb}
\begin{algorithmic}[1]
\STATE \textbf{Input:} data $\{(D_i,X_i,\widehat{\Delta Y}_i(\cdot))\}_{i=1}^n$ (with $\widehat{\Delta Y}_i$ reconstructed if needed), number of folds $K$, grid $\{t_m\}_{m=1}^M$, bootstrap repetitions $B$, clipping constant $\underline c\in(0,1/2)$, nominal level $\alpha\in(0,1)$.
\STATE Randomly partition $\{1,\ldots,n\}$ into $K$ disjoint folds $\{\cI_k\}_{k=1}^K$.
\STATE Set $\widehat{p}\leftarrow n^{-1}\sum_{i=1}^n D_i$.
\FOR{$k=1,\ldots,K$}
    \STATE Fit $\widehat{\pi}^{(-k)}(\cdot)$ on $\{i\notin\cI_k\}$ using a probabilistic classifier for $D$ on $X$.
    \STATE Clip $\widehat{\pi}^{(-k)}(x)\leftarrow \min\{1-\underline c,\max(\underline c,\widehat{\pi}^{(-k)}(x))\}$.
    \STATE Fit $\widehat{\mu}_a^{(-k)}(\cdot)(t_m)$ on $\{i\notin\cI_k,\,D_i=a\}$ by regressing $\widehat{\Delta Y}_i(t_m)$ on $X_i$, for $a\in\{0,1\}$ and all $m=1,\ldots,M$.
    \FOR{each $i\in\cI_k$ and each $m=1,\ldots,M$}
        \STATE Compute the score contribution $\widehat{\psi}_i(t_m)\leftarrow \psi(W_i;\widehat{\eta}^{(-k)})(t_m)$ using \eqref{eq:drscore_simplified} with $p=\widehat{p}$ and $\Delta Y_i(t_m)=\widehat{\Delta Y}_i(t_m)$.
    \ENDFOR
\ENDFOR
\STATE Compute $\widehat{\tau}(t_m)\leftarrow n^{-1}\sum_{i=1}^n \widehat{\psi}_i(t_m)$ for all $m=1,\ldots,M$.
\STATE Compute ratio-corrected influence contributions $\widehat{\phi}_i(t_m)\leftarrow \widehat{\psi}_i(t_m)-D_i\widehat{\tau}(t_m)/\widehat p$ for all $i,m$.
\STATE Estimate $\widehat\sigma^2(t_m)\leftarrow n^{-1}\sum_{i=1}^n\{\widehat{\phi}_i(t_m)-\bar{\widehat\phi}(t_m)\}^2$ for all $m$.
\FOR{$b=1,\ldots,B$}
    \STATE Draw i.i.d.\ multipliers $\xi_1^{(b)},\ldots,\xi_n^{(b)}$ with $\E(\xi)=0$ and $\text{Var}(\xi)=1$ (e.g., Rademacher or standard normal).
    \STATE Compute $\mathbb{G}^{*(b)}(t_m)\leftarrow n^{-1/2}\sum_{i=1}^n \xi_i^{(b)}\,\{\widehat{\phi}_i(t_m)-\bar{\widehat\phi}(t_m)\}$ for all $m=1,\ldots,M$.
    \STATE Compute $T^{*(b)}\leftarrow \max_{1\le m\le M}\big|\mathbb{G}^{*(b)}(t_m)/\widehat\sigma(t_m)\big|$.
\ENDFOR
\STATE Set $\widehat{c}_{1-\alpha}\leftarrow$ empirical $(1-\alpha)$ quantile of $\{T^{*(b)}\}_{b=1}^B$.
\STATE \textbf{Output:} estimator $\widehat{\tau}(t_m)$ and SCB $\widehat{\tau}(t_m)\pm \widehat{c}_{1-\alpha}\widehat\sigma(t_m)/\sqrt{n}$ for $m=1,\ldots,M$.
\end{algorithmic}
\end{algorithm}

\section{Simulation Study}\label{sec:simulation}
This section evaluates the finite-sample performance of the proposed cross-fitted doubly robust (CF--DR) estimator and its inference procedures for the functional ATT curve $\tau_0(\cdot)$. The simulation design reflects key features of the empirical setting targeted by this paper: (i) observational two-period DiD with covariate confounding, (ii) flexible nuisance components that may be misspecified under simple parametric models, and (iii) discretely observed noisy functional outcomes requiring a reconstruction step. We report estimation error, pointwise interval coverage, and uniform confidence band (SCB) coverage, highlighting the practical benefits of orthogonalization and sample splitting \citep{bang2005doubly,chernozhukov2018dml}.

\subsection{Data Generation}\label{ssec:sim_dgp}

\subsubsection{Covariates and Treatment}
In each replication, we generate $n$ i.i.d.\ units with baseline covariates $X_i=(X_{i1},\ldots,X_{ip})^\top\in\mathbb{R}^p$. Unless otherwise stated, we set $p=20$ and generate $X_i\sim N(0,\Sigma_X)$ with $(\Sigma_X)_{jk}=0.3^{|j-k|}$. Treatment-group membership $D_i\in\{0,1\}$ is generated from a nonlinear logistic propensity score
\[
\pi_0(X_i)=\mathbb{P}(D_i=1\mid X_i)=\Lambda\!\big(\gamma_0 + \gamma_1 X_{i1} + \gamma_2 X_{i2} + \gamma_3 X_{i3}
+ \gamma_4 X_{i1}X_{i2} + \gamma_5 \sin(X_{i4})\big),
\]
where $\Lambda(u)=\exp(u)\{1+\exp(u)\}^{-1}$ and $(\gamma_1,\ldots,\gamma_5)=(0.8,0.6,-0.6,0.5,0.5)$. We choose $\gamma_0$ so that $\mathbb{P}(D=1)\approx 0.5$. This mechanism induces confounding through both linear and nonlinear features and thus stresses nuisance estimation beyond simple parametric specifications. In implementation, we clip estimated propensity scores to $[\underline c,1-\underline c]$ with $\underline c=0.01$ to avoid ill-posed weighting in scenarios with near-violations of overlap.

\subsubsection{Two-period Functional Outcomes under No Treatment}
We take $\cT=[0,1]$ and generate smooth functional outcomes via a low-rank factor representation. Let $\{\varphi_k\}_{k=1}^K$ be orthonormal basis functions on $\cT$ (Fourier bases in implementation). Draw independent coefficients $\xi_{ik}\sim N(0,\lambda_k)$ with $(\lambda_1,\ldots,\lambda_K)=(1,0.5,0.25,0.15,0.1)$ and $K=5$, and define a latent functional random effect
\[
U_i(t)=\sum_{k=1}^K \xi_{ik}\varphi_k(t),\qquad t\in\cT.
\]
This yields heterogeneous baseline shapes and within-curve dependence, standard in FDA simulation designs \citep{ramsay2005fda,hsing2015theoretical}.

Untreated potential outcomes follow an additive two-period structure:
\[
Y^0_{i0}(t)=m_0(t) + X_i^\top \beta_0(t) + U_i(t),\qquad
Y^0_{i1}(t)=Y^0_{i0}(t)+ g(t) + X_i^\top \beta_\Delta(t) + V_i(t),
\]
where $m_0(t)$ is a baseline mean curve, $g(t)$ is a common time shock, $\beta_0(t)$ and $\beta_\Delta(t)$ are $p$-dimensional coefficient functions, and $V_i(t)$ is an innovation process independent of $(X_i,D_i)$ generated as a $K$-term expansion with smaller variances $(0.25\lambda_k)$). This implies
\[
\Delta Y_i^0(t)=Y^0_{i1}(t)-Y^0_{i0}(t)= g(t) + X_i^\top \beta_\Delta(t) + V_i(t),
\]
so that covariates can affect untreated trends. The design is constructed so that the covariate-adjusted functional parallel trends condition holds: for all $t\in\cT$,
\[
\E\{\Delta Y_i^0(t)\mid X_i,D_i=1\}=\E\{\Delta Y_i^0(t)\mid X_i,D_i=0\}.
\]
This mirrors the identification conditions used in the main text (Section~\ref{sec:identification}).

\subsubsection{Treatment Effects in the Post Period and Target Estimand}
Treatment is implemented only in the post period. We impose no anticipation: $Y^1_{i0}(\cdot)=Y^0_{i0}(\cdot)$ almost surely. Post-treatment outcomes satisfy
\[
Y^1_{i1}(t)=Y^0_{i1}(t)+\tau_i(t).
\]
The target is the functional ATT curve
\[
\tau_0(t)=\E\{Y^1_{i1}(t)-Y^0_{i1}(t)\mid D_i=1\}=\E\{\tau_i(t)\mid D_i=1\},\qquad t\in\cT.
\]
We consider two representative effect shapes:
\[
\tau^{(\mathrm{S})}(t)=0.6\sin(2\pi t),\qquad
\tau^{(\mathrm{B})}(t)=0.8\exp\!\Big(-\frac{(t-0.30)^2}{2\cdot 0.06^2}\Big)-0.6\exp\!\Big(-\frac{(t-0.75)^2}{2\cdot 0.08^2}\Big),
\]
where $\tau^{(\mathrm{S})}$ is globally smooth and oscillatory, while $\tau^{(\mathrm{B})}$ contains a localized positive ``bump'' and a negative ``dip'', representing effects concentrated on subregions of $\cT$. To examine heterogeneity, in selected scenarios we set
\[
\tau_i(t)=\tau(t)\{1+0.4\tanh(X_{i1})\},
\]
so that selection into $D=1$ alters the conditional average effect, emphasizing the ATT interpretation.

\subsection{Discrete Observation and Reconstruction}\label{ssec:sim_recon}

To emulate functional measurement, we do not observe $Y_{iT}(\cdot)$ directly. Instead, for each $T\in\{0,1\}$ we observe noisy samples
\[
Z_{iTj}=Y_{iT}(t_{iTj})+\varepsilon_{iTj},\qquad j=1,\ldots,m_{iT},
\]
with $\varepsilon_{iTj}\sim N(0,\sigma_\varepsilon^2)$ independent across $(i,T,j)$. We study two observation regimes.

In the dense regime, we set $m_{iT}=M$ and $t_{iTj}=t_j=(j-1)/(M-1)$ with $M=101$. Each curve is reconstructed using penalized cubic B-splines with a second-derivative roughness penalty, selecting the smoothing parameter by GCV, as in standard FDA practice \citep{ramsay2005fda}. In the sparse regime, we draw $m_{iT}\in\{10,15,20\}$ uniformly at random and sample $t_{iTj}\stackrel{\mathrm{iid}}{\sim}\mathrm{Unif}(0,1)$, then sort them. Curves are reconstructed using FPCA/PACE-type methods that pool information across subjects and exploit an estimated covariance structure \citep{yao2005fpca,horvath2012inference}. The sparse regime probes sensitivity to reconstruction under irregular and noisy sampling.

After reconstruction we compute $\widehat{\Delta Y}_i(\cdot)=\widehat{Y}_{i1}(\cdot)-\widehat{Y}_{i0}(\cdot)$ and run all causal estimators with $\Delta Y_i$ replaced by $\widehat{\Delta Y}_i$. We vary $\sigma_\varepsilon\in\{0.1,0.3\}$ to assess robustness to measurement noise.

\subsection{Estimators and nuisance learners}\label{ssec:sim_estimators}

We compare the following estimators of $\tau_0(\cdot)$, evaluated on the grid $\{t_m\}_{m=1}^M$.

\begin{itemize}
\item \textbf{Proposed CF--DR (ours).} We implement the cross-fitted doubly robust estimator based on the Neyman-orthogonal score and $K=5$ sample splits. The design follows the orthogonal-score template for semiparametric inference with flexible nuisance learning \citep{chernozhukov2018dml}.
\item \textbf{Cross-fitted plug-in regression (OR).} We estimate $\mu_a(x)(t)=\E\{\Delta Y(t)\mid X=x,D=a\}$ and compute the regression representation by averaging $\widehat{\mu}_1(X_i)(t)-\widehat{\mu}_0(X_i)(t)$ over treated units. This isolates sensitivity to outcome-regression specification.
\item \textbf{Inverse probability weighting (IPW).} We implement the weighting representation using $\widehat{\pi}(X)$ with clipping. This isolates sensitivity to propensity estimation and overlap.
\item \textbf{Naive DiD without covariate adjustment.} We compute $\widehat{\tau}_{\mathrm{naive}}(t)=\bar{\Delta Y}_1(t)-\bar{\Delta Y}_0(t)$. This estimator is generally biased when untreated trends depend on $X$ and $X$ differs across groups.
\item \textbf{Oracle benchmark.} For reference we report an oracle estimator that uses the true $\pi_0$ and the true conditional means $\mu_a$ under the data generation, and uses the latent (noise-free) $\Delta Y(\cdot)$. This benchmark isolates intrinsic sampling variability from nuisance and reconstruction errors.
\end{itemize}

We consider two nuisance-learning regimes: (a) Parametric models (logistic regression for $\pi$ and linear models for basis coefficients in a spline expansion for $\Delta Y$), and
(b) Flexible models (random forests for $\pi$ and random-forest regression for basis coefficients). The flexible regime is intended to match the orthogonal-score motivation: nuisance learners may converge slower than $n^{-1/2}$, yet the CF--DR estimator can remain first-order valid under product-rate conditions \citep{chernozhukov2018dml}.

\subsection{Inference and Performance}\label{ssec:sim_metrics}

For each estimator we compute pointwise confidence intervals using influence-function-based standard errors. For CF--DR we also construct SCBs using the multiplier bootstrap described in Section~\ref{sec:inference}, with $B=1000$ bootstrap draws and standard-normal multipliers. The SCB targets global coverage over $\cT$ and is motivated by Gaussian approximation for suprema \citep{chernozhukov2014gaussian}; related SCB constructions in FDA include \citet{degras2011scb} and \citet{liebl2023ffscb}. We focus SCB reporting on CF--DR, since OR and IPW can exhibit first-order instability under nuisance misspecification or poor overlap.

Let $\{t_m\}_{m=1}^M$ be the evaluation grid. For each replication, we summarize the estimation error curve by the average bias (Bias) and Mean Absolute Bias (MAB),
\[
\mathrm{Bias}=\frac{1}{M}\sum_{m=1}^M\{\widehat{\tau}(t_m)-\tau_0(t_m)\},
\qquad
\mathrm{MAB}=\frac{1}{M}\sum_{m=1}^M\big|\widehat{\tau}(t_m)-\tau_0(t_m)\big|,
\]
and
\[
\mathrm{ISE}=\frac{1}{M}\sum_{m=1}^M \{\widehat{\tau}(t_m)-\tau_0(t_m)\}^2,\qquad
\mathrm{SupErr}=\max_{1\le m\le M}\big|\widehat{\tau}(t_m)-\tau_0(t_m)\big|.
\]
The tables report Monte Carlo averages of these four quantities. We also use pointwise RMSE curves in the figures to show where along the functional domain the largest errors occur.

For pointwise inference we report empirical coverage at representative locations (e.g., $t\in\{0.25,0.50,0.75\}$) and average pointwise width. For uniform inference we report SCB coverage
\[
\widehat{\mathrm{Cov}}_{\mathrm{SCB}}=\frac{1}{R}\sum_{r=1}^R
\1\Big\{\tau_0(t_m)\in \mathcal{C}^{(r)}_{1-\alpha}(t_m)\ \ \forall m=1,\ldots,M\Big\},
\]
and the average band width $M^{-1}\sum_{m=1}^M \mathrm{Width}(t_m)$. These metrics quantify whether the SCB achieves nominal global coverage and whether it is informative (not overly conservative).

\subsection{Scenarios}\label{ssec:sim_scenarios}

We use a factorial design across sample size, overlap strength, nuisance complexity, and observation regime. Unless otherwise stated, we set $\alpha=0.05$, $K=5$ folds, and evaluate curves on $M=101$ equally spaced points in $\cT=[0,1]$. Sample sizes are $n\in\{200,400,800\}$, and we use $R=200$ Monte Carlo replications.
\begin{table}[h]
\centering
\caption{Simulation scenarios.}
\label{tab:sim_scenarios}
\vspace{0.2cm}
\resizebox{\textwidth}{!}{
\begin{tabular}{lcccc}
\toprule
Scenario & Effect shape & Nuisance regime & Observation regime & Noise/overlap\\
\hline
S1 & $\tau^{(\mathrm{S})}$ & parametric (well-specified) & dense grid ($M=101$) & $\sigma_\varepsilon=0.1$ \\
S2 & $\tau^{(\mathrm{B})}$ & misspecified propensity, parametric outcome & dense grid & $\sigma_\varepsilon=0.1$ \\
S3 & $\tau^{(\mathrm{B})}$ & flexible ML & dense grid & $\sigma_\varepsilon=0.3$ \\
S4 & $\tau^{(\mathrm{B})}$ & flexible ML & sparse irregular & $\sigma_\varepsilon=0.3$ \\
S5 & heterogeneous $\tau_i$ & flexible ML & dense grid & $\sigma_\varepsilon=0.1$ \\
S6 & $\tau^{(\mathrm{B})}$ & flexible ML & dense grid & poor overlap (stronger $\gamma$) \\
\bottomrule
\end{tabular}}
\end{table}

We focus on six representative scenarios that isolate the principal phenomena. Table~\ref{tab:sim_scenarios} summarizes the design. Scenario~S2 deliberately misspecifies the parametric propensity model while keeping the outcome regression parametric; Scenario~S4 probes reconstruction sensitivity under sparse noisy sampling; Scenario~S6 amplifies the propensity-score linear predictor (multiplying $(\gamma_1,\ldots,\gamma_5)$ by $1.5$) to create near-violations of overlap, where clipping becomes essential and naive IPW can become unstable.

\subsection{Results}\label{ssec:sim_results}

We present the Monte Carlo results in two complementary formats. Tables~\ref{tab:sim_main_dense}--\ref{tab:sim_inference} summarize scalar accuracy and inferential performance for $n=200,400,800$, while Figures~\ref{fig:sim_cfdr_all_n200}--\ref{fig:sim_pointwise_rmse_n800} show how the estimators behave over the functional domain. All numerical entries are based on $R=200$ Monte Carlo replications.

Tables~\ref{tab:sim_main_dense} and \ref{tab:sim_main_sparse} report Bias, MAB, ISE, and SupErr. The four metrics are intentionally read together. Bias captures the average signed error and can be close to zero even when positive and negative curve errors cancel. MAB, ISE, and SupErr are therefore more informative about curve-level accuracy. Across the feasible estimators, CF--DR is the most stable method overall: it is not uniformly best in every simple design, but it gives the smallest MAB, ISE, and SupErr in most of the complex settings and its errors decrease systematically as $n$ increases.

\begin{table}[h]
\centering
\caption{Estimation accuracy in dense-grid scenarios (S1--S3).}
\label{tab:sim_main_dense}
\scriptsize
\resizebox{\textwidth}{!}{
\begin{tabular}{llrrrrrrrrrrrr}
\toprule
& & \multicolumn{3}{c}{Bias} & \multicolumn{3}{c}{MAB} & \multicolumn{3}{c}{ISE} & \multicolumn{3}{c}{SupErr}\\
\cmidrule(lr){3-5}\cmidrule(lr){6-8}\cmidrule(lr){9-11}\cmidrule(lr){12-14}
Scenario & Method & $n=200$ & $n=400$ & $n=800$ & $n=200$ & $n=400$ & $n=800$ & $n=200$ & $n=400$ & $n=800$ & $n=200$ & $n=400$ & $n=800$\\
\midrule
S1 & CF--DR & $-$0.0145 & $-$0.0056 & 0.0040 & 0.1792 & 0.1149 & 0.0694 & 0.0568 & 0.0223 & 0.0077 & 0.3625 & 0.2333 & 0.1408\\
S1 & OR & $-$0.0056 & $-$0.0006 & $-$0.0007 & 0.1262 & 0.0883 & 0.0560 & 0.0255 & 0.0120 & 0.0049 & 0.2558 & 0.1804 & 0.1135\\
S1 & IPW & $-$0.0306 & $-$0.0164 & 0.0009 & 0.1837 & 0.1206 & 0.0748 & 0.0735 & 0.0276 & 0.0092 & 0.3789 & 0.2554 & 0.1559\\
S1 & Naive DiD & $-$0.0017 & $-$0.0032 & $-$0.0011 & 0.2109 & 0.2037 & 0.1974 & 0.0666 & 0.0609 & 0.0565 & 0.4369 & 0.4150 & 0.3929\\
S1 & Oracle & $-$0.0071 & $-$0.0090 & 0.0050 & 0.1268 & 0.0930 & 0.0619 & 0.0297 & 0.0150 & 0.0067 & 0.2552 & 0.1882 & 0.1246\\
\addlinespace
S2 & CF--DR & $-$0.0058 & 0.0013 & $-$0.0007 & 0.1434 & 0.0876 & 0.0653 & 0.0329 & 0.0122 & 0.0066 & 0.3072 & 0.1973 & 0.1524\\
S2 & OR & $-$0.0007 & $-$0.0004 & 0.0014 & 0.1310 & 0.0830 & 0.0604 & 0.0271 & 0.0109 & 0.0057 & 0.2822 & 0.1896 & 0.1425\\
S2 & IPW & $-$0.0033 & 0.0073 & 0.0056 & 0.1210 & 0.0833 & 0.0683 & 0.0235 & 0.0112 & 0.0072 & 0.2639 & 0.1965 & 0.1629\\
S2 & Naive DiD & \red{$-$}0.0030 & 0.0005 & 0.0007 & 0.2093 & 0.2033 & 0.1987 & 0.0648 & 0.0607 & 0.0574 & 0.4512 & 0.4299 & 0.4221\\
S2 & Oracle & $-$0.0046 & 0.0164 & 0.0015 & 0.1167 & 0.0957 & 0.0712 & 0.0265 & 0.0193 & 0.0096 & 0.2331 & 0.1885 & 0.1440\\
\addlinespace
S3 & CF--DR & 0.0068 & 0.0078 & 0.0027 & 0.1248 & 0.0927 & 0.0718 & 0.0247 & 0.0137 & 0.0082 & 0.2909 & 0.2256 & 0.1834\\
S3 & OR & 0.0080 & 0.0074 & 0.0026 & 0.1405 & 0.1094 & 0.0872 & 0.0314 & 0.0187 & 0.0119 & 0.3304 & 0.2631 & 0.2165\\
S3 & IPW & 0.0132 & 0.0195 & 0.0204 & 0.1541 & 0.1313 & 0.1153 & 0.0365 & 0.0261 & 0.0200 & 0.3594 & 0.3109 & 0.2827\\
S3 & Naive DiD & 0.0036 & 0.0062 & 0.0025 & 0.2078 & 0.2016 & 0.1985 & 0.0648 & 0.0595 & 0.0575 & 0.4639 & 0.4381 & 0.4265\\
S3 & Oracle & 0.0056 & 0.0119 & 0.0026 & 0.1231 & 0.0823 & 0.0682 & 0.0302 & 0.0117 & 0.0083 & 0.2468 & 0.1674 & 0.1383\\
\bottomrule
\end{tabular}}
\end{table}

\begin{table}[h]
\centering
\caption{Estimation accuracy in challenging scenarios (S4--S6).}
\label{tab:sim_main_sparse}
\scriptsize
\resizebox{\textwidth}{!}{
\begin{tabular}{llrrrrrrrrrrrr}
\toprule
& & \multicolumn{3}{c}{Bias} & \multicolumn{3}{c}{MAB} & \multicolumn{3}{c}{ISE} & \multicolumn{3}{c}{SupErr}\\
\cmidrule(lr){3-5}\cmidrule(lr){6-8}\cmidrule(lr){9-11}\cmidrule(lr){12-14}
Scenario & Method & $n=200$ & $n=400$ & $n=800$ & $n=200$ & $n=400$ & $n=800$ & $n=200$ & $n=400$ & $n=800$ & $n=200$ & $n=400$ & $n=800$\\
\midrule
S4 & CF--DR & $-$0.0042 & 0.0018 & 0.0009 & 0.1825 & 0.1567 & 0.1428 & 0.0495 & 0.0355 & 0.0287 & 0.4284 & 0.3673 & 0.3361\\
S4 & OR & $-$0.0043 & 0.0008 & 0.0003 & 0.1924 & 0.1651 & 0.1490 & 0.0544 & 0.0390 & 0.0308 & 0.4372 & 0.3737 & 0.3373\\
S4 & IPW & 0.0034 & 0.0152 & 0.0166 & 0.2059 & 0.1829 & 0.1697 & 0.0610 & 0.0471 & 0.0398 & 0.4570 & 0.4008 & 0.3668\\
S4 & Naive DiD & $-$0.0052 & 0.0008 & $-$0.0010 & 0.2558 & 0.2493 & 0.2482 & 0.0905 & 0.0824 & 0.0797 & 0.5332 & 0.5023 & 0.4927\\
S4 & Oracle & $-$0.0026 & 0.0031 & $-$0.0035 & 0.1305 & 0.0836 & 0.0634 & 0.0348 & 0.0123 & 0.0066 & 0.2612 & 0.1708 & 0.1290\\
\addlinespace
S5 & CF--DR & $-$0.0016 & 0.0019 & $-$0.0043 & 0.1206 & 0.0924 & 0.0714 & 0.0237 & 0.0136 & 0.0080 & 0.2866 & 0.2193 & 0.1772\\
S5 & OR & 0.0006 & 0.0010 & $-$0.0042 & 0.1368 & 0.1088 & 0.0877 & 0.0302 & 0.0187 & 0.0118 & 0.3265 & 0.2548 & 0.2116\\
S5 & IPW & 0.0070 & 0.0140 & 0.0142 & 0.1495 & 0.1303 & 0.1153 & 0.0351 & 0.0257 & 0.0197 & 0.3509 & 0.3052 & 0.2777\\
S5 & Naive DiD & $-$0.0011 & $-$0.0021 & $-$0.0028 & 0.2040 & 0.2020 & 0.1983 & 0.0630 & 0.0597 & 0.0572 & 0.4526 & 0.4332 & 0.4214\\
S5 & Oracle & 0.0127 & $-$0.0050 & $-$0.0032 & 0.1332 & 0.0911 & 0.0649 & 0.0380 & 0.0167 & 0.0077 & 0.2729 & 0.1794 & 0.1328\\
\addlinespace
S6 & CF--DR & $-$0.0042 & $-$0.0011 & 0.0006 & 0.1483 & 0.1165 & 0.0945 & 0.0344 & 0.0210 & 0.0137 & 0.3374 & 0.2681 & 0.2249\\
S6 & OR & $-$0.0018 & $-$0.0013 & 0.0005 & 0.1684 & 0.1376 & 0.1153 & 0.0443 & 0.0290 & 0.0200 & 0.3802 & 0.3132 & 0.2660\\
S6 & IPW & 0.0167 & 0.0268 & 0.0324 & 0.1806 & 0.1600 & 0.1490 & 0.0496 & 0.0377 & 0.0320 & 0.4027 & 0.3673 & 0.3466\\
S6 & Naive DiD & $-$0.0036 & 0.0008 & $-$0.0010 & 0.2473 & 0.2422 & 0.2438 & 0.0907 & 0.0853 & 0.0861 & 0.5256 & 0.5119 & 0.5040\\
S6 & Oracle & $-$0.0114 & $-$0.0044 & 0.0070 & 0.1535 & 0.1228 & 0.0879 & 0.0526 & 0.0305 & 0.0143 & 0.3112 & 0.2459 & 0.1769\\
\bottomrule
\end{tabular}}
\end{table}

Table~\ref{tab:sim_main_dense} shows the dense-grid scenarios. In S1, where the outcome model is correctly specified and the design is comparatively simple, OR has the smallest feasible-estimator error for all three sample sizes. For example, at $n=800$ its MAB is 0.0560 compared with 0.0694 for CF--DR. This is expected: a correctly specified low-dimensional regression can be more efficient because it does not pay the finite-sample cost of estimating and combining both nuisance components. S2 is less clear-cut, with IPW best at $n=200$ and OR slightly better at larger sample sizes. The more important dense result is S3, where flexible nuisance structure and a nonlinear effect shape make CF--DR the best feasible method across all sample sizes. Its ISE falls from 0.0247 at $n=200$ to 0.0082 at $n=800$, while OR and IPW remain larger.

Table~\ref{tab:sim_main_sparse} contains the more challenging designs. S4 should be read primarily as a reconstruction-stress scenario rather than as a setting designed to show dominance of one causal estimator. Sparse noisy functional observations raise the error of every feasible estimator because reconstruction smooths and attenuates the localized bump--dip signal before the causal adjustment step is applied. This is consistent with the theoretical requirement that reconstruction error be asymptotically negligible, of order smaller than $n^{-1/2}$, for first-order inference to be unaffected. In S5 and S6, which emphasize heterogeneous effects and weaker overlap under dense observation, CF--DR dominates OR and IPW across MAB, ISE, and SupErr for all three sample sizes. Naive DiD is consistently the least accurate non-oracle method in both dense and challenging designs. Its Bias can look small because positive and negative errors cancel, but its MAB, ISE, and SupErr remain large, showing substantial curve-level distortion.

The oracle estimator provides a useful benchmark but is not implementable, because it uses true nuisance components. It generally has the smallest error, especially as $n$ grows, and therefore indicates the finite-sample gap remaining after nuisance estimation and curve reconstruction. The main accuracy message is that CF--DR stays close to the best feasible estimator in simple settings and becomes clearly stronger in the complex dense-observation settings S3, S5, and S6. S4 plays a different role: it isolates the finite-sample impact of sparse noisy reconstruction, which can affect all feasible causal estimators before the causal adjustment step is even applied.

Table~\ref{tab:sim_inference} reports pointwise coverage at $t=0.25,0.50,0.75$ and average pointwise interval width for all estimators, again with separate columns for $n=200,400,800$. SCB coverage and average band width are reported for CF--DR only, since the simultaneous band is constructed from the CF--DR influence function.

\begin{sidewaystable}[!]
\centering
\caption{Inferential performance. Pointwise coverage and interval width are reported for all estimators; SCB coverage and band width are reported for CF--DR only. Entries are Monte Carlo averages over 200 replications.}
\label{tab:sim_inference}
\scriptsize
\resizebox{\textwidth}{!}{
\begin{tabular}{llrrrrrrrrrrrrrrrrrr}
\toprule
& & \multicolumn{3}{c}{Cov. (0.25)} & \multicolumn{3}{c}{Cov. (0.50)} & \multicolumn{3}{c}{Cov. (0.75)} & \multicolumn{3}{c}{Avg. CI width} & \multicolumn{3}{c}{SCB cov.} & \multicolumn{3}{c}{Avg. band width}\\
\cmidrule(lr){3-5}\cmidrule(lr){6-8}\cmidrule(lr){9-11}\cmidrule(lr){12-14}\cmidrule(lr){15-17}\cmidrule(lr){18-20}
Scenario & Method & $n=200$ & $n=400$ & $n=800$ & $n=200$ & $n=400$ & $n=800$ & $n=200$ & $n=400$ & $n=800$ & $n=200$ & $n=400$ & $n=800$ & $n=200$ & $n=400$ & $n=800$ & $n=200$ & $n=400$ & $n=800$\\
\midrule
S1 & CF--DR & 0.920 & 0.850 & 0.940 & 0.870 & 0.890 & 0.945 & 0.895 & 0.915 & 0.905 & 0.707 & 0.467 & 0.318 & 0.850 & 0.835 & 0.910 & 0.957 & 0.635 & 0.433\\
S1 & OR & 0.570 & 0.325 & 0.295 & 0.580 & 0.375 & 0.300 & 0.565 & 0.390 & 0.280 & 0.257 & 0.114 & 0.055 & -- & -- & -- & -- & -- & --\\
S1 & IPW & 0.930 & 0.905 & 0.965 & 0.955 & 0.970 & 0.970 & 0.950 & 0.940 & 0.955 & 0.881 & 0.574 & 0.386 & -- & -- & -- & -- & -- & --\\
S1 & Naive DiD & 0.175 & 0.040 & 0.000 & 0.955 & 0.960 & 0.960 & 0.200 & 0.015 & 0.000 & 0.438 & 0.309 & 0.219 & -- & -- & -- & -- & -- & --\\
S1 & Oracle & 0.970 & 0.960 & 0.965 & 0.935 & 0.955 & 0.965 & 0.940 & 0.965 & 0.950 & 0.602 & 0.432 & 0.304 & -- & -- & -- & -- & -- & --\\
\addlinespace
S2 & CF--DR & 0.875 & 0.905 & 0.860 & 0.885 & 0.925 & 0.890 & 0.885 & 0.895 & 0.885 & 0.597 & 0.387 & 0.260 & 0.790 & 0.830 & 0.740 & 0.814 & 0.529 & 0.357\\
S2 & OR & 0.555 & 0.380 & 0.250 & 0.510 & 0.415 & 0.320 & 0.575 & 0.450 & 0.310 & 0.254 & 0.116 & 0.055 & -- & -- & -- & -- & -- & --\\
S2 & IPW & 0.900 & 0.885 & 0.815 & 0.965 & 0.975 & 0.985 & 0.940 & 0.940 & 0.960 & 0.594 & 0.416 & 0.292 & -- & -- & -- & -- & -- & --\\
S2 & Naive DiD & 0.105 & 0.005 & 0.000 & 0.935 & 0.935 & 0.950 & 0.255 & 0.030 & 0.000 & 0.436 & 0.310 & 0.219 & -- & -- & -- & -- & -- & --\\
S2 & Oracle & 0.965 & 0.970 & 0.950 & 0.945 & 0.970 & 0.965 & 0.940 & 0.955 & 0.940 & 0.555 & 0.463 & 0.331 & -- & -- & -- & -- & -- & --\\
\addlinespace
S3 & CF--DR & 0.695 & 0.590 & 0.455 & 0.965 & 0.955 & 0.920 & 0.820 & 0.840 & 0.770 & 0.449 & 0.316 & 0.223 & 0.655 & 0.540 & 0.380 & 0.632 & 0.444 & 0.314\\
S3 & OR & 0.050 & 0.035 & 0.030 & 0.180 & 0.230 & 0.175 & 0.165 & 0.080 & 0.075 & 0.082 & 0.055 & 0.037 & -- & -- & -- & -- & -- & --\\
S3 & IPW & 0.445 & 0.180 & 0.030 & 0.985 & 0.950 & 0.970 & 0.730 & 0.550 & 0.370 & 0.470 & 0.331 & 0.233 & -- & -- & -- & -- & -- & --\\
S3 & Naive DiD & 0.160 & 0.010 & 0.000 & 0.960 & 0.950 & 0.970 & 0.285 & 0.040 & 0.000 & 0.442 & 0.311 & 0.221 & -- & -- & -- & -- & -- & --\\
S3 & Oracle & 0.955 & 0.970 & 0.940 & 0.970 & 0.965 & 0.950 & 0.965 & 0.960 & 0.950 & 0.584 & 0.414 & 0.322 & -- & -- & -- & -- & -- & --\\
\addlinespace
S4 & CF--DR & 0.920 & 0.865 & 0.715 & 0.925 & 0.870 & 0.765 & 0.905 & 0.805 & 0.580 & 0.558 & 0.397 & 0.281 & 0.365 & 0.020 & 0.000 & 0.778 & 0.552 & 0.392\\
S4 & OR & 0.290 & 0.215 & 0.180 & 0.205 & 0.180 & 0.155 & 0.305 & 0.205 & 0.140 & 0.102 & 0.069 & 0.046 & -- & -- & -- & -- & -- & --\\
S4 & IPW & 0.940 & 0.925 & 0.895 & 0.930 & 0.880 & 0.680 & 0.975 & 0.985 & 0.950 & 0.571 & 0.402 & 0.284 & -- & -- & -- & -- & -- & --\\
S4 & Naive DiD & 0.825 & 0.625 & 0.450 & 0.930 & 0.875 & 0.800 & 0.875 & 0.785 & 0.605 & 0.542 & 0.382 & 0.271 & -- & -- & -- & -- & -- & --\\
S4 & Oracle & 0.955 & 0.975 & 0.950 & 0.945 & 0.950 & 0.955 & 0.970 & 0.965 & 0.965 & 0.616 & 0.415 & 0.311 & -- & -- & -- & -- & -- & --\\
\addlinespace
S5 & CF--DR & 0.700 & 0.575 & 0.460 & 0.930 & 0.935 & 0.925 & 0.800 & 0.750 & 0.690 & 0.451 & 0.316 & 0.223 & 0.695 & 0.580 & 0.440 & 0.626 & 0.438 & 0.309\\
S5 & OR & 0.075 & 0.020 & 0.010 & 0.235 & 0.170 & 0.160 & 0.175 & 0.140 & 0.050 & 0.099 & 0.068 & 0.045 & -- & -- & -- & -- & -- & --\\
S5 & IPW & 0.460 & 0.115 & 0.025 & 0.975 & 0.980 & 0.960 & 0.720 & 0.525 & 0.250 & 0.477 & 0.333 & 0.235 & -- & -- & -- & -- & -- & --\\
S5 & Naive DiD & 0.175 & 0.010 & 0.000 & 0.940 & 0.890 & 0.920 & 0.290 & 0.050 & 0.000 & 0.447 & 0.315 & 0.222 & -- & -- & -- & -- & -- & --\\
S5 & Oracle & 0.975 & 0.965 & 0.960 & 0.975 & 0.945 & 0.965 & 0.965 & 0.935 & 0.955 & 0.620 & 0.429 & 0.321 & -- & -- & -- & -- & -- & --\\
\addlinespace
S6 & CF--DR & 0.455 & 0.430 & 0.270 & 0.890 & 0.875 & 0.870 & 0.645 & 0.580 & 0.525 & 0.445 & 0.320 & 0.228 & 0.470 & 0.340 & 0.150 & 0.620 & 0.445 & 0.316\\
S6 & OR & 0.035 & 0.030 & 0.000 & 0.225 & 0.215 & 0.210 & 0.075 & 0.050 & 0.030 & 0.086 & 0.061 & 0.041 & -- & -- & -- & -- & -- & --\\
S6 & IPW & 0.170 & 0.030 & 0.000 & 0.950 & 0.920 & 0.890 & 0.460 & 0.270 & 0.075 & 0.451 & 0.318 & 0.224 & -- & -- & -- & -- & -- & --\\
S6 & Naive DiD & 0.035 & 0.000 & 0.000 & 0.940 & 0.925 & 0.910 & 0.075 & 0.000 & 0.000 & 0.434 & 0.308 & 0.218 & -- & -- & -- & -- & -- & --\\
S6 & Oracle & 0.945 & 0.975 & 0.980 & 0.955 & 0.955 & 0.965 & 0.955 & 0.970 & 0.935 & 0.720 & 0.566 & 0.401 & -- & -- & -- & -- & -- & --\\
\bottomrule
\end{tabular}}
\end{sidewaystable}

The pointwise coverage results clarify the trade-off between point estimation and uncertainty quantification. OR intervals are often much too narrow, especially in S3--S6, leading to severe under-coverage despite reasonable point estimates in some scenarios. IPW sometimes attains good pointwise coverage, but it does so with wider intervals and worse point-estimation error in the complex settings. Naive DiD can have coverage equal to zero at selected time points. A value such as 0.00 means that, across the 200 replications, the nominal pointwise interval never covered the true effect at that location. This occurs because Naive DiD ignores the covariate-dependent untreated trend $X_i^\top\beta_\Delta(t)$; the resulting systematic bias can shift the interval away from the truth, particularly near the bump and dip regions of the effect curve.

CF--DR provides more balanced pointwise inference than OR and Naive DiD. In S1 and S2, its pointwise coverage is generally close to the nominal level and its average interval width decreases substantially with $n$. For example, in S1 the average pointwise width falls from 0.707 at $n=200$ to 0.318 at $n=800$. Coverage is less satisfactory in S3, S5, and S6 at $t=0.25$ and $t=0.75$, where the target curve has high curvature and the nuisance/overlap difficulty is more pronounced. The SCB results are more demanding. CF--DR simultaneous coverage is acceptable in the easier dense settings but deteriorates in S4 and S6, where sparse reconstruction and weak overlap leave residual bias that is not fully absorbed by the band. Oracle pointwise coverage is generally closest to nominal because it uses true nuisance information; it is therefore included only as an infeasible benchmark.

Figures~\ref{fig:sim_cfdr_all_n200}--\ref{fig:sim_pointwise_rmse_n800} provide curve-level diagnostics that complement the scalar summaries. Since the small representative plots duplicate panels already shown in the full scenario grids, we focus on the all-scenario figures for $n=200,400,800$. The sequence of CF--DR figures shows the expected sample-size pattern: the average bands narrow and the mean curves become more stable as $n$ increases. S1 and S2 are visually the easiest cases; S3, S5, and S6 retain the bump--dip structure but show larger deviations near local extrema, where the target curve changes rapidly.

S4 has a different interpretation. Across $n=200,400,800$, the CF--DR curve in S4 is visibly smoother than the truth and the positive bump is attenuated. The method-comparison and RMSE figures show that this is not a failure unique to CF--DR: OR, IPW, and Naive DiD are also affected because all feasible estimators use reconstructed functional outcomes. Thus S4 illustrates the role of the reconstruction condition in the theory. When the reconstruction error is large relative to the sampling error, it propagates into the causal estimators and the simultaneous bands; the orthogonal score protects against nuisance-estimation error, but it is not designed to eliminate first-order reconstruction bias.

\begin{figure}[h]
\centering
\caption{CF--DR estimates across all scenarios for $n=200$. Each panel shows the true curve, the Monte Carlo mean of the CF--DR estimator, and the average simultaneous confidence band.}
\includegraphics[width=1\textwidth]{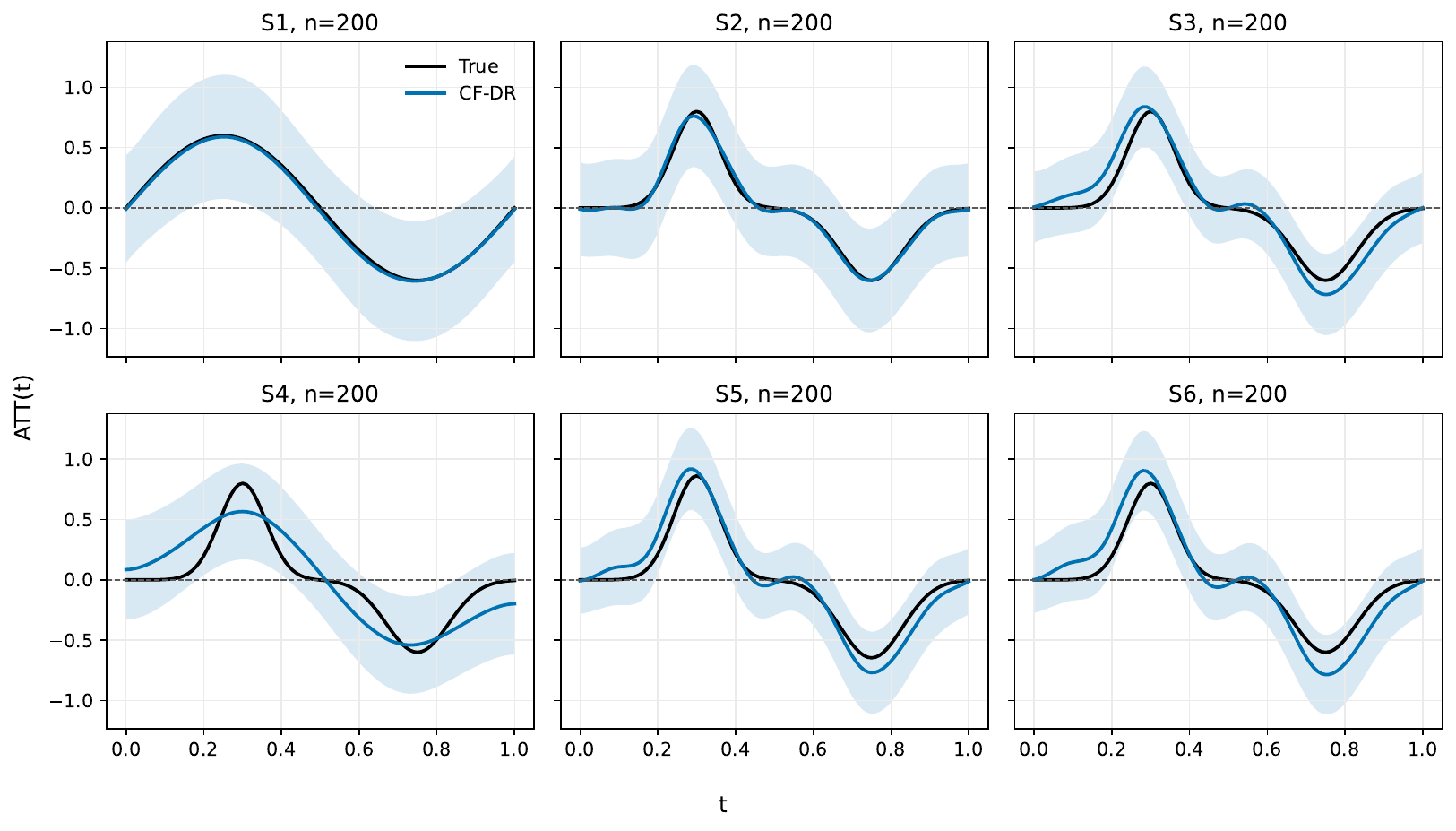}
\label{fig:sim_cfdr_all_n200}
\end{figure}

\begin{figure}[!]
\centering\caption{CF--DR estimates across all scenarios for $n=400$.}
\includegraphics[width=1\textwidth]{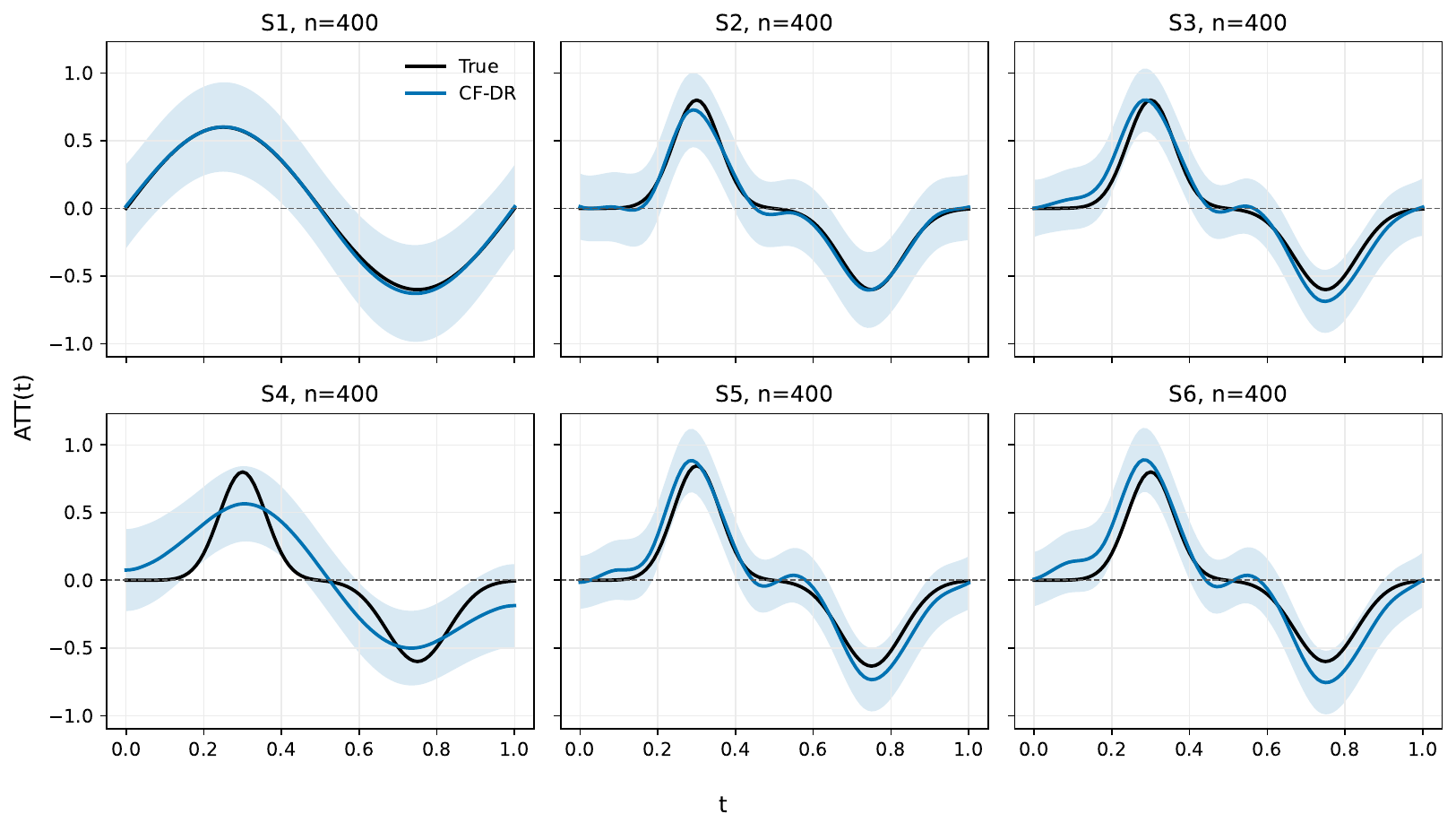}
\label{fig:sim_cfdr_all_n400}
\end{figure}

\begin{figure}[!]
\centering\caption{CF--DR estimates across all scenarios for $n=800$.}
\includegraphics[width=1\textwidth]{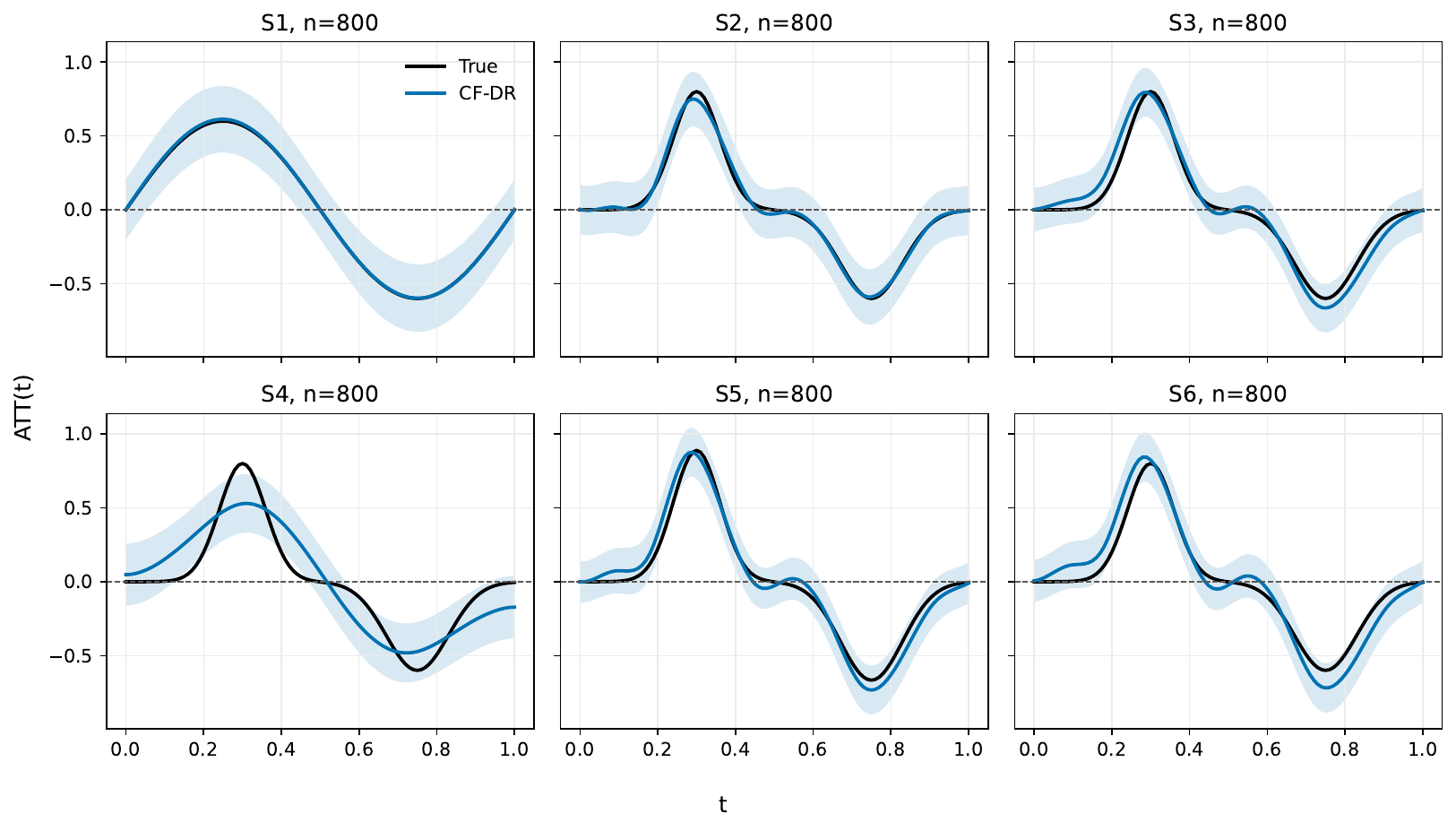}
\label{fig:sim_cfdr_all_n800}
\end{figure}

\begin{figure}[!]
\centering
\caption{Monte Carlo mean curves for all estimators across scenarios for $n=200$.}
\includegraphics[width=1\textwidth]{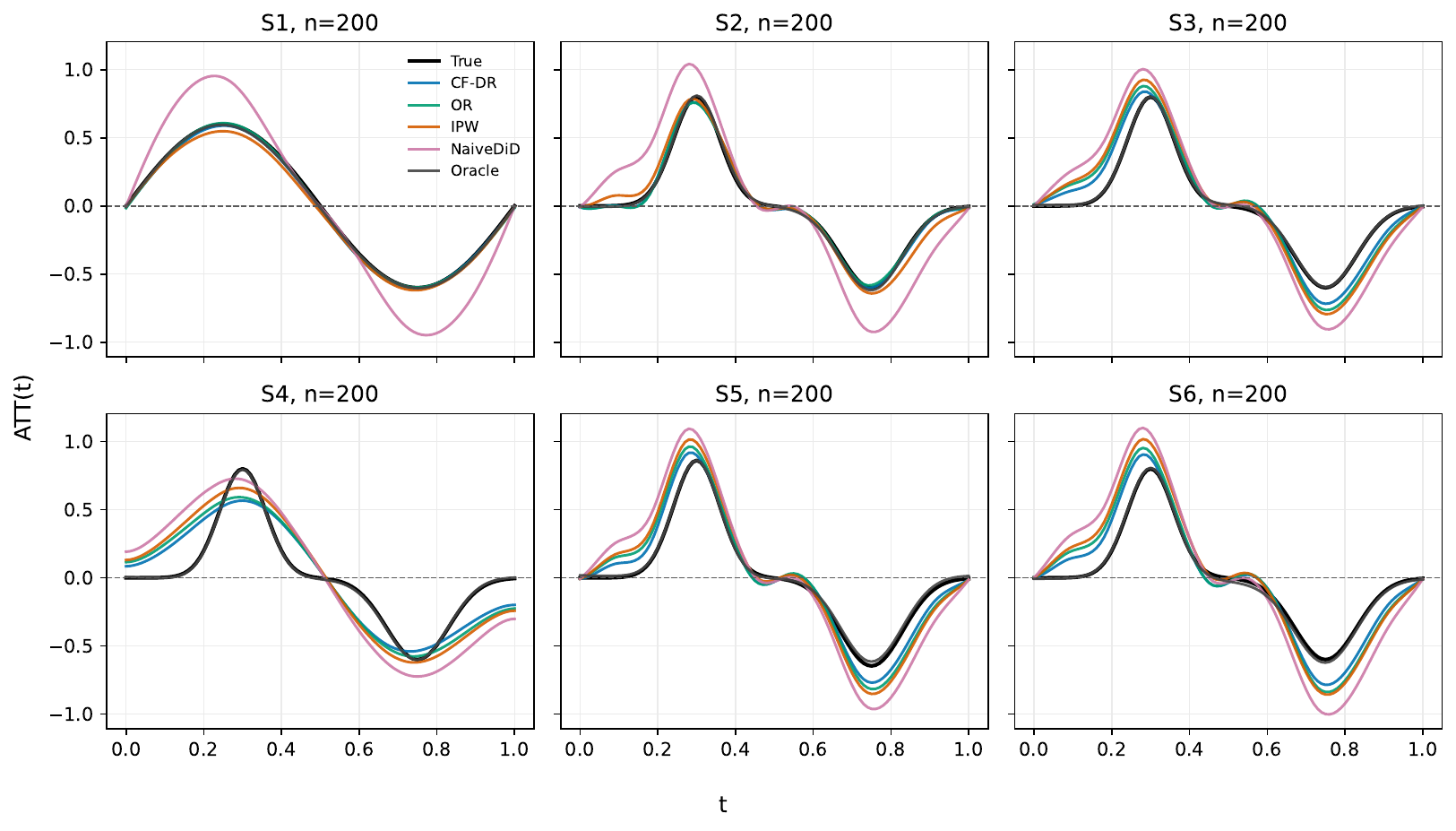}
\label{fig:sim_method_curves_n200}
\end{figure}

\begin{figure}[!]
\centering
\caption{Monte Carlo mean curves for all estimators across scenarios for $n=400$.}
\includegraphics[width=1\textwidth]{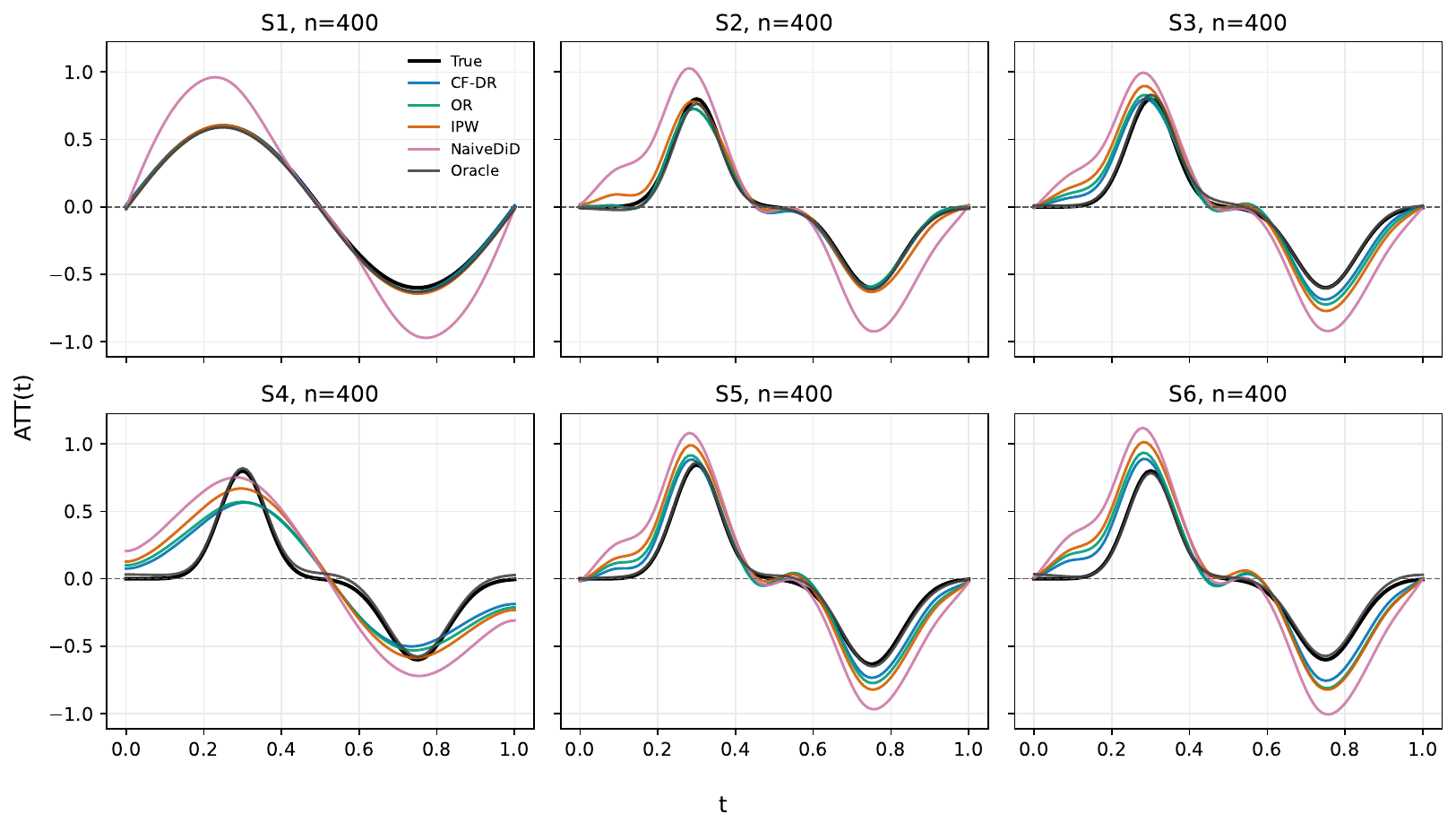}
\label{fig:sim_method_curves_n400}
\end{figure}

\begin{figure}[!]
\centering
\caption{Monte Carlo mean curves for all estimators across scenarios for $n=800$.}
\includegraphics[width=1\textwidth]{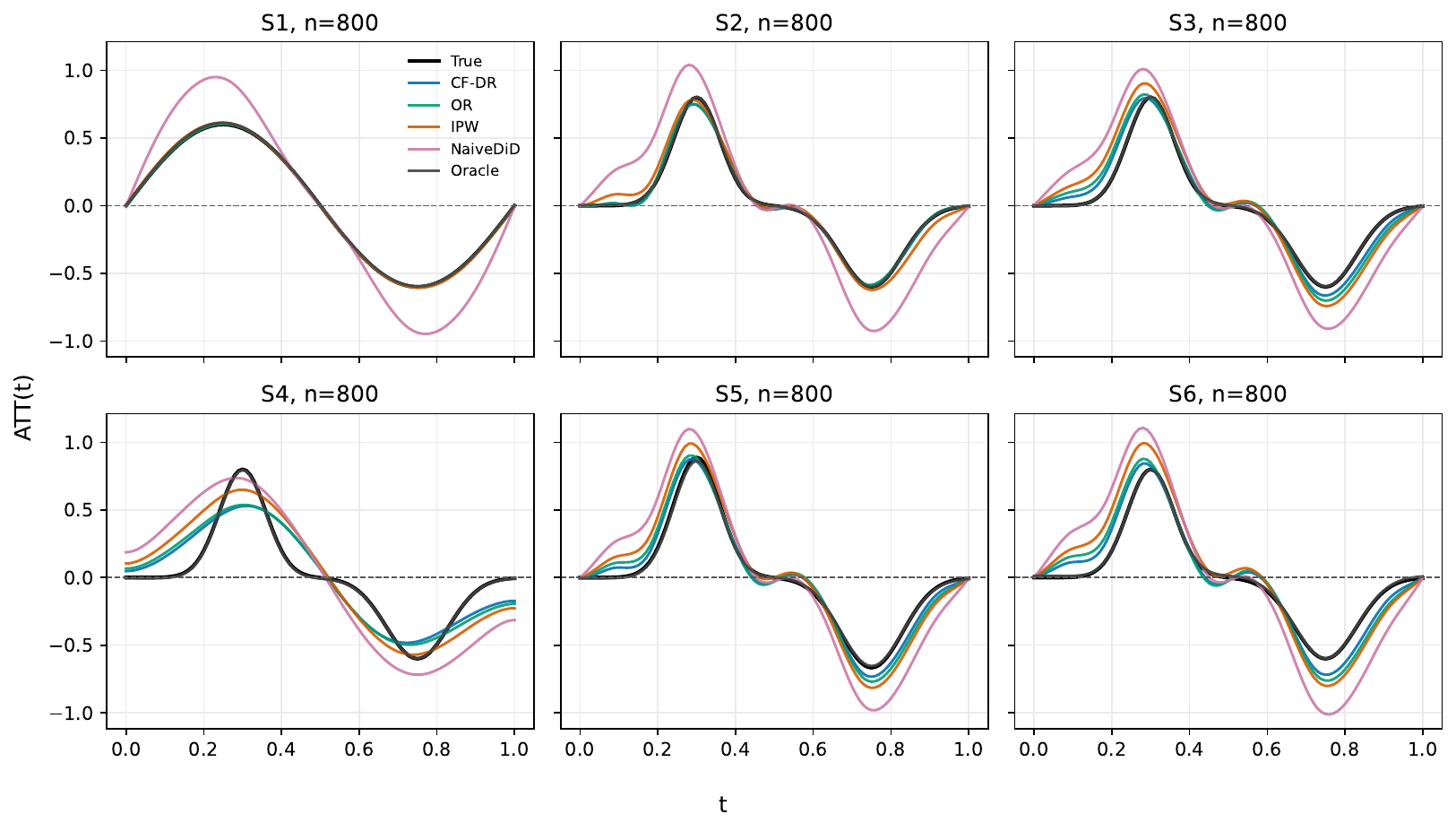}
\label{fig:sim_method_curves_n800}
\end{figure}

\begin{figure}[!]
\centering
\caption{Pointwise RMSE curves across scenarios for $n=200$. Errors are largest near the localized bump and dip, and the sparse noisy setting produces the largest reconstruction-driven errors.}
\includegraphics[width=1\textwidth]{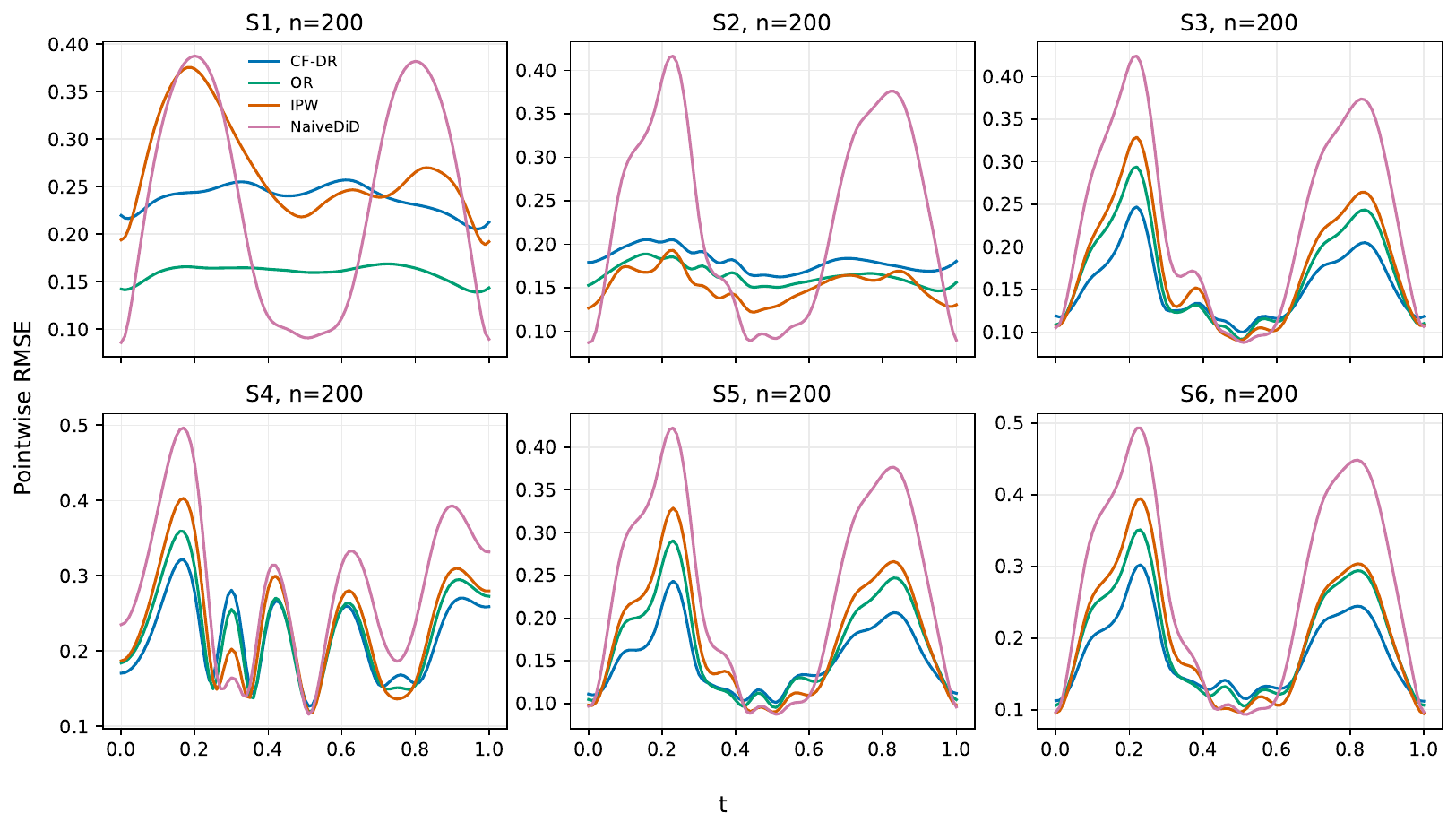}
\label{fig:sim_pointwise_rmse_n200}
\end{figure}

\begin{figure}[!]
\centering
\caption{Pointwise RMSE curves across scenarios for $n=400$.}
\includegraphics[width=1\textwidth]{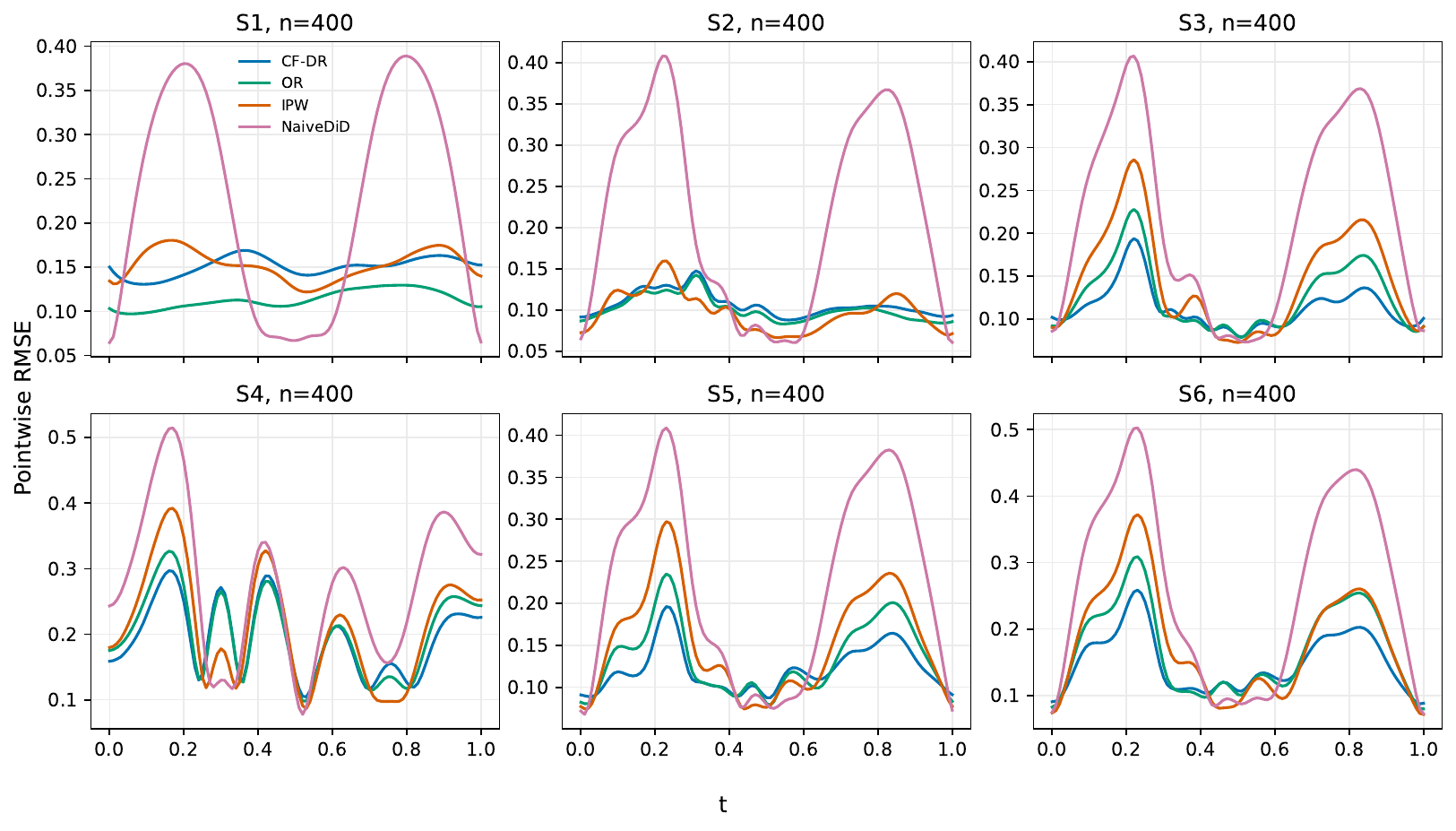}
\label{fig:sim_pointwise_rmse_n400}
\end{figure}

\begin{figure}[!]
\centering
\caption{Pointwise RMSE curves across scenarios for $n=800$.}
\includegraphics[width=1\textwidth]{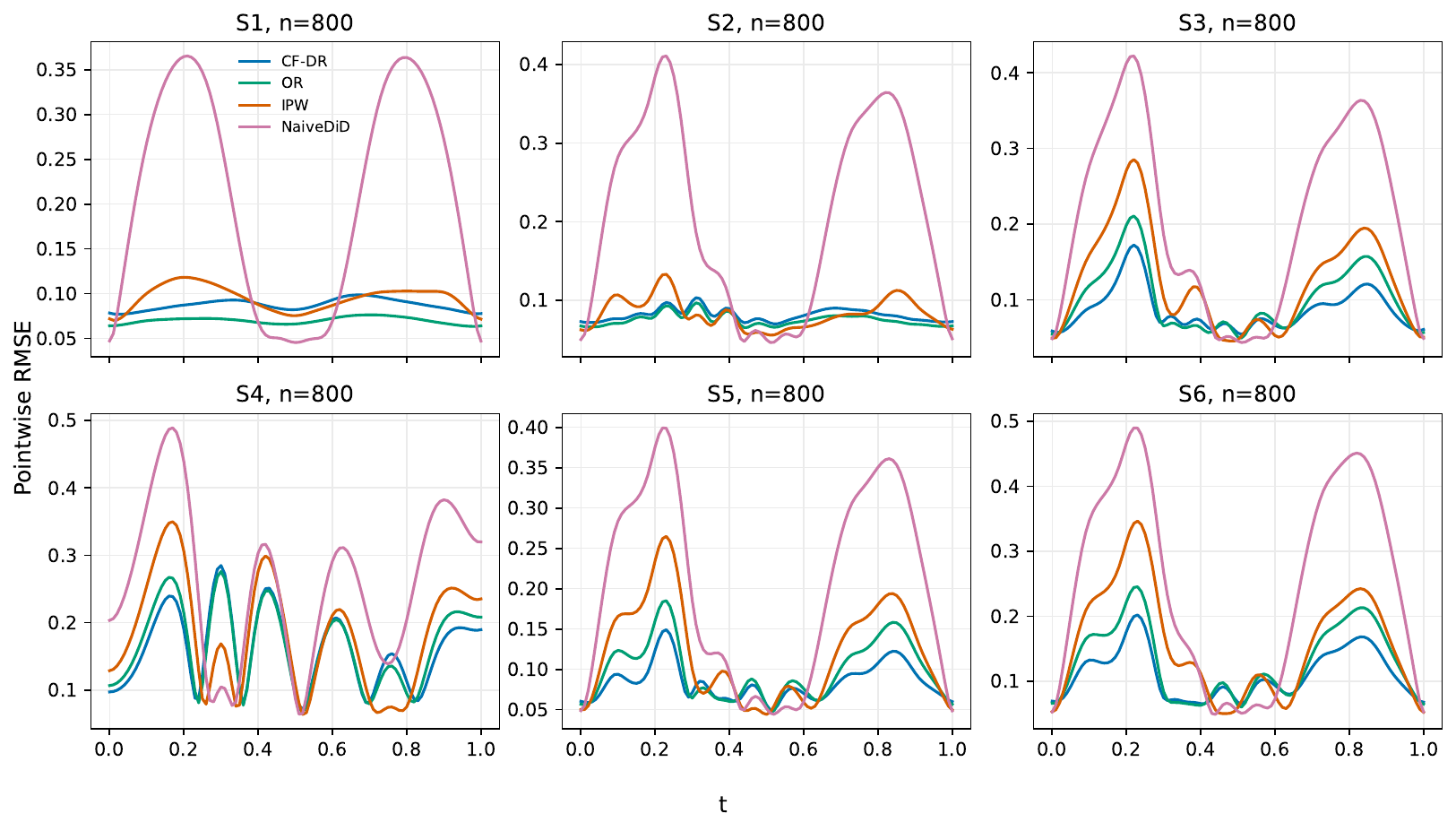}
\label{fig:sim_pointwise_rmse_n800}
\end{figure}
Figures~\ref{fig:sim_method_curves_n200}--\ref{fig:sim_method_curves_n800} compare all estimators across sample sizes. Naive DiD displays the most systematic shape distortion because it ignores covariate-driven untreated trends. OR performs well in S1 and S2 but is less reliable when nuisance structure, treatment-effect heterogeneity, or overlap become more demanding. IPW captures the broad shape in some panels but is generally more variable and less accurate in S3, S5, and S6. CF--DR is therefore best interpreted as the most robust feasible estimator across the difficult dense designs, while S4 highlights the separate challenge introduced by sparse functional reconstruction.

Figures~\ref{fig:sim_pointwise_rmse_n200}--\ref{fig:sim_pointwise_rmse_n800} show where the errors occur along the domain. RMSE is largest near the local extrema of the effect curve, rather than on the flatter portions. This is especially clear in S3, S5, and S6, where the estimator must recover high-curvature bump--dip features. In S4 the RMSE is elevated over a broader portion of the domain, confirming that sparse reconstruction creates persistent functional error rather than isolated pointwise noise. The $n=800$ figures are useful here because they show that increasing sample size improves sampling variability, but reconstruction-induced smoothing can remain visible when the observation process is sparse and noisy.

Overall, the simulation results support the proposed CF--DR estimator as a robust and inference-oriented procedure, especially in settings involving flexible nuisance learning, heterogeneous treatment effects, and limited overlap. The expanded sample-size results show a clear convergence pattern, with CF--DR errors decreasing from $n=200$ to $n=800$ in every scenario. In the simpler parametric settings S1 and parts of S2, OR can be slightly more efficient, but CF--DR remains highly competitive: its MAB, ISE, and SupErr are close to the best feasible estimator and improve steadily as the sample size increases. This indicates that the proposed method retains the robustness advantages of orthogonalization, double robustness, and cross-fitting without sacrificing much accuracy in simple settings.

The advantage of CF--DR becomes more evident in S3, S5, and S6, where nuisance functions are more complex, treatment effects are more heterogeneous, or overlap is weaker. In these realistic and challenging dense-observation settings, CF--DR generally achieves smaller curve-level errors than OR, IPW, and Naive DiD, and the figures show that it better preserves the main bump--dip structure of the target effect curve. S4 complements this message by showing that sparse noisy reconstruction is a distinct source of difficulty shared by all feasible causal estimators, which directly echoes the theoretical requirement that reconstruction error be negligible for first-order inference. Table~\ref{tab:sim_inference} also shows that CF--DR provides more balanced uncertainty quantification than the main feasible alternatives. The method should therefore be interpreted as a robust and inference-oriented estimator rather than a uniformly minimum-error estimator in every finite-sample design. Taken together, Tables and Figures demonstrate that CF--DR offers a substantial practical advantage for functional DiD: it remains close to the best feasible estimator in simple cases, becomes clearly more reliable in complex causal designs, and gives an interpretable way to separate causal-estimation difficulty from reconstruction-driven functional error.


\section{Application}\label{sec:application}
\subsection{London ULEZ and hourly NO$_2$ profiles}

We apply the proposed functional DiD method to study changes in traffic-related nitrogen dioxide (NO$_2$) after the launch of the central London Ultra Low Emission Zone (ULEZ).\footnote{Hourly air-quality monitoring data are available from the London Air Quality Network Data Download page at \url{https://www.londonair.org.uk/london/asp/datadownload.asp}. Policy background is available from Transport for London at \url{https://tfl.gov.uk/modes/driving/ultra-low-emission-zone}.} The central ULEZ began operating on April 8, 2019 and charged non-compliant vehicles entering the central zone \red{\citep{gregg2025impacts}}. Because road transport is a major contributor to urban NO$_2$, especially near roadside and kerbside monitors, the policy provides a natural empirical setting in which the treatment effect may vary systematically over the day rather than being adequately summarized by a scalar daily average. The application should be interpreted as an illustrative functional DiD analysis rather than a definitive policy evaluation, because the number of monitoring sites is small and residual spatial or serial dependence may remain.

The outcome is the within-day NO$_2$ concentration profile, measured hourly from midnight to 11pm. Thus, for each monitoring-site day, the functional outcome is a 24-dimensional discretization of a daily curve,
\[
Y(\cdot)=\{Y(0),Y(1),\ldots,Y(23)\},
\]
where $Y(h)$ denotes NO$_2$ concentration at hour $h$, measured in \unit{\micro\gram/\meter\cubed}. The analysis uses London Air Quality Network monitoring data and focuses on traffic-related monitoring sites. The treated group consists of five central ULEZ traffic sites: CT6, MY1, WM6, WMA, and WMB. The control group consists of five traffic sites outside the central ULEZ comparison area: CR5, HV1, HV3, KT4, and RI1. Control sites were selected among traffic-related monitoring stations outside the central ULEZ area with available hourly NO$_2$ measurements over the same calendar windows. This choice keeps the monitor type comparable while avoiding direct exposure to the central-zone charge. We also use site metadata, including site type, local authority, latitude and longitude, and daily meteorological variables from historical weather data, including temperature, relative humidity, precipitation, wind speed, and wind direction.

The main post-policy window is April 8 to June 30, 2019. For each post-policy date, we construct a matched pre-period date 84 days earlier. The functional response used in the DiD analysis is the hourly pre--post change,
\[
\Delta Y_i(h)=Y_{i,\mathrm{post}}(h)-Y_{i,\mathrm{pre}}(h),\qquad h=0,\ldots,23.
\]
Baseline covariates include summaries of the pre-period NO$_2$ curve, namely the daily mean and morning, midday, and evening averages, together with weather changes between the matched pre and post dates, site type, month, weekday, and weekend indicators. Counting categorical variables before any internal expansion by the learning algorithm, the adjustment set contains 14 variables: four baseline NO$_2$ summaries, \texttt{pre\_mean}, \texttt{pre\_morning}, \texttt{pre\_midday}, and \texttt{pre\_evening}; six meteorological change variables, \texttt{d\_temp\_mean}, \texttt{d\_rh\_mean}, \texttt{d\_precip\_sum}, \texttt{d\_wind\_mean}, \texttt{d\_wind\_u\_mean}, and \texttt{d\_wind\_v\_mean}; one site-type variable, \texttt{SiteType}; and three calendar variables, \texttt{month}, \texttt{dow}, and \texttt{weekend}. These covariates are used to adjust for differences in baseline pollution levels, meteorological conditions, monitoring-site characteristics, and calendar composition.

\subsection{Estimators and empirical comparison}

The empirical comparison focuses on three nuisance-based estimators: the proposed cross-fitted doubly robust estimator (CF--DR), an outcome-regression estimator (OR), and an inverse-probability-weighted estimator (IPW). All three estimators use the same observed covariate information and the same traffic-site sample. Nuisance functions are estimated flexibly using random forests with site-level cross-fitting. Propensity scores are clipped away from zero and one to avoid unstable weights, and uncertainty for the CF--DR curve is summarized using the site-cluster multiplier bootstrap described in Section~\ref{sec:inference}.

Let $D_i=1$ indicate a central ULEZ traffic site and let $X_i$ denote the observed baseline pollution, meteorological, site, and calendar covariates. For each held-out fold, we estimate $\hat\pi(X_i)$, $\hat\mu_0(X_i)(h)$, and $\hat\mu_1(X_i)(h)$ using only the training folds, where
\[
\mu_a(x)(h)=\mathbb{E}\{\Delta Y_i(h)\mid X_i=x,D_i=a\},\qquad a\in\{0,1\},\  h=0,\ldots,23.
\]
With $\hat p=n^{-1}\sum_{i=1}^n D_i$, the empirical CF--DR curve is computed hour-by-hour as
\begin{equation}\label{eq:ulez_cfdr}
\hat\tau_{\mathrm{CFDR}}(h)
=
\frac{1}{n}\sum_{i=1}^n
\left[
\frac{D_i}{\hat p}\{\Delta Y_i(h)-\hat\mu_0(X_i)(h)\}
-
\frac{(1-D_i)\hat\pi(X_i)}{\hat p\{1-\hat\pi(X_i)\}}
\{\Delta Y_i(h)-\hat\mu_0(X_i)(h)\}
\right].
\end{equation}
The two comparator curves are constructed from the same nuisance estimates:
\begin{align*}
\hat\tau_{\mathrm{OR}}(h)
&=
\frac{1}{n_1}\sum_{i:D_i=1}
\{\hat\mu_1(X_i)(h)-\hat\mu_0(X_i)(h)\}, \\
\hat\tau_{\mathrm{IPW}}(h)
&=
\frac{1}{n}\sum_{i=1}^n
\left\{
\frac{D_i}{\hat p}\Delta Y_i(h)
-
\frac{(1-D_i)\hat\pi(X_i)}{\hat p\{1-\hat\pi(X_i)\}}\Delta Y_i(h)
\right\}. 
\end{align*}
Thus OR relies on the fitted conditional mean contrast, IPW relies on propensity-score reweighting, and CF--DR combines both pieces through the orthogonal score in \eqref{eq:ulez_cfdr}.

We do not include the raw Naive DiD estimator in the main empirical comparison. The reason is that Naive DiD does not use baseline functional outcomes, meteorological changes, monitoring-site characteristics, or calendar covariates, and therefore is not directly comparable to CF--DR, OR, and IPW as a nuisance-based estimator. In this application, the main goal is to compare methods that use the same observed covariate information but differ in how they combine outcome modeling and weighting. Naive DiD is better interpreted as a simple descriptive benchmark and is therefore excluded from the main method-comparison tables.

Because the true treatment effect is unobserved in the empirical application, the main effect estimates themselves should not be ranked by magnitude. A more negative estimate is not automatically more accurate. We therefore report the estimated functional effects as descriptive policy estimates and evaluate method performance using falsification and stability diagnostics with clear directions of comparison. Placebo windows are constructed by shifting the pseudo-policy date to pre-ULEZ calendar windows and applying the same lagged matching rule between post and pre curves; the matched 2018 placebo uses the same April--June calendar window and 84-day lag as the main specification. Because no actual ULEZ treatment occurs in these placebo windows, the target functional effect is treated as zero. The resulting placebo estimates are used only as falsification diagnostics, not as formal proof of the identifying assumptions. The first four diagnostic columns in Table~\ref{tab:ulez_metrics} are computed from these placebo windows: smaller absolute average bias, mean absolute bias, RMSE, and supremum error indicate less spurious functional effect. We also report the traffic-period signal divided by the placebo-window RMSE and the net traffic-period signal after subtracting the placebo-window RMSE; for these two diagnostics, larger values indicate that the policy-relevant traffic-period signal is stronger relative to falsification noise.

\subsection{Results}
We now summarize the empirical evidence from the London ULEZ application. The results are organized to first show the estimated functional effect over the daily NO$_2$ profile, and then to compare the stability and diagnostic performance of the feasible estimators through a set of empirical metrics.

\begin{table}[!htbp]
\centering
\caption{Estimated effects on hourly NO$_2$ profiles. Entries are summaries of the estimated functional ATT curve in \unit{\micro\gram/\meter\cubed}.}
\label{tab:ulez_main_effect}
\scalebox{0.99}{
\begin{tabular}{lrrrrrr}
\toprule
Method & Average & Minimum & Maximum & Morning & Midday & Evening\\
\midrule
CF--DR & \red{$-$}6.516 & $-$12.995 & 0.451 & $-$6.114 & $-$10.259 & $-$8.203\\
OR & $-$5.279 & $-$8.980 & $-$1.872 & $-$4.051 & $-$3.685 & $-$6.874\\
IPW & $-$9.064 & $-$13.454 & $-$1.961 & $-$8.559 & $-$9.812 & $-$11.850\\
\bottomrule
\end{tabular}}
\end{table}

\begin{figure}[h]
\centering
\includegraphics[width=0.7\textwidth]{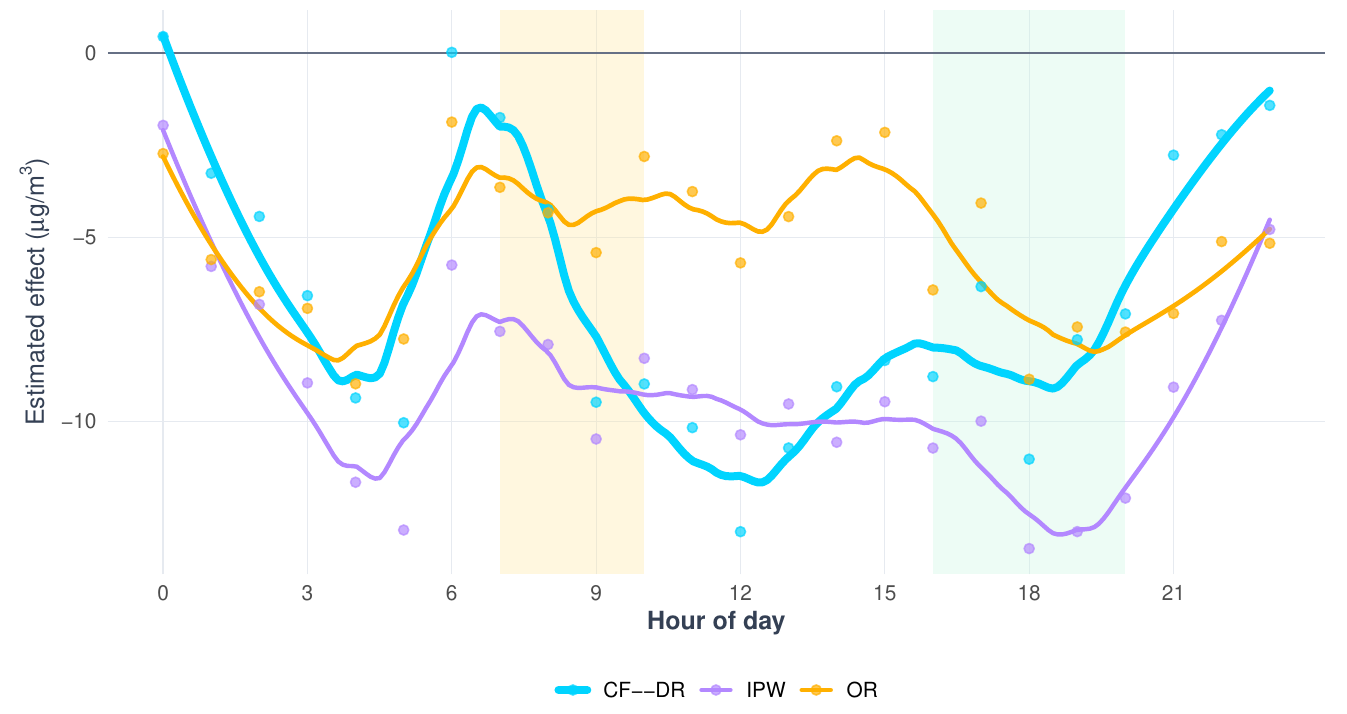}
\caption{Functional effect estimates for the London ULEZ traffic-site analysis. Points show hourly estimates and curves show smoothed daily profiles for visualization.}
\label{fig:ulez_method_curves}
\end{figure}

Table~\ref{tab:ulez_main_effect} and Figure~\ref{fig:ulez_method_curves} show that all three main methods estimate negative average NO$_2$ changes associated with the central ULEZ launch. CF--DR gives an average effect of $-$6.516 \unit{\micro\gram/\meter\cubed}, with the largest reductions concentrated during midday and evening traffic-related hours. OR also estimates negative effects but with smaller midday reductions. IPW produces the most extreme average reduction, which is consistent with the greater sensitivity of weighting estimators to limited overlap and variable propensity weights. CF--DR lies between the more model-dependent OR curve and the more extreme IPW curve, while preserving a clear time-of-day pattern.

\begin{figure}[h]
\centering
\includegraphics[width=0.7\textwidth]{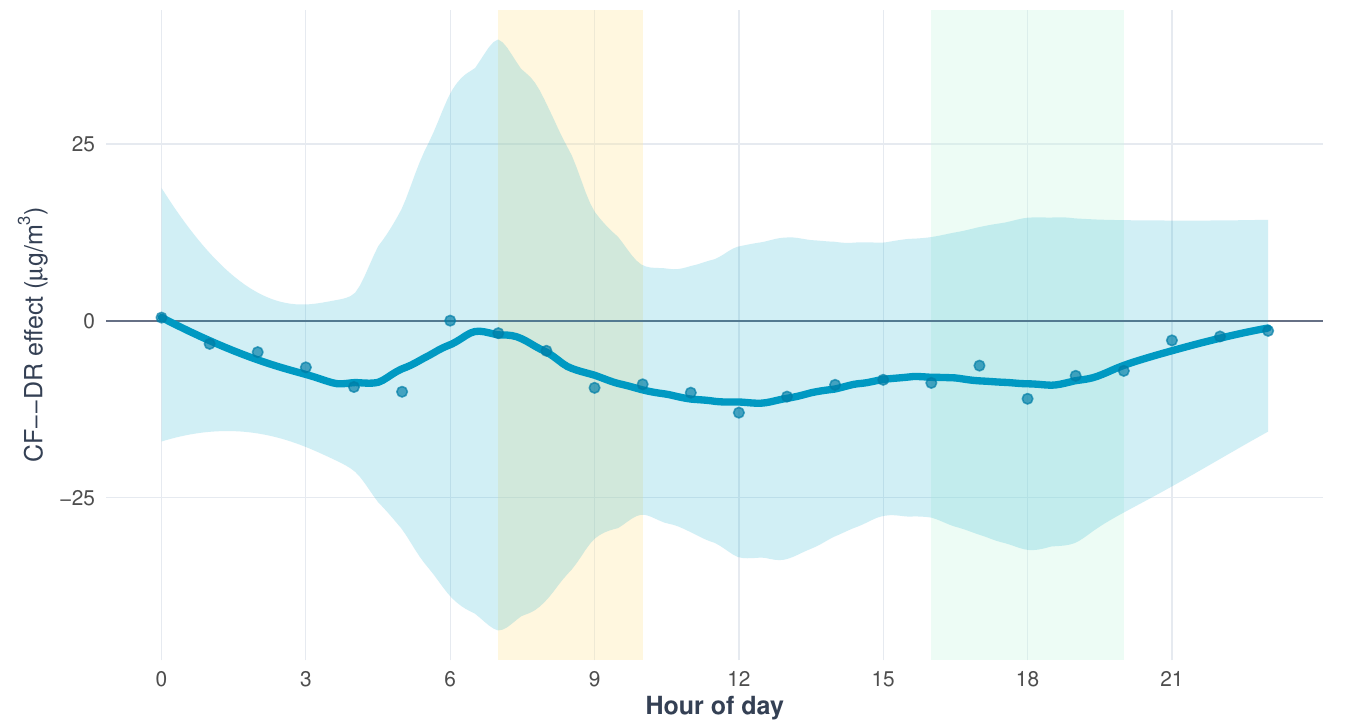}
\caption{CF--DR estimate with site-cluster simultaneous confidence band. Negative values are consistent with lower hourly NO$_2$ after the ULEZ launch.}
\label{fig:ulez_cfdr_band}
\end{figure}

Figure~\ref{fig:ulez_cfdr_band} displays the CF--DR effect curve with a site-cluster simultaneous confidence band. The point estimate is negative for most hours, with stronger reductions over the daytime and evening portions of the curve. The simultaneous band is wide, reflecting the small number of monitoring sites and the conservative nature of site-cluster curve-level inference. We therefore interpret the empirical application primarily as evidence of a policy-relevant functional pattern rather than as a high-powered site-level policy evaluation. This is precisely the type of setting where functional inference is useful: the environmental question concerns when during the day pollution changes, not only whether a scalar average changes.

\begin{table}[!htbp]
\centering
\caption{Method diagnostic metrics. }
\label{tab:ulez_metrics}
\small
\scalebox{1}{
\begin{tabular}{lrrrrrrr}
\toprule
Method & AbsAvgBias & MAB & RMSE & SupErr & Signal/RMSE & NetSignal\\
\midrule
CF--DR & 5.139 & 5.742 & 6.935 & 13.514 & 1.181 & 1.257 \\
OR & 6.933 & 6.951 & 8.030 & 14.061 & 0.606 & $-$3.160 \\
IPW & 11.501 & 11.501 & 12.367 & 18.349 & 0.815 & $-$2.293 \\
\bottomrule
\end{tabular}}
\end{table}

Table~\ref{tab:ulez_metrics} compares the three main methods using metrics that have an interpretable direction. CF--DR performs best on all six diagnostics. Its AbsAvgBias, MAB, RMSE, and SupErr are smaller than those of OR and IPW, indicating less spurious functional signal in falsification settings. CF--DR also has the largest Signal/RMSE and the largest NetSignal. These diagnostics are important because they separate the descriptive magnitude of the estimated policy effect from method performance in settings where the benchmark is known to be zero.

\begin{table}[h]
\centering
\caption{CF--DR window-sensitivity analysis. Entries summarize the CF--DR functional effect curve across alternative post-policy windows.}
\label{tab:ulez_sensitivity}
\scalebox{0.99}{
\begin{tabular}{lrrrrrr}
\toprule
Window & Average & Minimum & Maximum & Morning & Midday & Evening\\
\midrule
Traffic main 12w & $-$5.031 & $-$9.808 & 1.474 & $-$5.357 & $-$7.733 & $-$5.640\\
Traffic early 6w & $-$2.996 & $-$9.237 & 4.024 & $-$4.293 & $-$6.137 & $-$0.814\\
Traffic late 6w & $-$4.515 & $-$9.849 & 2.446 & $-$2.925 & $-$7.375 & $-$4.848\\
\bottomrule
\end{tabular}}
\end{table}

\begin{figure}[!htbp]
\centering
\includegraphics[width=0.75\textwidth]{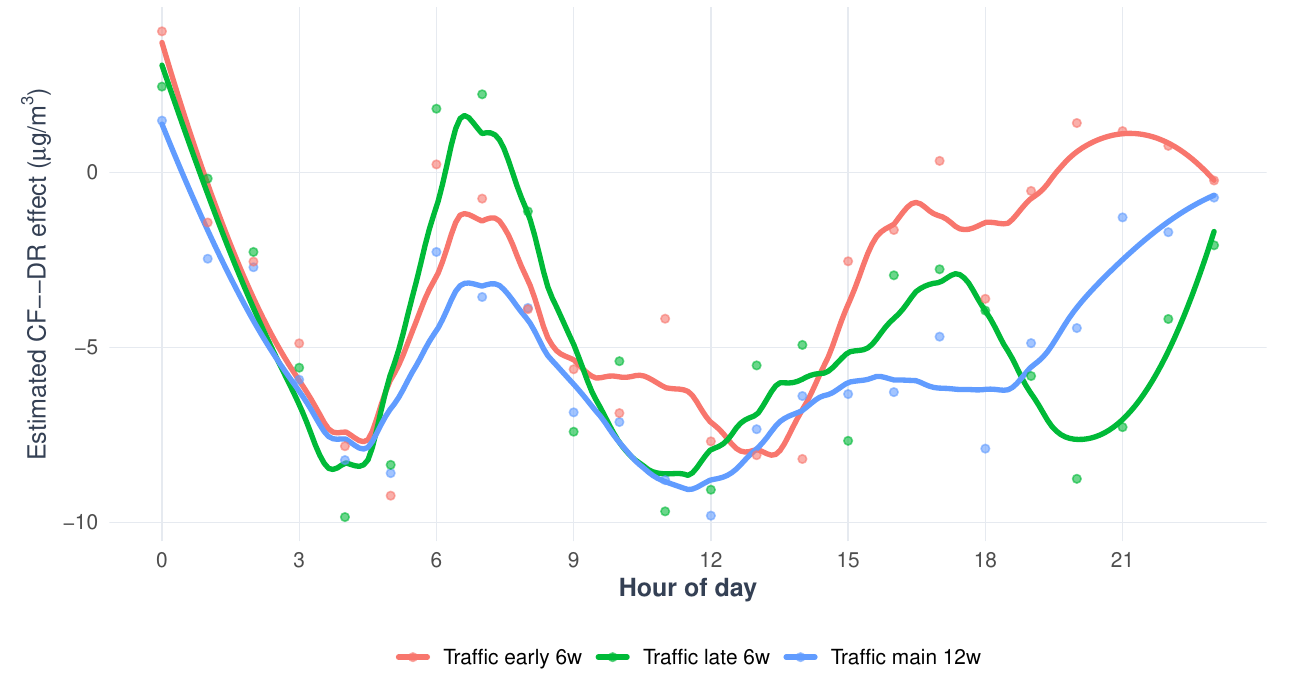}
\caption{CF--DR window-sensitivity curves. The negative traffic-site pattern persists across the main, early, and late post-policy windows.}
\label{fig:ulez_sensitivity}
\end{figure}

The window-sensitivity analysis in Table~\ref{tab:ulez_sensitivity} and Figure~\ref{fig:ulez_sensitivity} provides an additional robustness check. The CF--DR average effect remains negative in the 12-week main window, the early 6-week window, and the late 6-week window. The midday reduction is also consistently negative across windows, ranging from \red{$-$}6.137 to $-$7.733 \unit{\micro\gram/\meter\cubed}. The early window has a weaker evening effect, whereas the late window shows a clearer evening reduction. This pattern is environmentally plausible: traffic-related emissions and local atmospheric mixing vary over the day, and the policy effect may be more visible during periods when traffic activity and near-road exposure are higher.

Overall, the empirical application supports the usefulness of the proposed CF--DR estimator for illustrative functional policy analysis. The estimated effect is not simply a scalar reduction in average pollution; rather, it is a time-of-day profile suggesting that the central ULEZ is associated with lower traffic-site NO$_2$ over much of the day, especially during policy-relevant daytime and evening periods. Compared with OR and IPW, CF--DR has the most favorable diagnostic profile and avoids relying solely on either outcome modeling or weighting. From an environmental perspective, this matters because reductions in NO$_2$ during traffic-intensive hours are directly relevant for population exposure near roads. The application therefore illustrates how functional causal inference can reveal when an air-quality intervention is associated with its strongest changes, while also providing diagnostics for whether the estimated daily curve is credible under observational confounding.

\section{Discussion}\label{sec:discussion}

This paper developed a semiparametric framework for difference-in-differences analysis with functional outcomes. The proposed CF--DR estimator combines outcome regression and IPW through a Neyman-orthogonal score, allowing flexible nuisance learning while preserving first-order validity under suitable rate conditions. The theory establishes identification, asymptotic linearity in a Hilbert space, and simultaneous confidence bands based on a multiplier bootstrap. The simulation results show that CF--DR is especially useful when nuisance functions are complex, treatment effects are heterogeneous, or overlap is limited, while also clarifying the role of functional reconstruction error. The London ULEZ application illustrates the empirical value of treating the outcome as a curve: rather than reducing air-quality changes to a single daily average, the method identifies a time-of-day pattern associated with lower traffic-site NO$_2$ and provides falsification diagnostics for comparing feasible estimators.

Several extensions remain important for future work. First, many policy settings involve staggered adoption, repeated treatment timing, or continuous exposure intensity, and extending the present framework to these designs would broaden its applicability. Second, environmental and health applications often contain spatial dependence, serial dependence, and site-level clustering, so developing sharper inference procedures for dependent functional panels is a natural next step. Third, the theory highlights that curve reconstruction must be accurate enough for first-order causal inference; future work could integrate adaptive smoothing, functional basis selection, or measurement-error correction more directly into the causal estimator. Finally, richer empirical applications with larger monitoring networks or individual-level functional outcomes would allow the proposed framework to be used not only for average functional effects, but also for subgroup heterogeneity and policy targeting over time.

\bibliographystyle{erae}
\bibliography{references.bib}

@article{degras2017simultaneous,
  title={Simultaneous confidence bands for the mean of functional data},
  author={Degras, David},
  journal={Wiley Interdisciplinary Reviews: Computational Statistics},
  volume={9},
  number={3},
  pages={e1397},
  year={2017},
  publisher={Wiley Online Library}
}

@article{gregg2025impacts,
  title={The impacts of the {Ultra Low Emission Zone (ULEZ) and COVID-19} restrictions on air quality in {Central London}--evidence for an increase in small particles},
  author={Gregg, Douglas J and Tompkins, Jordan and Cordell, Rebecca L and Brown, Andrew S and Smallbone, Kirsty L and Vande Hey, Joshua D and Wyche, Kevin P and Monks, Paul S},
  journal={Atmospheric Environment},
  pages={121668},
  year={2025},
  publisher={Elsevier}
}

@article{abadie2005semiparametric,
  author  = {Abadie, Alberto},
  title   = {Semiparametric Difference-in-Differences Estimators},
  journal = {The Review of Economic Studies},
  year    = {2005},
  volume  = {72},
  number  = {1},
  pages   = {1--19}
}

@article{bang2005doubly,
  author  = {Bang, Heejung and Robins, James M.},
  title   = {Doubly Robust Estimation in Missing Data and Causal Inference Models},
  journal = {Biometrics},
  year    = {2005},
  volume  = {61},
  number  = {4},
  pages   = {962--973}
}

@book{bickel1993efficient,
  author    = {Bickel, Peter J. and Klaassen, Chris A. J. and Ritov, Ya'acov and Wellner, Jon A.},
  title     = {Efficient and Adaptive Estimation for Semiparametric Models},
  publisher = {Johns Hopkins University Press},
  year      = {1993},
  address   = {Baltimore}
}

@article{borusyak2024revisiting,
  author  = {Borusyak, Kirill and Jaravel, Xavier and Spiess, Jann},
  title   = {Revisiting Event Study Designs: Robust and Efficient Estimation},
  journal = {The Review of Economic Studies},
  year    = {2024},
  volume  = {91},
  number  = {6},
  pages   = {3253--3285}
}

@article{chernozhukov2014gaussian,
  author  = {Chernozhukov, Victor and Chetverikov, Denis and Kato, Kengo},
  title   = {Gaussian Approximation of Suprema of Empirical Processes},
  journal = {The Annals of Statistics},
  year    = {2014},
  volume  = {42},
  number  = {4},
  pages   = {1564--1597}
}

@article{chernozhukov2018dml,
  author  = {Chernozhukov, Victor and Chetverikov, Denis and Demirer, Mert and Duflo, Esther and Hansen, Christian and Newey, Whitney and Robins, James},
  title   = {Double/Debiased Machine Learning for Treatment and Structural Parameters},
  journal = {The Econometrics Journal},
  year    = {2018},
  volume  = {21},
  number  = {1},
  pages   = {C1--C68}
}

@article{choi2018geometric,
  author  = {Choi, Hyunphil and Reimherr, Matthew},
  title   = {A Geometric Approach to Confidence Regions and Bands for Functional Parameters},
  journal = {Journal of the Royal Statistical Society: Series B (Statistical Methodology)},
  year    = {2018},
  volume  = {80},
  number  = {1},
  pages   = {239--260}
}

@article{degras2011scb,
  author  = {Degras, David A.},
  title   = {Simultaneous Confidence Bands for Nonparametric Regression with Functional Data},
  journal = {Statistica Sinica},
  year    = {2011},
  volume  = {21},
  number  = {4},
  pages   = {1735--1765}
}

@article{ecker2024functional,
  author  = {Ecker, Kreske and de Luna, Xavier and Schelin, Lina},
  title   = {Causal Inference with a Functional Outcome},
  journal = {Journal of the Royal Statistical Society: Series C (Applied Statistics)},
  year    = {2024},
  volume  = {73},
  number  = {1},
  pages   = {221--240}
}

@misc{fangliebl2025honest,
  author        = {Fang, Chencheng and Liebl, Dominik},
  title         = {Making Event Study Plots Honest: A Functional Data Approach to Causal Inference},
  year          = {2025},
  eprint        = {2512.06804},
  archivePrefix = {arXiv},
  primaryClass  = {econ.EM}
}

@article{hahn1998role,
  author  = {Hahn, Jinyong},
  title   = {On the Role of the Propensity Score in Efficient Semiparametric Estimation of Average Treatment Effects},
  journal = {Econometrica},
  year    = {1998},
  volume  = {66},
  number  = {2},
  pages   = {315--331}
}

@book{hernan2020causal,
  author    = {Hern{\'a}n, Miguel A. and Robins, James M.},
  title     = {Causal Inference: What If},
  publisher = {Chapman \& Hall/CRC},
  year      = {2020},
  address   = {Boca Raton}
}

@book{horvath2012inference,
  author    = {Horv{\'a}th, Lajos and Kokoszka, Piotr},
  title     = {Inference for Functional Data with Applications},
  publisher = {Springer},
  year      = {2012},
  address   = {New York}
}

@book{hsing2015theoretical,
  author    = {Hsing, Tailen and Eubank, Randall},
  title     = {Theoretical Foundations of Functional Data Analysis, with an Introduction to Linear Operators},
  publisher = {Wiley},
  year      = {2015}
}

@book{imbens2015causal,
  author    = {Imbens, Guido W. and Rubin, Donald B.},
  title     = {Causal Inference for Statistics, Social, and Biomedical Sciences: An Introduction},
  publisher = {Cambridge University Press},
  year      = {2015},
  address   = {Cambridge}
}

@article{liebl2023ffscb,
  author  = {Liebl, Dominik and Reimherr, Matthew},
  title   = {Fast and Fair Simultaneous Confidence Bands for Functional Parameters},
  journal = {Journal of the Royal Statistical Society: Series B (Statistical Methodology)},
  year    = {2023},
  volume  = {85},
  number  = {3},
  pages   = {842--868}
}

@article{pini2016itp,
  author  = {Pini, Alessia and Vantini, Simone},
  title   = {The Interval Testing Procedure: A General Framework for Inference in Functional Data Analysis},
  journal = {Biometrics},
  year    = {2016},
  volume  = {72},
  number  = {3},
  pages   = {835--845}
}

@article{pini2017iwt,
  author  = {Pini, Alessia and Vantini, Simone},
  title   = {Interval-Wise Testing for Functional Data},
  journal = {Journal of Nonparametric Statistics},
  year    = {2017},
  volume  = {29},
  number  = {2},
  pages   = {407--424}
}

@book{ramsay2005fda,
  author    = {Ramsay, James O. and Silverman, Bernard W.},
  title     = {Functional Data Analysis},
  edition   = {2},
  publisher = {Springer},
  year      = {2005},
  address   = {New York}
}

@article{robins1994estimation,
  author  = {Robins, James M. and Rotnitzky, Andrea and Zhao, Lue Ping},
  title   = {Estimation of Regression Coefficients When Some Regressors Are Not Always Observed},
  journal = {Journal of the American Statistical Association},
  year    = {1994},
  volume  = {89},
  number  = {427},
  pages   = {846--866}
}

@article{rosenbaum1983,
  author  = {Rosenbaum, Paul R. and Rubin, Donald B.},
  title   = {The Central Role of the Propensity Score in Observational Studies for Causal Effects},
  journal = {Biometrika},
  year    = {1983},
  volume  = {70},
  number  = {1},
  pages   = {41--55}
}

@article{rubin1974,
  author  = {Rubin, Donald B.},
  title   = {Estimating Causal Effects of Treatments in Randomized and Nonrandomized Studies},
  journal = {Journal of Educational Psychology},
  year    = {1974},
  volume  = {66},
  number  = {5},
  pages   = {688--701}
}

@article{santanna2020drdid,
  author  = {Sant'Anna, Pedro H. C. and Zhao, Jun},
  title   = {Doubly Robust Difference-in-Differences Estimators},
  journal = {Journal of Econometrics},
  year    = {2020},
  volume  = {219},
  number  = {1},
  pages   = {101--122}
}

@article{sparkes2024tract,
  author  = {Sparkes, Shane and Garcia, Erika and Zhang, Lu},
  title   = {The Functional Average Treatment Effect},
  journal = {Journal of Causal Inference},
  year    = {2024},
  volume  = {12},
  number  = {1},
  pages   = {1--30}
}

@book{tsiatis2006semiparametric,
  author    = {Tsiatis, Anastasios A.},
  title     = {Semiparametric Theory and Missing Data},
  publisher = {Springer},
  year      = {2006},
  address   = {New York}
}

@book{vandervaartwellner1996weak,
  author    = {van der Vaart, Aad W. and Wellner, Jon A.},
  title     = {Weak Convergence and Empirical Processes: With Applications to Statistics},
  publisher = {Springer},
  year      = {1996},
  address   = {New York}
}

@article{wang2016functional,
  author  = {Wang, Jane-Ling and Chiou, Jeng-Min and M{\"u}ller, Hans-Georg},
  title   = {Functional Data Analysis},
  journal = {Annual Review of Statistics and Its Application},
  year    = {2016},
  volume  = {3},
  pages   = {257--295}
}

@article{yao2005fpca,
  author  = {Yao, Fang and M{\"u}ller, Hans-Georg and Wang, Jane-Ling},
  title   = {Functional Data Analysis for Sparse Longitudinal Data},
  journal = {Journal of the American Statistical Association},
  year    = {2005},
  volume  = {100},
  number  = {470},
  pages   = {577--590}
}

\clearpage
\appendix
\section*{Appendix}\label{sec:appendix}

Throughout, let $P$ denote the true law of $W=(\Delta Y,D,X)$ and $P_n$ the empirical measure. For a measurable $\cH$-valued map $h(W)\in \cH$, write $Ph=\E\{h(W)\}$ and $P_n h = n^{-1}\sum_{i=1}^n h(W_i)$. For fold $k$, let $\cI_k$ be the index set, $n_k=|\cI_k|$, and $P_{n,k}$ be the empirical measure over $\cI_k$. We use $\|\cdot\|_{\cH}$ for the $L^2(\cT)$ norm.

\subsection*{Proof of Proposition~\ref{prop:eif}}
\begin{proof}
It is enough to verify the pathwise derivative after scalarizing the functional parameter. Fix any $f\in\cH$ and define the scalar outcome
\[
Z_f=\langle f,\Delta Y\rangle,\qquad
\mu_{a,f}(X)=\E(Z_f\mid X,D=a)=\langle f,\mu_a(X)\rangle,
\qquad
\theta_f=\langle f,\tau_0\rangle .
\]
By the identified ATT representation \eqref{eq:ident_reg}--\eqref{eq:ident_ipw},
\[
\theta_f
=
\E\{\mu_{1,f}(X)-\mu_{0,f}(X)\mid D=1\}.
\]
For the scalar observed-data problem with outcome $Z_f$, group indicator $D$, and covariates $X$, write
\[
\nu_{1,f}=\E(Z_f\mid D=1),\qquad
\nu_{0,f}=\E\{\mu_{0,f}(X)\mid D=1\},
\qquad
\theta_f=\nu_{1,f}-\nu_{0,f}.
\]
The treated mean $\nu_{1,f}=\E(DZ_f)/p_0$ has influence function
\[
\frac{D}{p_0}(Z_f-\nu_{1,f}).
\]
The counterfactual treated mean $\nu_{0,f}=\E\{D\mu_{0,f}(X)\}/p_0$ has influence function
\[
\frac{D}{p_0}\{\mu_{0,f}(X)-\nu_{0,f}\}
+
\frac{(1-D)\pi_0(X)}{p_0\{1-\pi_0(X)\}}\{Z_f-\mu_{0,f}(X)\}.
\]
The second term is the contribution from perturbing the control conditional mean $\mu_{0,f}(X)=\E(Z_f\mid X,D=0)$, transported to the treated covariate distribution by the odds weight $\pi_0(X)/\{1-\pi_0(X)\}$. Subtracting the influence function for $\nu_{0,f}$ from that for $\nu_{1,f}$ gives
\[
\phi_f(W)
=
\frac{D}{p_0}\{Z_f-\mu_{1,f}(X)\}
-
\frac{(1-D)\pi_0(X)}{p_0\{1-\pi_0(X)\}}\{Z_f-\mu_{0,f}(X)\}
+
\frac{D}{p_0}\{\mu_{1,f}(X)-\mu_{0,f}(X)-\theta_f\}.
\]
Equivalently, after cancellation of $\mu_{1,f}$ this is
\[
\phi_f(W)
=
\frac{D}{p_0}\{Z_f-\mu_{0,f}(X)-\theta_f\}
-
\frac{(1-D)\pi_0(X)}{p_0\{1-\pi_0(X)\}}\{Z_f-\mu_{0,f}(X)\}.
\]
Thus, along any regular submodel with score $S$, differentiating the treated-group ratio and the control conditional mean components yields
\[
\left.\frac{d}{d\epsilon}\theta_{f,\epsilon}\right|_{\epsilon=0}
=
\E\{\phi_f(W)S(W)\}.
\]
Using the definitions of $Z_f$, $\mu_{a,f}$, and $\theta_f$, the displayed scalar influence function is exactly the scalar projection of \eqref{eq:eif_point}:
\[
\phi_f(W)=\langle f,\phi(W;\eta_0)\rangle .
\]
Therefore, for every $f\in\cH$,
\[
\left.\frac{d}{d\epsilon}\langle f,\tau(P_\epsilon)\rangle\right|_{\epsilon=0}
=
\E\{\langle f,\phi(W;\eta_0)\rangle S(W)\}
=
\left\langle f,\E\{\phi(W;\eta_0)S(W)\}\right\rangle .
\]
Since $\cH$ is separable and the equality holds for all $f\in\cH$, this proves the derivative identity in $\cH$. In the nonparametric observed-data model the tangent space is the full mean-zero subspace of $L^2_0(P)$, so the gradient is already the canonical gradient. Hence $\phi(W;\eta_0)$ is the efficient influence function.
\end{proof}

\subsection*{Auxiliary lemmas}

\begin{lemma}[Score simplification]\label{lem:simplify}
Fix any $p\in(0,1)$ and let the AIPW display \eqref{eq:drscore_did} use $(\pi,\mu_0,\mu_1)$ with denominator $p$. Then for every $t\in\cT$,
\[
\psi_p(W;\eta)(t)
=
\frac{D}{p}\{\Delta Y(t)-\mu_0(X)(t)\}
-
\frac{(1-D)\pi(X)}{p\{1-\pi(X)\}}\{\Delta Y(t)-\mu_0(X)(t)\}.
\]
In particular, for fixed $p$, the proposed score depends only on $\eta=(\pi,\mu_0)$.
\end{lemma}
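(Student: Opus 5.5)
The plan is to verify the identity pointwise in $t$ by direct algebra on the AIPW display \eqref{eq:drscore_did} with denominator $p$. Fix $t\in\cT$ and any $(\pi,\mu_0,\mu_1)$ for which the terms are finite. Overlap, or the clipping in Assumption~\ref{ass:asym_cf}, gives $\pi(X)\in(0,1)$, so the odds weight $\pi(X)/\{1-\pi(X)\}$ is well defined. The first and third terms of \eqref{eq:drscore_did} share the factor $D/p$, so I will combine them first:
\[
\frac{D}{p}\{\Delta Y(t)-\mu_1(X)(t)\}+\frac{D}{p}\{\mu_1(X)(t)-\mu_0(X)(t)\}
=\frac{D}{p}\{\Delta Y(t)-\mu_0(X)(t)\}.
\]
The $\mu_1(X)(t)$ contributions cancel identically for every realization of $W$, whether $D=0$ or $D=1$. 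No conditional-mean property of $\mu_1$ is used, so the cancellation holds for an arbitrary candidate $\mu_1$, not only the true regression.

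The second term of \eqref{eq:drscore_did} involves only $\pi$ and $\mu_0$, and I will carry it through unchanged. Adding it to the combined expression gives the displayed formula, which is \eqref{eq:drscore_simplified} with denominator $p$. Since the identity holds for every $t$ and every realization, it also holds as an equality of $\cH$-valued random elements. In particular it holds in $L^2(\cT)$ whenever $\Delta Y$ and $\mu_0(X)$ lie in $\cH$, because each term is then a scalar multiple of an element of $\cH$.

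For the final claim, the right-hand side depends on $W$, $p$, $\pi$ and $\mu_0$ only, so for fixed $p$ the score is a function of $\eta=(\pi,\mu_0)$ alone. There is no real obstacle. The only care needed is to state that the cancellation is pointwise and does not rely on any expectation, so the conclusion also covers the cross-fitted versions with $\widehat\mu_1^{(-k)}$ in \eqref{eq:aipw_explicit}. This is what justifies the stated equivalence between \eqref{eq:tauhat} and \eqref{eq:aipw_explicit}, with $p$ replaced by $\widehat p$.
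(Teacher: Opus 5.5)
Your proposal is correct and follows essentially the same route as the paper's proof. Like the paper, you fix $t$, combine the two treated-group terms sharing the factor $D/p$ so that $\mu_1(X)(t)$ cancels identically, carry the control-weight term through unchanged, and read off that the score depends only on $(\pi,\mu_0)$. Your extra remarks on well-definedness of the odds weight, validity as an $\cH$-valued identity, and applicability to the cross-fitted display with $\widehat p$ are sound, though not needed for the lemma as stated.
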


\begin{proof}
Fix $t\in\cT$ and suppress the argument $t$ to simplify notation.
From \eqref{eq:drscore_did},
\[
\psi_p(W;\eta)
=
\frac{D}{p}\{\Delta Y-\mu_1(X)\}
-
\frac{(1-D)\pi(X)}{p\{1-\pi(X)\}}\{\Delta Y-\mu_0(X)\}
+
\frac{D}{p}\{\mu_1(X)-\mu_0(X)\}.
\]
The two treated-group terms have the same multiplier $D/p$, and their bracketed expressions add to
\[
\{\Delta Y-\mu_1(X)\}+\{\mu_1(X)-\mu_0(X)\}
=\Delta Y-\mu_0(X).
\]
Therefore
\[
\frac{D}{p}\{\Delta Y-\mu_1(X)\}
+
\frac{D}{p}\{\mu_1(X)-\mu_0(X)\}
=
\frac{D}{p}\{\Delta Y-\mu_0(X)\}.
\]
Substituting this identity back into the display for $\psi_p(W;\eta)$ gives the asserted simplified score. Since $\mu_1(X)$ appears only in the two treated-group terms that cancel algebraically, the resulting score is independent of $\mu_1$ and can be indexed by $\eta=(\pi,\mu_0)$.
\end{proof}

\begin{lemma}[Neyman orthogonality of the moment map]\label{lem:orth}
Define the moment map with the treated probability fixed at its population value, $\Psi(\eta)=P\psi_{p_0}(\cdot;\eta)\in\cH$, where $\eta=(\pi,\mu_0)$. Under overlap ($0<c\le \pi_0(X)\le 1-c$ a.s.) and square integrability, $\Psi(\eta)$ is Gateaux-differentiable at $\eta_0=(\pi_0,\mu_0)$ and its first-order derivative vanishes: for any directions $h_\pi$ and $h_0$ with $Ph_\pi^2<\infty$ and $P\|h_0\|_{\cH}^2<\infty$,
\[
\left.\frac{d}{d\epsilon}\Psi(\pi_0+\epsilon h_\pi,\mu_0+\epsilon h_0)\right|_{\epsilon=0}=0\quad \text{in }\cH.
\]
\end{lemma}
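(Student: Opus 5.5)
The plan is to compute the Gateaux derivative directly. Use the simplified form of Lemma~\ref{lem:simplify}, so that $\psi_{p_0}$ depends only on $(\pi,\mu_0)$. Then split the derivative into a propensity direction and a regression direction; by linearity of the derivative in the direction it suffices to treat $(h_\pi,0)$ and $(0,h_0)$ separately. Write $\omega(\pi)(X)=\pi(X)/\{1-\pi(X)\}$ for the odds weight. For $\epsilon$ in a small neighbourhood of zero, $\pi_\epsilon=\pi_0+\epsilon h_\pi$ should stay bounded away from $0$ and $1$. Overlap gives this when $h_\pi$ is bounded. For a general square-integrable $h_\pi$, one can instead differentiate pointwise and justify the interchange by dominated convergence.

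Expanding $\Psi$ along the path gives
\[
\Psi(\pi_\epsilon,\mu_0+\epsilon h_0)
=P\Big[\tfrac{D}{p_0}\{\Delta Y-\mu_0(X)-\epsilon h_0(X)\}-\tfrac{(1-D)\omega(\pi_\epsilon)(X)}{p_0}\{\Delta Y-\mu_0(X)-\epsilon h_0(X)\}\Big].
\]

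For the regression direction, the derivative is
\[
-\tfrac{1}{p_0}P\big[\{D-(1-D)\omega(\pi_0)(X)\}h_0(X)\big].
\]
Condition on $X$. Since $\E(D\mid X)=\pi_0(X)$ and $\E\{(1-D)\mid X\}\,\omega(\pi_0)(X)=\{1-\pi_0(X)\}\pi_0(X)/\{1-\pi_0(X)\}=\pi_0(X)$, the conditional factor equals $\pi_0(X)-\pi_0(X)=0$. This holds as an identity in $\cH$, because $h_0(X)$ is an $\cH$-valued function of $X$ and the Bochner expectation commutes with conditioning. Integrability follows from $P\|h_0\|_{\cH}^2<\infty$ and the boundedness of $\omega(\pi_0)$ under overlap.

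For the propensity direction, the derivative is
\[
-\tfrac{1}{p_0}P\big[(1-D)\,\omega'(\pi_0)(X)h_\pi(X)\{\Delta Y-\mu_0(X)\}\big],
\qquad \omega'(\pi)=\{1-\pi\}^{-2}.
\]
Condition on $(X,D=0)$: by the definition of $\mu_0$, $\E\{\Delta Y-\mu_0(X)\mid X,D=0\}=0$, so this term vanishes as well. Here integrability uses Cauchy--Schwarz, combining $Ph_\pi^2<\infty$, the bounded conditional residual second moment from Assumption~\ref{ass:asym_basic}, and $\omega'(\pi_0)\le \underline c^{-2}$.

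The main obstacle is rigour rather than algebra. One must justify exchanging $d/d\epsilon$ with the expectation for $\cH$-valued integrands, and make sure $\pi_\epsilon$ remains in $(0,1)$. I would handle both by scalarizing, as in the proof of Proposition~\ref{prop:eif}. For each $f\in\cH$, work with $\langle f,\Psi(\cdot)\rangle$. The difference quotient is then a polynomial in $\epsilon$ for the $h_0$ part. For the $h_\pi$ part it is a smooth function of $\epsilon$, dominated by $C\,|h_\pi(X)|\,\|\Delta Y-\mu_0(X)\|_{\cH}\|f\|_{\cH}$, which is integrable; to keep $\pi_\epsilon$ inside $(0,1)$, restrict to bounded $h_\pi$, or truncate and pass to the limit. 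Dominated convergence then lets us differentiate under $P$. Since the scalar derivative is zero for every $f$ and $\cH$ is separable, the derivative is zero in $\cH$.
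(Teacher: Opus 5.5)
Your proposal is correct and follows essentially the same route as the paper's proof. You split into the $\mu_0$- and $\pi$-directions, dispose of the first via the balancing identity $\E(D\mid X)=\pi_0(X)=\E(1-D\mid X)\,\omega(\pi_0)(X)$ and the second via $\E\{\Delta Y-\mu_0(X)\mid X,D=0\}=0$, exactly as the paper does; your extra care (bounded $h_\pi$ so that $\pi_\epsilon$ stays in $(0,1)$, dominated convergence to differentiate under $P$) covers technicalities the paper passes over, and your norm bound lets you apply dominated convergence to the Bochner integral directly, giving the derivative in $\cH$-norm rather than only weakly through scalarization.
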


\begin{proof}
By Lemma~\ref{lem:simplify}, it suffices to treat $\eta=(\pi,\mu_0)$ because $\psi_{p_0}$ does not depend on $\mu_1$. For notational clarity write the simplified score as
\[
\psi_{p_0}(W;\pi,\mu)(t)
=
\frac{D}{p_0}\{\Delta Y(t)-\mu(X)(t)\}
-
\frac{(1-D)\pi(X)}{p_0\{1-\pi(X)\}}\{\Delta Y(t)-\mu(X)(t)\},
\]
where the denominator is fixed at $p_0$ for the moment map. Consider a path $\mu_\epsilon=\mu_0+\epsilon h_0$. For every $t$,
\[
\left.\frac{d}{d\epsilon}P\psi_{p_0}(\pi_0,\mu_\epsilon)(t)\right|_{\epsilon=0}
=
P\Big[\Big\{-\frac{D}{p_0}+\frac{(1-D)\pi_0(X)}{p_0\{1-\pi_0(X)\}}\Big\}h_0(X)(t)\Big].
\]
The expectation is zero as an element of $\cH$. Indeed, for any square-integrable $\cH$-valued function $g(X)$,
\[
P\{Dg(X)\}
=
P\{\E(D\mid X)g(X)\}
=
P\{\pi_0(X)g(X)\},
\]
while
\[
P\!\left\{\frac{(1-D)\pi_0(X)}{1-\pi_0(X)}g(X)\right\}
=
P\!\left[
\E\left\{\left.\frac{(1-D)\pi_0(X)}{1-\pi_0(X)}g(X)\right|X\right\}
\right]
=
P\{\pi_0(X)g(X)\}.
\]
Taking $g=h_0$ proves that the directional derivative in the $\mu_0$ direction vanishes.

Now consider a path $\pi_\epsilon=\pi_0+\epsilon h_\pi$. The derivative of the treated part is zero because it does not contain $\pi$. For the control-weight term, using
\[
\left.\frac{d}{d\epsilon}\frac{\pi_0(X)+\epsilon h_\pi(X)}
{1-\pi_0(X)-\epsilon h_\pi(X)}\right|_{\epsilon=0}
=\frac{h_\pi(X)}{\{1-\pi_0(X)\}^2},
\]
we obtain
\[
\left.\frac{d}{d\epsilon}P\psi_{p_0}(\pi_\epsilon,\mu_0)\right|_{\epsilon=0}
=
-P\Big[(1-D)\frac{h_\pi(X)}{p_0\{1-\pi_0(X)\}^2}\{\Delta Y-\mu_0(X)\}\Big].
\]
Conditioning on $(X,D)$ gives
\[
\E\{\Delta Y-\mu_0(X)\mid X,D=0\}=0
\]
by the definition $\mu_0(X)=\E\{\Delta Y\mid X,D=0\}$. Thus the derivative in the $\pi$ direction is also zero in $\cH$. Since both directional derivatives vanish, the Gateaux derivative of $\Psi$ at $\eta_0$ is the zero map.
\end{proof}

\begin{lemma}[Ratio expansion for estimating $p_0$]\label{lem:p0}
Let $\widehat p = P_n D$ and assume $p_0\in(0,1)$ and $\psi_p(W;\eta_0)$ denote the score \eqref{eq:drscore_simplified} evaluated at the true nuisances but with $p$ in the denominator. Under overlap and $P\|\Delta Y\|_{\cH}^2<\infty$,
\[
P_n\psi_{\widehat p}(\cdot;\eta_0)-\tau_0
=
(P_n-P)\left\{\psi_{p_0}(\cdot;\eta_0)-\frac{D}{p_0}\tau_0\right\}
 + o_p(n^{-1/2})
\quad\text{in }\cH .
\]
\end{lemma}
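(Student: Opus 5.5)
The score is linear in $1/p$, and I will use that to separate the effect of estimating $p_0$ from the rest. Write
\[
\psi_p(W;\eta_0)=\frac{1}{p}A(W),\qquad
A(W)=D\{\Delta Y-\mu_0(X)\}-\frac{(1-D)\pi_0(X)}{1-\pi_0(X)}\{\Delta Y-\mu_0(X)\}\in\cH .
\]
Then $P_n\psi_{\widehat p}(\cdot;\eta_0)=P_nA/\widehat p$ and $P_n\psi_{p_0}(\cdot;\eta_0)=P_nA/p_0$.

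The first step is to identify $PA$. By the identification result \eqref{eq:ident_ipw} together with the doubly robust moment property at the truth, $P\psi_{p_0}(\cdot;\eta_0)=\tau_0$, so $PA=p_0\tau_0$. Overlap and $P\|\Delta Y\|_{\cH}^2<\infty$ give $P\|A\|_{\cH}^2<\infty$; note that $\mu_0$ is square integrable by Jensen, and the odds weight is bounded by $(1-\underline c)/\underline c$. The Hilbert-space CLT (or simply Chebyshev in $\cH$) then yields $\|(P_n-P)A\|_{\cH}=O_p(n^{-1/2})$. The scalar CLT gives $\widehat p-p_0=(P_n-P)D=O_p(n^{-1/2})$, and hence $\widehat p\ge p_0/2$ with probability tending to one.

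Next, expand the ratio algebraically:
\[
\frac{P_nA}{\widehat p}-\frac{PA}{p_0}
=\frac{(P_n-P)A}{p_0}-\frac{PA}{p_0^2}(\widehat p-p_0)+R_n,
\]
where
\[
R_n=P_nA\Big(\frac1{\widehat p}-\frac1{p_0}+\frac{\widehat p-p_0}{p_0^2}\Big)-\frac{(\widehat p-p_0)(P_n-P)A}{p_0^2}.
\]
The first bracket in $R_n$ equals $(\widehat p-p_0)^2/(\widehat p\,p_0^2)=O_p(n^{-1})$, and $\|P_nA\|_{\cH}=O_p(1)$, so that piece is $O_p(n^{-1})$ in $\cH$. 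The cross term is $O_p(n^{-1/2})\cdot O_p(n^{-1/2})$. Hence $\|R_n\|_{\cH}=o_p(n^{-1/2})$.

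Finally, substitute $PA/p_0=\tau_0$ and $PA/p_0^2=\tau_0/p_0$. The linear terms become
\[
(P_n-P)\psi_{p_0}(\cdot;\eta_0)-\frac{\tau_0}{p_0}(P_n-P)D
=(P_n-P)\Big\{\psi_{p_0}(\cdot;\eta_0)-\frac{D}{p_0}\tau_0\Big\},
\]
since $\tau_0$ is a deterministic element of $\cH$. This is exactly the claim.

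The proof contains no real obstacle. The only point requiring care is that the remainder bounds are taken in $\|\cdot\|_{\cH}$: each piece of $R_n$ is a random scalar times a random $\cH$-element. Both factors are controlled using the norm bound $\|cA\|_{\cH}=|c|\,\|A\|_{\cH}$ and the $O_p$ rates in $\cH$ from the square-integrability of $A$.
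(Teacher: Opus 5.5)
Your proposal is correct and follows essentially the same route as the paper: the paper writes $\psi_{\widehat p}=(p_0/\widehat p)\psi_{p_0}$ and Taylor-expands $p_0/\widehat p$ around $p_0$. It then absorbs the cross term $(\widehat p-p_0)\{P_n\psi_{p_0}(\cdot;\eta_0)-\tau_0\}$ and the quadratic term into $O_p(n^{-1})$, which is exactly your remainder $R_n$ written via the identity $1/\widehat p-1/p_0+(\widehat p-p_0)/p_0^2=(\widehat p-p_0)^2/(\widehat p\,p_0^2)$. Your version is slightly more careful in making explicit that $\widehat p$ stays bounded away from zero with probability tending to one and that $A$ is square integrable. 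For the latter, Jensen alone controls $\mu_0(X)$ only under the control covariate distribution, so overlap is also what bounds $P\{D\|\mu_0(X)\|_{\cH}^2\}$.
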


\begin{proof}
Because $p$ enters only as a common denominator, $\psi_{\widehat p}=(p_0/\widehat p)\psi_{p_0}$. The empirical treated fraction satisfies
\[
\widehat p-p_0=(P_n-P)D=O_p(n^{-1/2}),
\]
and therefore, by a Taylor expansion of $p\mapsto p_0/p$ around $p_0$,
\[
\frac{p_0}{\widehat p}
=1-\frac{\widehat p-p_0}{p_0}+O_p\{(\widehat p-p_0)^2\}
=1-\frac{\widehat p-p_0}{p_0}+O_p(n^{-1}).
\]
Also $P\psi_{p_0}(\cdot;\eta_0)=\tau_0$ and, by the Hilbert-space law of large numbers and CLT under the stated moment condition, $P_n\psi_{p_0}(\cdot;\eta_0)=\tau_0+O_p(n^{-1/2})$ in $\cH$. Multiplying the two expansions gives
\[
P_n\psi_{\widehat p}(\cdot;\eta_0)
=
P_n\psi_{p_0}(\cdot;\eta_0)
-
\frac{\widehat p-p_0}{p_0}\tau_0
 + o_p(n^{-1/2}).
\]
The product of $\widehat p-p_0$ and $P_n\psi_{p_0}(\cdot;\eta_0)-\tau_0$ is $O_p(n^{-1})$, and the quadratic remainder in $p$ is also $O_p(n^{-1})$, so both are absorbed into $o_p(n^{-1/2})$. Subtracting $\tau_0=P\psi_{p_0}(\cdot;\eta_0)$ and using $\widehat p-p_0=(P_n-P)D$ gives the stated expansion.
\end{proof}

\subsection*{Proof of Theorem 1}\label{app:thm1}
\begin{proof}
Let
\[
\widetilde{\tau} = \sum_{k=1}^K \frac{n_k}{n} P_{n,k}\psi_{p_0}(\cdot;\widehat\eta^{(-k)}),
\qquad
\tau_0 = P\psi_{p_0}(\cdot;\eta_0),
\]
where $\widetilde{\tau}$ is the infeasible version of the estimator using the population denominator $p_0$. Adding and subtracting $\sum_k (n_k/n)P_{n,k}\psi_{p_0}(\cdot;\eta_0)=P_n\psi_{p_0}(\cdot;\eta_0)$ gives
\begin{equation}\label{eq:decomp1}
\widetilde{\tau}-\tau_0
=
(P_n-P)\psi_{p_0}(\cdot;\eta_0) + R_{n,1}+R_{n,2},
\end{equation}
where
\[
R_{n,1}=\sum_{k=1}^K\frac{n_k}{n}\,(P_{n,k}-P)\Delta_k,
\qquad
R_{n,2}=\sum_{k=1}^K\frac{n_k}{n}\,P\Delta_k,
\]
and $\Delta_k=\psi_{p_0}(\cdot;\widehat\eta^{(-k)})-\psi_{p_0}(\cdot;\eta_0)$.

The empirical fluctuation term $R_{n,1}$ is controlled conditionally on the training folds.
By cross-fitting, $\widehat\eta^{(-k)}$ is measurable with respect to the training sample and is independent of the evaluation fold $\{W_i:i\in\cI_k\}$.
Conditional on the training sample,
\[
\E\big(\| (P_{n,k}-P)\Delta_k\|_{\cH}^2 \,\big|\, \widehat\eta^{(-k)}\big)
\le \frac{1}{n_k}\, \E\big(\|\Delta_k(W)\|_{\cH}^2 \,\big|\, \widehat\eta^{(-k)}\big).
\]
The score is Lipschitz in $(\pi,\mu_0)$ on any clipped overlap neighborhood. More explicitly, Lemma~\ref{lem:simplify} gives
\begin{align*}
\Delta_k(W)
&=
\frac{D}{p_0}\{\mu_0(X)-\widehat\mu_0^{(-k)}(X)\}  \\
&\quad
-
\frac{1-D}{p_0}
\left[
\frac{\widehat\pi^{(-k)}(X)}{1-\widehat\pi^{(-k)}(X)}
\{\Delta Y-\widehat\mu_0^{(-k)}(X)\}
-
\frac{\pi_0(X)}{1-\pi_0(X)}
\{\Delta Y-\mu_0(X)\}
\right],
\end{align*}
where all function-valued quantities are evaluated as elements of $\cH$.
To make the Lipschitz control explicit, write $a(u)=u/(1-u)$ and add and subtract
$a\{\widehat\pi^{(-k)}(X)\}\{\Delta Y-\mu_0(X)\}$ inside the control bracket.
Then
\[
\begin{aligned}
\|\Delta_k(W)\|_{\cH}
&\le C\|\widehat\mu_0^{(-k)}(X)-\mu_0(X)\|_{\cH} \\
&\quad
 + C|\widehat\pi^{(-k)}(X)-\pi_0(X)|\,\|\Delta Y-\mu_0(X)\|_{\cH}  \\
&\quad
 + C|\widehat\pi^{(-k)}(X)-\pi_0(X)|\,
 \|\widehat\mu_0^{(-k)}(X)-\mu_0(X)\|_{\cH},
\end{aligned}
\]
where the constant $C$ depends only on $p_0$ and the overlap/clipping bound.
The residual moment condition in Assumption~\ref{ass:asym_basic}, Cauchy--Schwarz, and the $L^2$ stability of the nuisance regressions imply
\[
\|\Delta_k\|_{L^2(P;\cH)}
\lesssim
\|\widehat\mu_0^{(-k)}-\mu_0\|_2
+
\|\widehat\pi^{(-k)}-\pi_0\|_2
\{1+\|\widehat\mu_0^{(-k)}-\mu_0\|_2\}
=o_p(1).
\]
Consequently $\|(P_{n,k}-P)\Delta_k\|_{\cH}=o_p(n_k^{-1/2})$ uniformly over a fixed number of folds, and $\|R_{n,1}\|_{\cH}=o_p(n^{-1/2})$.

The bias term $R_{n,2}$ is smaller because the first-order derivative of the moment map vanishes. Using the simplified score and conditioning on $X$, one obtains the exact identity
\begin{align*}
P\Delta_k
&=
\frac{1}{p_0}P\left[
\{1-D\}
\left\{
\frac{\pi_0(X)}{1-\pi_0(X)}
-
\frac{\widehat\pi^{(-k)}(X)}{1-\widehat\pi^{(-k)}(X)}
\right\}
\{\mu_0(X)-\widehat\mu_0^{(-k)}(X)\}
\right].
\end{align*}
Indeed, all terms that are linear only in $\widehat\mu_0^{(-k)}-\mu_0$ cancel by the balancing identity in Lemma~\ref{lem:orth}, and all terms that are linear only in $\widehat\pi^{(-k)}-\pi_0$ multiply the control residual $\Delta Y-\mu_0(X)$, whose conditional mean is zero given $(X,D=0)$. Since the map $u\mapsto u/(1-u)$ is Lipschitz on the clipped interval $[\underline c,1-\underline c]$,
\[
\|P\Delta_k\|_{\cH}
\lesssim
\|\widehat\pi^{(-k)}-\pi_0\|_{2}\,\|\widehat\mu_0^{(-k)}-\mu_0\|_{2},
\]
and Assumption~\ref{ass:asym_rates} gives $\|R_{n,2}\|_{\cH}=o_p(n^{-1/2})$.

It remains to account for replacing $p_0$ by $\widehat p=P_nD$. Because $p$ enters the score only as a common denominator,
\[
\widehat\tau
=\frac{p_0}{\widehat p}\widetilde\tau
=\widetilde\tau-\frac{\widehat p-p_0}{p_0}\tau_0+o_p(n^{-1/2}),
\]
where $\widetilde\tau=\tau_0+O_p(n^{-1/2})$ follows from the preceding bounds and the Hilbert-space variance bound for $(P_n-P)\psi_{p_0}(\cdot;\eta_0)$. Thus the leading empirical term is
\[
(P_n-P)\psi_{p_0}(\cdot;\eta_0)-\frac{\tau_0}{p_0}(P_n-P)D
=(P_n-P)\phi(\cdot;\eta_0).
\]
When curves are reconstructed, $\Delta Y_i$ is replaced by $\widehat{\Delta Y}_i$. By Assumption~\ref{ass:asym_recon}, $\max_i\|\widehat{\Delta Y}_i-\Delta Y_i\|_{\cH}=o_p(n^{-1/2})$. Since the score is Lipschitz in $\Delta Y$ with bounded weights under overlap, we have
\[
\Big\|\frac{1}{n}\sum_{i=1}^n\big\{\psi(\widehat{\Delta Y}_i,D_i,X_i;\widehat\eta^{(-k)})-\psi(\Delta Y_i,D_i,X_i;\widehat\eta^{(-k)})\big\}\Big\|_{\cH}
\lesssim \max_i\|\widehat{\Delta Y}_i-\Delta Y_i\|_{\cH}
=o_p(n^{-1/2}),
\]
so reconstruction affects only higher-order terms in the $\cH$ expansion.

Combining the preceding bounds with \eqref{eq:decomp1}, multiplying by $\sqrt{n}$, and using $(P_n-P)\phi(\cdot;\eta_0)=n^{-1}\sum_{i=1}^n\phi(W_i;\eta_0)$ proves the stated expansion with $\|r_n\|_{\cH}=o_p(1)$.
\end{proof}

\subsection*{Proof of Theorem 2}\label{app:thm2}
\begin{proof}
By Theorem 1,
\[
\sqrt{n}(\widehat{\tau}-\tau_0)=\frac{1}{\sqrt{n}}\sum_{i=1}^n \phi(W_i;\eta_0)+r_n,
\qquad \|r_n\|_{\cH}=o_p(1).
\]
Assumption~\ref{ass:asym_basic} implies $\E\|\phi(W;\eta_0)\|_{\cH}^2<\infty$ and $\E\{\phi(W;\eta_0)\}=0$ in $\cH$. Since $\cH=L^2(\cT)$ is separable for the compact interval $\cT$ equipped with Lebesgue measure, the i.i.d.\ CLT for Hilbert-valued random elements yields
\[
\frac{1}{\sqrt{n}}\sum_{i=1}^n \phi(W_i;\eta_0)\Rightarrow \mathbb{G}\quad \text{in }\cH,
\]
where $\mathbb{G}$ is Gaussian with covariance operator $\Sigma$ given by $\langle f,\Sigma g\rangle=\Cov(\langle f,\phi\rangle,\langle g,\phi\rangle)$ for $f,g\in\cH$. Slutsky's theorem then implies the claim.
\end{proof}

\subsection*{Proof of Theorem 3}\label{app:thm3}

\begin{proof}
Assumption~\ref{ass:asym_paths} states that the influence-function class $\mathcal F_\phi=\{\phi_t:t\in\cT\}$ is $P$-Donsker with a square-integrable envelope. Therefore the empirical process indexed by $t$ satisfies
\[
n^{-1/2}\sum_{i=1}^n\phi(W_i;\eta_0)(\cdot)\Rightarrow \mathbb{Z}
\quad\text{in }\ell^\infty(\cT),
\]
where $\mathbb{Z}$ is a tight, mean-zero Gaussian process with covariance kernel $C$ and uniformly continuous sample paths.

It remains to verify that replacing the unknown nuisances and reconstructed curves in the estimator changes the process only by $o_p(1)$ after multiplication by $\sqrt n$ in sup norm. The same decomposition used in the proof of Theorem~\ref{thm:asym_linear} applies with $\|\cdot\|_{\cH}$ replaced by $\|\cdot\|_\infty$. For the empirical fluctuation term, cross-fitting makes $\widehat\eta^{(-k)}$ conditionally independent of the evaluation fold. Conditional on the training sample,
\[
\E\!\left[
\left.
\|(P_{n,k}-P)\{\psi_{p_0}(\cdot;\widehat\eta^{(-k)})-\psi_{p_0}(\cdot;\eta_0)\}\|_\infty^2
\right|\widehat\eta^{(-k)}
\right]
\lesssim
\frac{1}{n_k}
\|\psi_{p_0}(\cdot;\widehat\eta^{(-k)})-\psi_{p_0}(\cdot;\eta_0)\|_{2,\infty}^2.
\]
The sup-norm nuisance consistency in Assumption~\ref{ass:asym_paths} and the same Lipschitz calculation as in the proof of Theorem~\ref{thm:asym_linear} make the right-hand side $o_p(n_k^{-1})$, so the empirical fluctuation term is $o_p(n^{-1/2})$ uniformly over the fixed number of folds.

For the bias term, the same orthogonality calculation as Lemma~\ref{lem:orth}, now measured in sup norm, yields the second-order bound
\[
\|P\{\psi_{p_0}(\cdot;\widehat\eta^{(-k)})-\psi_{p_0}(\cdot;\eta_0)\}\|_\infty
\lesssim
\|\widehat\pi^{(-k)}-\pi_0\|_2
\|\widehat\mu_0^{(-k)}-\mu_0\|_{2,\infty}
=o_p(n^{-1/2}).
\]
The ratio expansion for $\widehat p$ is unchanged because $p$ enters the score as a common denominator, and the stronger reconstruction bound in Assumption~\ref{ass:asym_paths} controls replacement of $\Delta Y$ by $\widehat{\Delta Y}$ uniformly over $t$. Consequently,
\[
\sqrt{n}(\widehat{\tau}-\tau_0)
=
\frac{1}{\sqrt{n}}\sum_{i=1}^n \phi(W_i;\eta_0) + r_n^\infty
\quad \text{in }\ell^\infty(\cT),
\qquad \|r_n^\infty\|_\infty=o_p(1).
\]
Combining this expansion with the Donsker convergence of the leading empirical process and applying Slutsky's theorem proves the claim.
\end{proof}

\subsection*{Proof of Theorem 4}\label{app:thm4}
\begin{proof}
Let
\[
\widehat T_n=\sup_{t\in\cT}\left|\frac{\sqrt{n}\{\widehat{\tau}(t)-\tau_0(t)\}}{\widehat\sigma(t)}\right|,
\qquad
T_n=\sup_{t\in\cT}\left|\frac{\sqrt{n}\{\widehat{\tau}(t)-\tau_0(t)\}}{\sigma(t)}\right|,
\qquad
T=\sup_{t\in\cT}\left|\frac{\mathbb{Z}(t)}{\sigma(t)}\right|.
\]
By Theorem~\ref{thm:asym_sup}, $\inf_t\sigma(t)>0$, and the continuous mapping theorem, $T_n\Rightarrow T$. Moreover, uniform consistency of $\widehat\sigma$ implies
\[
\sup_{t\in\cT}\left|\frac{\sigma(t)}{\widehat\sigma(t)}-1\right|=o_p(1),
\]
and Theorem~\ref{thm:asym_sup} implies $T_n=O_p(1)$. Thus $|\widehat T_n-T_n|=o_p(1)$ and $\widehat T_n\Rightarrow T$.

Let $\cT_M=\{t_1,\ldots,t_M\}$ be the evaluation grid and define
\[
\widehat T_{n,M}=\max_{t\in\cT_M}\left|\frac{\sqrt{n}\{\widehat{\tau}(t)-\tau_0(t)\}}{\widehat\sigma(t)}\right|,
\qquad
T_M=\max_{t\in\cT_M}\left|\frac{\mathbb{Z}(t)}{\sigma(t)}\right|.
\]
The grid approximation assumption gives $|\widehat T_{n,M}-\widehat T_n|=o_p(1)$. Because the limiting studentized Gaussian process has uniformly continuous paths, $T_M\to T$ as the mesh size tends to zero.

Let $\widehat c_{1-\alpha}$ denote the conditional $(1-\alpha)$ quantile of the bootstrap maximum
\[
T_M^*=\max_{t\in\cT_M}|\mathbb{G}^*(t)/\widehat\sigma(t)|,
\]
constructed from \eqref{eq:multiplier_process}. By the assumed bootstrap consistency, the conditional distribution of $T_M^*$ given the data converges (in probability) to that of $T_M$. The continuity of the distribution function of $T$ at its $(1-\alpha)$ quantile, together with $T_M\to T$, implies quantile consistency:
\[
\widehat c_{1-\alpha}\to_p c_{1-\alpha},
\]
where $c_{1-\alpha}$ is the $(1-\alpha)$ quantile of $T$.

Using the convergence of $\widehat T_n$ and the quantile consistency,
\[
P\big(\widehat T_n\le \widehat c_{1-\alpha}\big)
=
P\big(T\le c_{1-\alpha}\big) + o(1)
=1-\alpha+o(1).
\]
Finally, note that $\widehat T_n\le \widehat c_{1-\alpha}$ is equivalent to $\tau_0(t)\in [\widehat{\tau}(t)\pm \widehat c_{1-\alpha}\widehat\sigma(t)/\sqrt{n}]$ for all $t\in\cT$. The grid approximation assumption justifies using the grid-based bootstrap quantile for this continuous-domain event. This proves the stated uniform coverage.
\end{proof}

\end{document}